\setlist[enumerate,1]{itemsep=0pt, parsep=0pt, listparindent=\parindent}
\setlist[enumerate,2]{ref=\theenumii, itemsep=0pt, parsep=0pt, listparindent=\parindent}
\setlist[itemize,1]{itemsep=0pt, parsep=0pt, listparindent=\parindent}
\setlist[itemize,2]{itemsep=0pt, parsep=0pt, listparindent=\parindent}
\numberwithin{equation}{section}
\title{Relative Constructibility via Generalised Sequential Algorithms}
\author{Desmond Lau}
\begin{document}

\maketitle

\begin{abstract}
We modify Gurevich's definition of sequential algorithms, so that it becomes amenable to computation with arbitrarily large sets on a sufficiently intuitive level. As a result, two classes of abstract algorithms are obtained, namely generalised sequential algorithms (GSeqAs) and generalised sequential algorithms with parameters (GSeqAPs). We derive from each class a relative computability relation on sets which is analogous to the Turing reducibility relation on reals. We then prove that the relative computability relation derived from GSeqAPs is equivalent to the relative constructibility relation in set theory.
\end{abstract}

\newtheorem{thm}{Theorem}[section]
\newtheorem{innercustomlem}{Lemma}
\newenvironment{customlem}[1]
  {\renewcommand\theinnercustomlem{#1}\innercustomlem}
  {\endinnercustomlem}
\newtheorem{innercustomdef}{Definition}
\newenvironment{customdef}[1]
  {\renewcommand\theinnercustomdef{#1}\innercustomdef}
  {\endinnercustomdef}
\newtheorem{lem}[thm]{Lemma}
\newtheorem{prop}[thm]{Proposition}
\newtheorem{cor}[thm]{Corollary}
\newtheorem{conj}[thm]{Conjecture}
\newtheorem{ques}[thm]{Question}
\newtheorem*{claim}{Claim}
\theoremstyle{definition}
\newtheorem{defi}[thm]{Definition}
\theoremstyle{remark}
\newtheorem*{rem*}{Remark}
\newtheorem{rem}[thm]{Remark}
\newtheorem{ex}[thm]{Example}
\newtheorem{ob}[thm]{Observation}
\newtheorem{fact}[thm]{Fact}
\newtheorem{con}[thm]{Convention}
\newtheorem{diff}[thm]{Difficulty}

\newcommand{\bd}[1]{\mathbf{#1}}  
\newcommand{\RR}{\mathbb{R}}      
\newcommand{\ZZ}{\mathbb{Z}}      
\newcommand{\col}[1]{\left[\begin{matrix} #1 \end{matrix} \right]}
\newcommand{\comb}[2]{\binom{#1^2 + #2^2}{#1+#2}}
\newcommand{\eq}{=}

\newcommand{\blankpage}{
\newpage
\thispagestyle{empty}
\mbox{}
\newpage
}

{\let\clearpage\relax \tableofcontents} 
\thispagestyle{empty}

\section{Introduction}\label{sect1}

Models of computation studied by computer scientists and mathematicians are aplenty. They include common programming languages, along with the semantics (see e.g.\cite{scott} and \cite{meyer}) --- in various paradigms --- associated with each language. They also include stalwarts from almost a century ago, three of which are
\begin{enumerate}[leftmargin=40pt, label=(CT\arabic*)]
    \item\label{ct1} G\"{o}del's operation schema,
    \item Church's $\lambda$-calculus, and of course
    \item\label{ct3} Turing machines.
\end{enumerate}
That all these models of computation possess the same computational power is great evidence for the Church-Turing thesis. However, modern intuition of computation has evolved to become rather high-level, and the formal models \ref{ct1} to \ref{ct3} which motivated the Church-Turing thesis have been relegated to play the roles of convenient mathematical foundations, instead of anything resembling practical mental models of computation.

Perhaps the most celebrated attempt at amalgamating and formalising useful mental models of computations was born out of this realisation. In a series of papers (starting with \cite{gurevich}), Gurevich gave (closely related) semantic axiomatisations of abstract algorithms, and showed how these algorithms are captured by sufficiently machine-like models called \textit{abstract state machines}. Abstract state machines were defined by Gurevich himself way back in the 1980s, although more organised literature thereof only became available in the 1990s. A full definition of these machines can be found in e.g. \cite{evolvingalg}.

Going by the semantic abstraction of Gurevich, algorithms far more closely resemble ``high-level'' meta-theoretic representations of Plotkin-style semantics (found e.g. in \cite{plotkin}), than they do anything among \ref{ct1} to \ref{ct3}. That their syntactic counterparts --- the programming languages associated with abstract state machines --- have been adapted for use in fields ranging from software development to systems engineering via (what is now known as) the \textit{ASM method} is testament to the intuitiveness of Gurevich's design. Historical details regarding the ASM method can be found in \cite{borger} and Chapter 9 of \cite{borgerbook}.

One obvious commonality of all models of computation hitherto mentioned is that they admit only finitary computation, which is understandable and justified, considering infinitary computations are unrealisable in the real world. But are there inherent technical difficulties in formalising infinitary computation? The answer is ``No, not really,'' as evidenced by a multitude of well-received attempts (as much as a subject this niched can be well-received) at generalising computation to the infinite. Such generalisations are appropriately termed \textit{models of generalised computation}.

Since many decades ago, mathematical logicians have tried their hand at adapting models among \ref{ct1} to \ref{ct3} to allow computations with and on arbitrary sets.

From the perspective of recursion theory, where the structure of Turing degrees under Turing reducibility and the jump operator is the main object study, leading candidates include $\alpha$-recursion and $E$-recursion. Both paradigms depend on models of computation that generalises \ref{ct1} with schemata on arbitrary sets (see e.g. \cite{takeuti} and \cite{normann}). These schemata give rise to analogues of important results in classical recursion theory, on sets that can be much larger than reals. 

From the perspective of ``computer as a machine'', we have Koepke's ordinal Turing machines (see \cite{koepke1}). Ordinal Turing machines are a generalisation of \ref{ct3}: they work like standard Turing machines, except their tapes are now $ORD$-length and they can each refer to finitely many ordinals as parameters. 

In any case, it appears that the models of generalised computation introduced so far rely very much on the Church-Turing thesis for justification of properness. Such justification usually goes along the lines of,
\begin{quote}
    \emph{since $X$ nicely generalises a model among \ref{ct1} to \ref{ct3} and the subject of generalisation is a model of computation by the Church-Turing thesis, $X$ should also be a model of (generalised) computation.}
\end{quote}
This kind of arguments often lead to unintuitive definitions insofar as computation is concerned, sometimes to the extent of an apology being issued (e.g. \cite{sackserec}). In fact, it can be argued that the said models were formulated with specific properties of relative computability --- instead of intuitiveness --- in mind, to ensure their degrees of computability are equipped with non-trivial structure. 

In this paper, we adapt Gurevich's \textit{sequential algorithms} for generalised computation. In doing so, we bypass the appeal to the Church-Turing thesis, and provide a ``top-down'' counterpart to the typical ``bottom-up'' approach to modelling generalised computation. We make gradual well-meaning modifications to Gurevich's definition, so that even as our algorithms become more encompassing, the growth in their computational power does not contradict our intuition of computation born out of classical computing. In Subsection \ref{ss210}, we see that plainly omitting the \textit{bounded exploration postulate} from the initial definition of sequential algorithms is wholly unsatisfactory from a set-theoretic standpoint. The remainder of Section \ref{sect2} is thus geared towards finding the right properties to endow a generalised sequential algorithm with, properties that can plug the gap left behind by the aforementioned omission. Notable milestones in this search/journey include:
\begin{itemize}
    \item a very much local, yet infinitary, version of the bounded exploration postulate formulated in Subsection \ref{ss240},
    \item a streamlined method of writing the input and reading the output of a terminating run devised in Subsection \ref{ss250},
    \item the definition of transfinite runs in Subsection \ref{ss270}, accompanied by an argument about the nigh-universality of our method of taking limits, and
    \item the addition of ordinal parameters in Subsection \ref{ss320} for a better behaved relative computability relation.
\end{itemize}

The result of our elaborate adaptation process is a new model of generalised computation called \textit{generalised sequential algorithms with parameters} (abbreviated as \textit{GSeqAPs}). The first draw of GSeqAPs is that their more structured templates facilitate the design of generalised algorithms, without compromising too much of the intuitiveness in Gurevich's conception. Coupled with an input/output paradigm, GSeqAPs naturally induce a relative computability relation, $\leq^P$, on sets of ordinals. It turns out that $\leq^P$ possesses a particular suite of desirable traits not found in other relative computability relations associated with generalised computation, especially under computational constraints (Table \ref{table3}, as well as the following subsection). This is the second draw of GSeqAPs. 

Towards a generalised analogue of the Church-Turing thesis, we also show that $\leq^P$ coincides with the \textit{relative constructibility} relation (Theorem \ref{thm275}). 

\subsection{Generalisations of (Relative) Computability}\label{subsec22}

A notion of computability is meant to chart the limits of what can and cannot be computed. In the classical case, we hold the Church-Turing thesis responsible for this purpose, a responsibility rooted in its convincing portrayal of computer-hood. Indeed, the hallowed status afforded to the Church-Turing thesis stems from the unlikely convergence in power of what seems to be independently-devised models of computation.

Certain models of computation (perhaps with simple modifications) naturally induce a notion of relative computability on some distinguished set. Such a relative computability relation typically extends the notion of computability, in that being computable is equivalent to being minimally relative computable. A most notable example is the Turing reducibility relation induced by oracle machines, which are essentially Turing machines equipped with quick access to a real number serving as an ``oracle''. 

It can be argued that a robust model of generalised computation should induce a (binary) relative computability relation on arbitrary sets, and that the relative computability relation thus induced should satisfy the following three desiderata.
\begin{enumerate}[leftmargin=40pt, label=(D\arabic*)]
    \item\label{d1} \textit{Universality}: the relation should be reflexive on all sets of ordinals.
    \item\label{d2} \textit{Transitivity}: the relation should be transitive.
    \item\label{d3} \textit{Coherence}: if the relative computability of two sets is witnessed under reasonable computational constraints, then the same should be witnessed after lifting these constraints.
\end{enumerate}
Unsurprisingly, each of the well-known models of generalised computation introduced in the previous subsection induces relative computability relation on sets. We shall examine whether these relations satisfy properties \ref{d1} to \ref{d3}. The next three paragraphs contain an elaboration of our examination, and Tables \ref{table1} and \ref{table2}, a summary of its outcomes.

The relative computability relation associated with $\alpha$-recursion, written $\leq_{\alpha}$, is only defined on $\mathcal{P}(\alpha)$, the powerset of $\alpha$. As such, it is not universal. Naively, we can define a universal relation $\leq_A$ by referring to $\leq_{\alpha}$ as $\alpha$ ranges over all \textit{admissible} ordinals: $$A \leq_A B \text{ iff there is } \alpha \text{ for which } A \leq_{\alpha} B \text{.}$$ However, if we interpret $\alpha$ to be a measure of the amount of computational resources available, as is natural, Theodore Slaman's answer in \cite{slaman} tells us $\leq_A$ is doomed to be incoherent. It seems, then, that there is no easy fix to the non-universality of the $\leq_{\alpha}$'s.

On the other hand, the relative computability relation of $E$-recursion, written $\leq_E$, is both universal and transitive out of the box. Further, it is common in $E$-recursion to study ``runs'' restricted to arbitrary $E$\textit{-recursively closed} initial segment of $L$. Restrictions of this type are computational constraints imposed on the evaluation of $\leq_E$. Unfortunately, with respect to these constraints, $\leq_E$ is not known to behave coherently (see e.g. ``\textbf{The Divergence-Admissibility Split}'' on p.11 of \cite{sackserec}).

Taking a leaf from how classical Turing machines were modified to become classical oracle machines, we can straightforwardly augment ordinal Turing machines with set oracles. This allows one to define a relative computability relation $\preceq_A$ on arbitrary sets in the same way real oracles are used to define the Turing reducibility relation on the reals. In fact, with reference to $\preceq_{\alpha}$ in Definition 19 of \cite{koepke2}, $$A \preceq_A B \text{ iff there is } \alpha \text{ for which } A \preceq_{\alpha} B \text{.}$$ Such a definition is easily seen to be universal and coherent. It is stated in Theorem 20 of \cite{koepke2} that restricting this relative computability relation to subsets of admissible ordinals $\alpha$ may result in a lack of transitivity. We can however verify that $\preceq_{\alpha}$ is transitive whenever $\alpha$ is regular, and in turn, use it to conclude the transitivity of $\preceq_A$.

\begin{table}[!ht]
    \caption[Relative computable relations across three major models of generalised computation]{The relative computable relations native to three major models of generalised computation.}
    \label{table1}
    \centering
    \begin{tabular}{|l||*{3}{c|}}\hline
        \backslashbox[90pt]{\footnotesize Property}{\footnotesize Relation}
        &\makebox[4em]{$\leq_{\alpha}$} &\makebox[2em]{$\leq_E$} &\makebox[4em]{$\preceq_{\alpha}$} \\\hline\hline
        Universal? & \ding{55} & \ding{51} & \ding{55} \\\hline
        Transitive? & \ding{51} & \ding{51} & \ding{55} \\\hline
        Coherent? & \ding{55} & \ding{55} & \ding{51} \\\hline
        Appears in$\dots$ & $\alpha$-recursion & $E$-recursion & $\alpha$-computability \\\hline
    \end{tabular}
\end{table}

\begin{table}[!ht]
    \caption[Comparison of two universal relative computability relations derived from non-universal relations]{Comparison of two universal relative computability relations derived from separate classes of non-universal relations.}
    \label{table2}
    \centering
    \begin{tabular}{|l||*{2}{c|}}\hline
        \backslashbox[90pt]{\footnotesize Property}{\footnotesize Relation}
        &\makebox[4em]{$\leq_A$} &\makebox[4em]{$\preceq_A$} \\\hline\hline
        Universal? & \ding{51} & \ding{51} \\\hline
        Transitive? & \ding{51} & \ding{51} \\\hline
        Coherent? & \ding{55} & \ding{51} \\\hline
        Derived from$\dots$ & the $\leq_{\alpha}$'s & the $\preceq_{\alpha}$'s \\\hline
    \end{tabular}
\end{table}

Like $\preceq_A$, the relative computability relation associated with GSeqAPs, $\leq^P$, also fulfils desiderata \ref{d1} to \ref{d3}. What makes $\leq^P$ stand out is that unlike $\preceq_A$, it tends very much to remain transitive under computational constraints. More precisely, for all admissible ordinals $\alpha$, both
\begin{itemize}
    \item $\leq^P_{\alpha}$, the restriction of $\leq^P$ to computations requiring $\alpha$-sized space, and
    \item $\leq^{P, s}_{\alpha}$, the restriction of $\leq^P$ to computations requiring $\alpha$-sized space and terminating in $< \alpha$-many time steps,
\end{itemize}
are at once, transitive and coherent (Propositions \ref{prop251}, \ref{prop252}, \ref{prop270n} and \ref{prop271n}). It is interesting to note that our formulation of GSeqAPs, in accordance to Gurevich's principles of inclusiveness and intuitiveness espoused in \cite{gurevich}, inadvertently gives rise to a series of novel, non-trivial and well-behaved relative computability relations. 

\section{Adapting Sequential Algorithms for the Infinite}\label{sect2}

In this section, we progressively modify the definition of a sequential algorithm, to arrive at a more concrete notion of algorithms which also better captures our intuition of generalised computation. 

\subsection{Formal Sequential Algorithms}\label{ss210}

Gurevich introduced the \textit{ASM thesis} in \cite{gurevich} as a justification of correctness for his definition of (a most general conceptualisation of) a sequential algorithm. He followed it up with a number of collaborations and papers (e.g. \cite{gurevichblass}) on various other classes of algorithms. If we follow the convention of using ``algorithm'' and ``computer program'' interchangeably, then sequential algorithms are natural candidates for capturing a general notion of computability. However, it seems that Gurevich himself did not intend for sequential algorithms to capture computation on an infinitary scale. We shall start this section with a brief examination on how his definition might be ill-equipped at dealing with generalised computation. 

Briefly, a \emph{sequential algorithm} --- henceforth \textit{SeqA}, to distinguish the formal concept from the colloquial --- $\mathfrak{M}$ comprises a class of states $S(\mathfrak{M})$, a class of initial states $I(\mathfrak{M})$, a class transition function $\tau_{\mathfrak{M}} : S(\mathfrak{M}) \longrightarrow S(\mathfrak{M})$, and a finite set $G$ of ground terms over $\sigma(\mathfrak{M})$ such that, letting $G^s$ denote the function $\dot{x} \mapsto \dot{x}^s$ with domain $G$,
\begin{enumerate}[label=(A\arabic*)]
    \item\label{a1} every $s \in S(\mathfrak{M})$ is a first-order structure with the same finite signature, $\sigma(\mathfrak{M})$,
    \item\label{a2} $I(\mathfrak{M}) \subset S(\mathfrak{M})$,
    \item\label{a3'} for all $s \in S(\mathfrak{M})$, $\tau_{\mathfrak{M}}(s)$ has the same base set $b_{\mathfrak{M}}(s)$ as $s$
    \item\label{a3} for all $s \in S(\mathfrak{M})$, $\ulcorner X \urcorner \in \sigma(\mathfrak{M})$, $n < \omega$ and $\Vec{a} \in b_{\mathfrak{M}}(s)^n$, if $n$ is the arity of $\ulcorner X \urcorner$, then
    \begin{equation*}
        \ulcorner X \urcorner^s(\Vec{a}) \neq \ulcorner X \urcorner^{\tau_{\mathfrak{M}}(s)}(\Vec{a}) \implies \Vec{a}^\frown(\ulcorner X \urcorner^{\tau_{\mathfrak{M}}(s)}(\Vec{a})) \in ran(G^s)^{n+1}
    \end{equation*}
    whenever $\ulcorner X \urcorner$ is a function or constant symbol and
    \begin{equation*}
        \neg (\ulcorner X \urcorner^s(\Vec{a}) \iff \ulcorner X \urcorner^{\tau_{\mathfrak{M}}(s)}(\Vec{a})) \implies \Vec{a} \in ran(G^s)^n
    \end{equation*}
    whenever $\ulcorner X \urcorner$ is a relation symbol, and
    \item\label{a4} for all $s, s' \in S(\mathfrak{M})$, $\ulcorner X \urcorner \in \sigma(\mathfrak{M})$ and $n < \omega$, if $n$ is the arity of $\ulcorner X \urcorner$ and $G^s = G^{s'}$, then whenever $\Vec{a} \in ran(G^s)^n$,
    \begin{align*}
        \ulcorner X \urcorner^{s} \text{ agrees with } \ulcorner X \urcorner^{\tau_{\mathfrak{M}}(s)} \text{ at } \Vec{a} \iff \ & \ulcorner X \urcorner^{s'} \text{ agrees with } \ulcorner X \urcorner^{\tau_{\mathfrak{M}}(s')} \text{ at } \Vec{a} \text{, and} \\
        \ulcorner X \urcorner^{s} \text{ disagrees with } \ulcorner X \urcorner^{\tau_{\mathfrak{M}}(s)} \text{ at } \Vec{a} \implies \ & \ulcorner X \urcorner^{\tau_{\mathfrak{M}}(s)} \text{ agrees with } \ulcorner X \urcorner^{\tau_{\mathfrak{M}}(s')} \text{ at } \Vec{a} \text{.}
    \end{align*}
\end{enumerate}
A (finite) terminating run of $\mathfrak{M}$ is a sequence of states $(s_i : i \leq n < \omega)$ for which
\begin{enumerate}[label=(B\arabic*)]
    \item $s_0 \in I(\mathfrak{M})$,
    \item $\tau_{\mathfrak{M}}(s_n) = s_n$, and
    \item for all $m < n$, $s_{m+1} = \tau_{\mathfrak{M}}(s_m) \neq s_m$.
\end{enumerate}

For brevity's sake, some details are withheld here. First, Gurevich originally require states to have only function and constant symbols in their signatures. However, this restriction is purely cosmetic because every relation in a structure can be represented as a boolean function. Second, $S(\mathfrak{M})$ is supposed to be closed under all possible isomorphisms, thus becoming a proper class. This closure property allows one to derive \ref{a3} and \ref{a4} from Gurevich's \emph{bounded exploration postulate} (see \cite{gurevich}, Lemma 6.2 for a derivation). Even though our presentation might seem logically weaker than Gurevich's original, the two are actually provably equivalent, modulo closure under isomorphisms. In fact, going through \ref{a3} and \ref{a4} is standard for proofs equating the computational power of different paradigms through sequential algorithms (in e.g. \cite{dershowitz} or even \cite{gurevich} itself), so our sidestepping of set-theoretic concerns regarding the sizes of state spaces is accompanied by some conveniences. Finally, our definition above can be generalised so that $\tau_{\mathfrak{M}}$ need only be a binary relation -- resulting in algorithms which might not be sequential -- but that is unnecessary if we only want to capture enough of generalised computation for a sufficiently general relative computability relation.

Intuitively, a state of an algorithm is a configuration of some generalised Turing machine. Each initial state corresponds to an initial configuration, from which one can read off the input (or one may take the initial state itself to be the input). The last state of a terminating run, sometimes called a \emph{final state}, corresponds in spirit to an end configuration of a Turing machine run, from which one can read off the output. For example, if we have a specific symbol $\ulcorner X \urcorner$ in the signature to represent an analogue of the Turing machine tape, then a terminating run $(s_1, \dots, s_n)$ should give an algorithmic transformation of the input representation $X^{s_1}$ into the output representation $X^{s_n}$.

Now, requirements \ref{a3} and \ref{a4} set an \textit{a priori} finite bound to the number of bits that an algorithm can alter in each step of its computational run. Since a terminating run is presumed to be finite, there is no way for an algorithm to make an infinite amount of change before terminating. It is therefore very tempting to just free ourselves from the shackles of \ref{a3} and \ref{a4}, especially since we aim to have as general a definition of an generalised algorithm as possible. Unfortunately, this turns out to be \textit{too} inclusive. Consider the example below, modelled after a Turing machine with a $\kappa$-length tape.

\begin{ex}\label{ex21}
Let $\kappa$ be an infinite cardinal. Expand the signature of $(\kappa; \in)$ to include a unary relation symbol $\ulcorner A \urcorner$. Now if $A_1$ and $A_2$ are subsets of $\kappa$, then we can define $\mathfrak{M}(\kappa, A_1, A_2)$ as follows:
\begin{gather*}
    s_1 := (\kappa; \in, A_1) \\
    s_2 := (\kappa; \in, A_2) \\
    S(\mathfrak{M}(\kappa, A_1, A_2)) := \{s_1, s_2\} \\
    I(\mathfrak{M}(\kappa, A_1, A_2)) := \{s_1\} \\
    \tau_{\mathfrak{M}(\kappa, A_1, A_2)} := \{(s_1, s_2), (s_2, s_2)\} \text{,}
\end{gather*}
so that $A_1$ and $A_2$ both interpret $\ulcorner A \urcorner$ on the base set $\kappa$ shared between the states $s_1$ and $s_2$. 
\end{ex}

Fix subsets $A_1$ and $A_2$ of $\kappa$. Note that $\mathfrak{M}(\kappa, A_1, A_2)$ satisfies requirements \ref{a1} to \ref{a3'}. Moreover, the only terminating run of $\mathfrak{M}(\kappa, A_1, A_2)$ is $(s_1, s_2)$. As the interpretation of the symbol $\ulcorner A \urcorner$ is the only difference between $s_1$ and $s_2$, the moral implication of this run is that we can algorithmically get from $A_1$ to $A_2$. If we set $A_1$ to be a simply definable set like the empty set, then every subset of $\kappa$ is algorithmically derivable from $A_1$. This seems incredible given the information content of arbitrary sets of ordinals, as exemplified by the next proposition.

\begin{prop}\label{prop22}
Let $M$ be a transitive model of $\mathsf{ZFC}$ and $X \in M$. Then there is a set of ordinals $c \in M$ such that if $N$ is any transitive model of $\mathsf{ZF}$ containing $c$, then $X \in N$. 
\end{prop}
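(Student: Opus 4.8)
The plan is to exploit the fact that, working inside $M$ where the Axiom of Choice is available, the entire $\in$-structure on the transitive closure of $X$ can be faithfully encoded by a single set of ordinals, and then to show that this code can be decoded in \emph{any} transitive model of $\mathsf{ZF}$ --- a theory already strong enough to carry out a Mostowski collapse without invoking choice.

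First I would work in $M$ and set $t := \operatorname{trcl}(\{X\})$, a transitive set with $X \in t$. Using choice in $M$, fix an ordinal $\lambda$ and a bijection $g \colon \lambda \to t$, and transport membership to $\lambda$ by declaring $(\alpha, \beta) \in E$ iff $g(\alpha) \in g(\beta)$; then $(\lambda, E) \cong (t, \in)$, so $E$ is a well-founded extensional relation on $\lambda$. Using the absolute G\"{o}del pairing function $\Gamma \colon \mathrm{ORD}^2 \to \mathrm{ORD}$, I would fold both $\lambda$ and the graph of $E$ into one set of ordinals, e.g.
\[
c := \{\Gamma(0, \lambda)\} \cup \{\Gamma(1, \Gamma(\alpha,\beta)) : (\alpha,\beta) \in E\} \in M .
\]
The tag in the first coordinate lets one unambiguously read off $\lambda$ and $E$ from $c$ by applying $\Gamma^{-1}$, an operation absolute between transitive models, and $c$ is a set of ordinals lying in $M$ as required.

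Now let $N$ be any transitive model of $\mathsf{ZF}$ with $c \in N$. Since decoding is absolute, $N$ recovers the very same $\lambda$ and $E$. Well-foundedness of a set-sized relation is absolute for transitive models, so $N$ sees $(\lambda, E)$ as well-founded and extensional and may apply the Mostowski Collapsing Theorem inside $N$ to obtain a transitive $\bar t \in N$ with $(\lambda, E) \cong (\bar t, \in)$. The key point --- and the step I expect to require the most care --- is the absoluteness of the collapse: because the transitive collapse of a well-founded extensional relation is the \emph{unique} transitive set isomorphic to it, and $(t, \in)$ is already such a set in $V$, the $\bar t$ computed in $N$ must coincide with the true $t$. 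Hence $t \in N$, and since $N$ is transitive with $X \in t$, we conclude $X \in N$. (If one prefers to exhibit $X$ explicitly, note that $X$ is the unique $\in$-maximal element of $t$ --- the unique $a \in t$ lying in no member of $t$ --- so $X$ is definable from $t$ inside $N$.)

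The main obstacle is thus conceptual rather than computational: verifying that the collapse performed within $N$ reproduces the genuine transitive closure of $\{X\}$. This hinges on the uniqueness of transitive collapses together with the absoluteness of the pairing function, of well-foundedness for set relations, and of the collapse construction across transitive models of $\mathsf{ZF}$ --- none of which require choice in $N$, which is precisely why the hypothesis on $N$ can be weakened to $\mathsf{ZF}$.
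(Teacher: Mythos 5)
Your proposal is correct and follows essentially the same route as the paper's own proof: code the membership relation on the transitive closure of $\{X\}$ as a set of ordinals via a bijection with an ordinal and the G\"{o}del pairing function, then recover $X$ in $N$ by absoluteness of the decoding, the Mostowski collapse, and the characterisation of $X$ as the unique $\in$-maximal element. The only (cosmetic) difference is your explicit tagging scheme for packaging $\lambda$ together with the relation, where the paper simply codes the relation as a subset of a cardinal $\kappa$.
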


\begin{proof}
Let $Y'$ be the transitive closure of $X$ (under the membership relation $\in$) and set $Y := Y' \cup \{X\}$. Then $Y$ is $\in$-transitive. Choose a bijection $f$ from a cardinal $\kappa$ into $Y$. Use $\in'$ to denote the unique binary relation $R$ on $\kappa$ such that
\begin{equation*}
    R(\alpha, \beta) \iff f(\alpha) \in f(\beta) \text{.}
\end{equation*}
Now apply G\"{o}del's pairing function to code $\in'$ as a subset $c$ of $\kappa$. To recover $X$ from $c$, first apply the inverse of the pairing function followed by the Mostowski collapse to get $Y$. Then $X$ is definable from $Y$ as the unique $\in$-maximal element of $Y$. This decoding process is absolute for transitive models of $\mathsf{ZF}$ because all its components are.
\end{proof}

The procedure described in the proof of Proposition \ref{prop22} allows us to encode any set as a set of ordinals, such that the decoding can be done in an absolute manner. If we believe that such a decoding scheme (particularly, G\"{o}del's pairing function) is algorithmic, then our aforementioned loosening of the definition of SeqAs entails that we can algorithmically derive every set in $V$ from just $\emptyset$. Since most set theorists are of the opinion that sets in $V$ can be extremely complicated and far from constructible in a very strong sense, this conclusion is wholly unsatisfactory, if not patently false. In other words, just eschewing requirements \ref{a3} and \ref{a4} yields \emph{too} encompassing a paradigm for algorithms, and thus should not be used to draw the line for generalised computability.

\begin{rem}
In Example \ref{ex21}, we can set $A_1$ and $A_2$ to be subsets of $H(\kappa)$, and alter the definition of $\mathfrak{M}(\kappa, A_1, A_2)$ such that 
\begin{gather*}
    s_1 := (H(\kappa); \in, A_1) \\
    s_2 := (H(\kappa); \in, A_2) \text{.}
\end{gather*}
In this way, we obtain a more literal argument that every set is algorithmically derivable from every other set, sidestepping the need for Proposition \ref{prop22} as motivation. Nonetheless, Proposition \ref{prop22} will be invoked in later parts of this section, so it is as good a time now as any to introduce it.
\end{rem}

From now on, and until further notice, let us relax the definition of SeqAs such that they need only satisfy requirements \ref{a1} to \ref{a3'}. For much of the rest of this section, we will make educated modifications and impose suitable restrictions to this relaxed definition of a SeqA, in an attempt to formulate a more philosophically justified notion of generalised algorithms, at least from a set-theoretic viewpoint.

\subsection{Modifying the Satisfaction Relation}

An obvious cause of the triviality identified in Example \ref{ex21}, is the lack of restriction on the transition function $\tau_{\mathfrak{M}}$ of a SeqA $\mathfrak{M}$. Ideally, $\tau_{\mathfrak{M}}(s)$ should be describable is a way which only depends locally on itself and $s$. One formalisation of a ``describable local property'' applicable to first-order structures is the notion of a \emph{first-order property}. A first-order formula $\phi$ over the signature of a structure $\mathfrak{A}$ is a first-order property of $\mathfrak{A}$ iff $\mathfrak{A} \models \phi$, where $\models$ is Tarski's satisfaction relation. To describe a transition function, we want to isolate a first-order property of the next state that depends solely on the current state. However, because a candidate property should refer to two different structures (the current and next states), some slight modifications to the satisfaction relation are in order.

\begin{defi}
Given a signature $\sigma$, define $2 \cdot \sigma$ to be the disjoint union of $\sigma$ and itself. In other words, for each symbol $\ulcorner X \urcorner \in \sigma$ and each $i < 2$, we choose a fresh symbol $\ulcorner X_{i} \urcorner$ and cast it to be of the same type and arity as $\ulcorner X \urcorner$. We then set
\begin{equation*}
    2 \cdot \sigma := \{\ulcorner X_{i} \urcorner : \ulcorner X \urcorner \in \sigma \text{ and } i < 2\} \text{.}
\end{equation*}
\end{defi}

\begin{defi}
A \emph{binary first-order formula} over a signature $\sigma$ is a first-order formula over $2 \cdot \sigma$.
\end{defi}

\begin{defi}
Let $\mathfrak{A}_0$ and $\mathfrak{A}_1$ be first-order structures with the same base set $A$ and signature $\sigma$. For each $a \in A$, choose a fresh constant symbol $\ulcorner a \urcorner$ not in $2 \cdot \sigma$. Denote
\begin{equation*}
    2 \cdot \sigma \cup \{\ulcorner a \urcorner : a \in A\}
\end{equation*}
as $\sigma(A, 2)$.
\end{defi}

\begin{defi}
Let $\mathfrak{A}_0$ and $\mathfrak{A}_1$ be first-order structures with the same base set $A$ and signature $\sigma$. Let $\ulcorner X \urcorner \in \sigma(A, 2)$.
Define
\begin{align*}
    X^{(\mathfrak{A}_0, \mathfrak{A}_1)} := \ & a \text{ if } \ulcorner X \urcorner = \ulcorner a \urcorner \text{ for some } a \in A \text{,} \\
    X^{(\mathfrak{A}_0, \mathfrak{A}_1)} := \ & Y^{\mathfrak{A}_0} \text{ if } \ulcorner X \urcorner = \ulcorner Y_0 \urcorner \text{ for some } \ulcorner Y \urcorner \in \sigma \text{, and} \\
    X^{(\mathfrak{A}_0, \mathfrak{A}_1)} := \ & Y^{\mathfrak{A}_1} \text{ if } \ulcorner X \urcorner = \ulcorner Y_1 \urcorner \text{ for some } \ulcorner Y \urcorner \in \sigma \text{.}
\end{align*}
If $\ulcorner f \urcorner \in \sigma(A, 2)$ is a function symbol and $t$ is a term over $\sigma(A, 2)$ of which interpretation under $(\mathfrak{A}_0, \mathfrak{A}_1)$ is known, then 
\begin{equation*}
    f(t)^{(\mathfrak{A}_0, \mathfrak{A}_1)} := f^{(\mathfrak{A}_0, \mathfrak{A}_1)}(t^{(\mathfrak{A}_0, \mathfrak{A}_1)}) \text{.}
\end{equation*}
\end{defi}

\begin{defi}
Let $\mathfrak{A}_0$ and $\mathfrak{A}_1$ be first-order structures with the same base set $A$ and signature $\sigma$. We define what $(\mathfrak{A}_0, \mathfrak{A}_1) \models_2 \phi$ means for $\phi \in \sigma(A, 2)$ by recursion on the complexity of $\phi$.
\begin{enumerate}[label=(\arabic*)]
    \item If $\phi = \ulcorner t_1 = t_2 \urcorner$ for terms $t_1, t_2$ over $\sigma(A, 2)$, then $$(\mathfrak{A}_0, \mathfrak{A}_1) \models_2 \phi \text{ iff } t_1^{(\mathfrak{A}_0, \mathfrak{A}_1)} = t_2^{(\mathfrak{A}_0, \mathfrak{A}_1)} \text{.}$$
    \item If $\phi = \ulcorner R(t_1, \dots, t_n) \urcorner$ for a relation symbol $\ulcorner R \urcorner \in \sigma(A, 2)$ and terms $t_1, \dots, t_n$ over $\sigma(A, 2)$, then $$(\mathfrak{A}_0, \mathfrak{A}_1) \models_2 \phi \text{ iff } R^{(\mathfrak{A}_0, \mathfrak{A}_1)} (t_1^{(\mathfrak{A}_0, \mathfrak{A}_1)}, \dots, t_n^{(\mathfrak{A}_0, \mathfrak{A}_1)}) \text{.}$$
    \item If $\phi = \ulcorner \neg \psi \urcorner$ for a formula $\psi$ over $\sigma(A, 2)$, then $$(\mathfrak{A}_0, \mathfrak{A}_1) \models_2 \phi \text{ iff } \neg ((\mathfrak{A}_0, \mathfrak{A}_1) \models_2 \psi) \text{.}$$
    \item If $\phi = \ulcorner \psi_1 \wedge \psi_2 \urcorner$ for formulas $\psi_1$ and $\psi_2$ over $\sigma(A, 2)$, then $$(\mathfrak{A}_0, \mathfrak{A}_1) \models_2 \phi \text{ iff } ((\mathfrak{A}_0, \mathfrak{A}_1) \models_2 \psi_1 \wedge (\mathfrak{A}_0, \mathfrak{A}_1) \models_2 \psi_2) \text{.}$$
    \item If $\phi = \ulcorner \exists x \ \psi \urcorner$ for a free variable $\ulcorner x \urcorner$ and a formula $\psi$ over $\sigma(A, 2)$, then $$(\mathfrak{A}_0, \mathfrak{A}_1) \models_2 \phi \text{ iff } \exists a \in A ((\mathfrak{A}_0, \mathfrak{A}_1) \models_2 \psi[x \mapsto a]) \text{,}$$ where $\psi[x \mapsto a]$ is the result of replacing all instances of $\ulcorner x \urcorner$ in $\psi$ by $\ulcorner a \urcorner$.
\end{enumerate}
\end{defi}

Before adding more requirements to the transition function of a SeqA, let us enforce that every state in a SeqA $\mathfrak{M}$ must have the same base set $b(\mathfrak{M})$. This can be done without loss of generality in the definition of a SeqA, because we can extend the base set of each state in $\mathfrak{M}$ to be the union of the base sets of states in $\mathfrak{M}$, while arbitrarily extending the interpretation of function and relation symbols in the signature by each state.

We stipulate that the transition function $\tau_{\mathfrak{M}}$ of a SeqA $\mathfrak{M}$ must be finitely describable over $2 \cdot \sigma(\mathfrak{M})$. That is, there must exist a first-order sentence $\phi^{\tau}_{\mathfrak{M}}$ over $2 \cdot \sigma(\mathfrak{M})$ such that for all $s_1, s_2 \in S(\mathfrak{M})$, 
\begin{equation*}
    \tau_{\mathfrak{M}}(s_1) = s_2 \iff (s_1, s_2) \models_2 \phi^{\tau}_{\mathfrak{M}} \text{.}
\end{equation*}

This new imposition on transition functions seems sufficiently severe to rule out the cases in Example \ref{ex21}. Or does it?

\begin{ex}[\ref{ex21} revisited]
Let $\kappa$, $A_1$, $A_2$, $s_1$, $s_2$ and $\mathfrak{M}(\kappa, A_1, A_2)$ be as in Example \ref{ex21}. Set 
\begin{gather*}
    \phi^{\tau}_{\mathfrak{M}(\kappa, A_1, A_2)} := \ulcorner \forall x \ (x \in_1 c_1 \iff (\forall y \ (\neg (y \in_1 x)))) \urcorner \text{.}
\end{gather*}
Then given any two expansions $s'_1$ and $s'_2$ of $(\kappa; \in)$ with signature $\sigma$ containing $\ulcorner c \urcorner$, 
\begin{equation*}
    (s'_1, s'_2) \models_2 \phi^{\tau}_{\mathfrak{M}(\kappa, A_1, A_2)} \text{ iff } s'_2 \text{ interprets } \ulcorner c \urcorner \text{ to be } 1 \text{.}
\end{equation*}
In particular, $(s_1, s_2) \models_2 \phi^{\tau}_{\mathfrak{M}(\kappa, A_1, A_2)}$ and $(s_2, s_2) \models_2 \phi^{\tau}_{\mathfrak{M}(\kappa, A_1, A_2)}$, so $\tau_{\mathfrak{M}(\kappa, A_1, A_2)}$ is still a valid transition function. 
\end{ex}

It turns out that we can intentionally choose the set of states in a SeqA to trivialise the definition of its transition function. To overcome this, we make it so that the set of states depends only on the base set and signature of the structures involved. What this means is, if $\mathfrak{M}_1$ and $\mathfrak{M}_2$ are SeqAs, $s_1 \in S(\mathfrak{M}_1)$, $s_2 \in S(\mathfrak{M}_2)$ and $s_1$, $s_2$ share the same base set and signature, then $S(\mathfrak{M}_1) = S(\mathfrak{M}_2)$. To wit, we include $b(\mathfrak{M})$ as a primary datum of a SeqA, and derive $S(\mathfrak{M})$ from $b(\mathfrak{M})$ and $\sigma(\mathfrak{M})$ as follows: 
\begin{equation*}
    S(\mathfrak{M}) := \{s : s \text{ is a first-order structure with base set } b(\mathfrak{M}) \text{ and signature } \sigma(\mathfrak{M})\} \text{.}
\end{equation*}

Now, there may be a nagging worry that we over-corrected, that our modified definition of a SeqA is too restrictive. Indeed, this shortcoming becomes obvious when we try to define a SeqA corresponding to a Turing machine.

\begin{diff}\label{diff29}
Consider a Turing machine $M$ with an $\omega$-length tape and $n$ states, for some $n < \omega$. A SeqA $\mathfrak{M}$ representing $M$ naturally has $b(\mathfrak{M})$ include $\omega$ and the set of $M$'s states. A suitable transition function $\tau_{\mathfrak{M}}$ ought to speak of $M$'s states, and to do that it is imperative that each of these states identify uniquely with a term over $\sigma(\mathfrak{M})$. For example, we can have a constant symbol $\ulcorner c_i \urcorner \in \sigma(\mathfrak{M})$ uniquely represent the $i$-th state of $M$. 

However, as members of $S(\mathfrak{M})$ are allowed to freely interpret the $\ulcorner c_i \urcorner$s, there would be a problem of singling out the ``next state'' of $\mathfrak{M}$ based on an arbitrary ``current state'', using the relation $\models_2$ alone.
\end{diff}

Difficulty \ref{diff29} tells us $S(\mathfrak{M})$ should depend not just on $b(\mathfrak{M})$ and $\sigma(\mathfrak{M})$, but also on how symbols in $\sigma(\mathfrak{M})$ are to be interpreted in $b(\mathfrak{M})$. This leads us to the objects of focus in the next subsection.

\subsection{Theories with Constraints in Interpretation}\label{ssect23}

\emph{Theories with constraints in interpretation} (henceforth, \emph{TCI}s) were conceived as a convenient means of looking at generic objects produced by set-theoretic forcing. As TCIs impose natural constraints on first-order theories, they can be easily utilised to constrain the possible states of a SeqA. In fact, TCIs are a generalisation of what Dershowitz and Gurevich call \textit{static functions} in \cite{dershowitz}.

\begin{defi}(Lau, \cite{myself})
A \emph{first-order theory with constraints in interpretation} (\emph{first-order TCI}) --- henceforth, just \emph{theory with constraints in interpretation} (\emph{TCI}) --- is a tuple $(T, \sigma, \dot{\mathcal{U}}, \vartheta)$, where
\begin{enumerate}[label=(\alph*)]
    \item $T$ is a first order theory with signature $\sigma$,
    \item $\dot{\mathcal{U}}$ is a unary relation symbol not in $\sigma$,
    \item $\vartheta$ is a function (the \emph{constraint function}) with domain $\sigma \cup \{\dot{\mathcal{U}}\}$, 
    \item if $x \in ran(\vartheta)$, then there is $y$ such that 
    \begin{enumerate}[label=(\roman*)]
        \item either $x = (y, 0)$ or $x = (y, 1)$, and
        \item if $\vartheta(\dot{\mathcal{U}}) = (z, a)$, then $y \subset z^n$ for some $n < \omega$, and
    \end{enumerate}
    \item if $\vartheta(\dot{\mathcal{U}}) = (z, a)$, then 
    \begin{enumerate}[label=(\roman*)]
        \item $z \cap z^n = \emptyset$ whenever $1 < n < \omega$, and
        \item $z^m \cap z^n = \emptyset$ whenever $1 < m < n < \omega$.
    \end{enumerate}
\end{enumerate}
We call members of the interpretation constraint map \emph{interpretation constraints}.
\end{defi}

\begin{defi}(Lau, \cite{myself})
Let $(T, \sigma, \dot{\mathcal{U}}, \vartheta)$ be a TCI. We say $$\mathcal{M} := (U; \mathcal{I}) \models^* (T, \sigma, \dot{\mathcal{U}}, \vartheta)$$ --- or $\mathcal{M}$ \emph{models} $(T, \sigma, \dot{\mathcal{U}}, \vartheta)$ --- iff all of the following holds:
\begin{enumerate}[label=(\alph*)]
    \item $\mathcal{M}$ is a structure,
    \item $\sigma$ is the signature of $\mathcal{M}$,
    \item $\mathcal{M} \models T$,
    \item if $\vartheta(\dot{\mathcal{U}}) = (y, 0)$, then $U \subset y$,
    \item if $\vartheta(\dot{\mathcal{U}}) = (y, 1)$, then $U = y$, and
    \item for $\dot{X} \in \sigma$,
    \begin{enumerate}[label=(\roman*)]
        \item if $\dot{X}$ is a constant symbol and $\vartheta(\dot{X}) = (y, z)$, then $\mathcal{I}(\dot{X}) \in y \cap U$,
        \item if $\dot{X}$ is a $n$-ary relation symbol and $\vartheta(\dot{X}) = (y, 0)$, then $\mathcal{I}(\dot{X}) \subset y \cap U^{n}$,
        \item if $\dot{X}$ is a $n$-ary relation symbol and $\vartheta(\dot{X}) = (y, 1)$, then $\mathcal{I}(\dot{X}) = y \cap U^{n}$,
        \item if $\dot{X}$ is a $n$-ary function symbol and $\vartheta(\dot{X}) = (y, 0)$, then $$\{z \in U^{n+1} : \mathcal{I}(\dot{X})(z \! \restriction_n) = z(n)\} \subset y \cap U^{n+1}, \text{ and}$$
        \item if $\dot{X}$ is a $n$-ary function symbol and $\vartheta(\dot{X}) = (y, 1)$, then $$\{z \in U^{n+1} : \mathcal{I}(\dot{X})(z \! \restriction_n) = z(n)\} = y \cap U^{n+1}.$$
    \end{enumerate}
\end{enumerate}
We say $(T, \sigma, \dot{\mathcal{U}}, \vartheta)$ has a model if there exists $\mathcal{M}$ for which $\mathcal{M} \models^* (T, \sigma, \dot{\mathcal{U}}, \vartheta)$.
\end{defi}

To specify a SeqA $\mathfrak{M}$, we should also specify a TCI $\mathfrak{T}(\mathfrak{M}) = (T, \sigma, \dot{\mathcal{U}}, \vartheta)$ such that 
\begin{gather*}
    T = \emptyset \\
    \sigma = \sigma(\mathfrak{M}) \\
    \vartheta(\dot{\mathcal{U}}) = (b(\mathfrak{M}), 1).
\end{gather*}
We can then stipulate that 
\begin{equation*}
    S(\mathfrak{M}) := \{s : s \models^* \mathfrak{T}(\mathfrak{M})\} \text{.}
\end{equation*}

Let us now check that any Turing machine can be easily represented as a SeqA.

\begin{ex}\label{ex212}
As in Difficulty \ref{diff29}, let $M$ be a Turing machine with an $\omega$-length tape and $n$ states, for some $0 < n < \omega$. Let the set of $M$'s states be $\{c_i : i < n\}$, with $c_0$ being the only initial state and $c_{n-1}$ the only final state. 

Choose a binary relation symbol $\ulcorner \in \urcorner$, unary relation symbols $\dot{\mathcal{U}}$, $\ulcorner R \urcorner$, unary function symbols $\ulcorner \mathrm{Pre} \urcorner$, $\ulcorner \mathrm{Suc} \urcorner$, and constant symbols $\ulcorner h \urcorner$, $\ulcorner t \urcorner$, all of them distinct from one another. Set 
\begin{gather*}
    b(\mathfrak{M}) := \omega \\
    \sigma(\mathfrak{M}) := \{\ulcorner \in \urcorner, \ulcorner \mathrm{Pre} \urcorner, \ulcorner \mathrm{Suc} \urcorner, \ulcorner h \urcorner, \ulcorner t \urcorner, \ulcorner R \urcorner\} \text{.}
\end{gather*}
Define a function $\vartheta$ on $\sigma(\mathfrak{M}) \cup \{\dot{\mathcal{U}}\}$ as follows:
\begin{gather*}
    \vartheta(\dot{\mathcal{U}}) := (\omega, 1) \\
    \vartheta(\ulcorner \in \urcorner) := (\in \cap \ \omega, 1) \\
    \vartheta(\ulcorner \mathrm{Pre} \urcorner) := (\mathrm{Pre}, 1) \\
    \vartheta(\ulcorner \mathrm{Suc} \urcorner) := (\mathrm{Suc}, 1) \\
    \vartheta(\ulcorner h \urcorner) := (\omega, 0) \\
    \vartheta(\ulcorner t \urcorner) := (\omega, 0) \\
    \vartheta(\ulcorner R \urcorner) := (\omega, 0) \text{,}
\end{gather*}
where $\mathrm{Pre}$ and $\mathrm{Suc}$ are the predecessor and successor functions on $\omega$, respectively. Set
\begin{equation*}
    \mathfrak{T}(\mathfrak{M}) := (\emptyset, \sigma(\mathfrak{M}), \dot{\mathcal{U}}, \vartheta) \text{.}
\end{equation*}
Intuitively, we want $\ulcorner h \urcorner$ to interpret the current position of the read/write head of $M$, $\ulcorner t \urcorner$ to interpret the current state $M$ is in, and $\ulcorner R \urcorner$ to interpret the current contents of $M$'s tape. Notice that all members of $\omega$ are definable over the membership  relation without parameters. This implies references to them can be made in sentences over $\sigma(\mathfrak{M})$. We will use $\langle n \rangle$ to refer to the $n < \omega$ in such sentences. 

Define 
\begin{align*}
    I(\mathfrak{M}) := \{s : \ & s \models^* \mathfrak{T}(\mathfrak{M}) \text{ and } \\
    & s \models \ulcorner h = \langle 0 \rangle \wedge t = \langle 0 \rangle \urcorner \text{ and } \\
    & s \models \ulcorner \exists x \ (R(x)) \wedge \forall x \ \forall y \ (R(x) \wedge R(y) \implies x = y) \urcorner\} \text{.}
\end{align*}
Each member of $I(\mathfrak{M})$ represents an initial configuration of $M$ with \begin{enumerate}[label=(\alph*)]
    \item the read/write head at the $0$-th cell of the tape, 
    \item the current state being the initial state, 
    \item `$1$' written on exactly one cell of the tape, and
    \item `$0$' written on all other cells of the tape.
\end{enumerate} 
More specifically, `$1$' appears on the $k$-th cell of $M$'s tape in an initial configuration iff $M$ takes $k$ as input.

We are almost done with specifying our SeqA $\mathfrak{M}$; the only thing left to do is to give a transition function $\tau_{\mathfrak{M}}$ accurately representing transitions among the Turing configurations of $M$. It suffices to concoct a suitable sentence $\phi^{\tau}_{\mathfrak{M}}$ based on the transition function $$\delta : \{c_i : i < n-1\} \times 2 \longrightarrow \{c_i : i < n\} \times 2 \times \{\mathrm{left}, \mathrm{right}\}$$ of $M$. For each pair $(j, k) \in n \times 2$, let
\begin{align*}
    \phi_{j,k} := \ & \ulcorner (q(R_0(h_0), k) \wedge t_0 \! = \! \langle j \rangle) \!\! \implies \!\! (q(R_1(h_0), k') \wedge t_1 \! = \! \langle j' \rangle \wedge h_1 \! = \mathrm{Pre}(h_0) \wedge \varphi) \urcorner \\ 
    & \text{ if } j < n-1 \text{ and } \delta((c_j, k)) = (c_{j'}, k', \mathrm{left}) \text{,} \\
    \phi_{j,k} := \ & \ulcorner (q(R_0(h_0), k) \wedge t_0 \! = \! \langle j \rangle) \!\! \implies \!\! (q(R_1(h_0), k') \wedge t_1 \! = \! \langle j' \rangle \wedge h_1 \! = \mathrm{Suc}(h_0) \wedge \varphi) \urcorner \\ 
    & \text{ if } j < n-1 \text{ and } \delta((c_j, k)) = (c_{j'}, k', \mathrm{right}) \text{, and} \\
    \phi_{j,k} := \ & \ulcorner h_0 = h_1 \wedge t_0 = t_1 \wedge \forall x \ (R_0(x) \iff R_1(x)) \urcorner \\ 
    & \text{ if } j = n-1 \text{,}
\end{align*}
where $q(\psi, k)$ is a shorthand for 
\begin{align*}
    \neg \psi & \text{ if } k = 0 \text{;} \\
    \psi & \text{ if } k = 1
\end{align*}
and $\varphi$ is the sentence
\begin{equation*}
    \ulcorner \forall x \ (\neg x = h_0 \implies (R_0(x) \iff R_1(x))) \urcorner \text{.}
\end{equation*}
Finally, 
\begin{equation*}
    \phi^{\tau}_{\mathfrak{M}} := \bigwedge \{\phi_{j,k} : (j, k) \in n \times 2\} \text{,}
\end{equation*}
is a first-order sentence over $2 \cdot \sigma(\mathfrak{M})$ that gives us the representation we want.
\end{ex}

\subsection{Absoluteness and the Bounded Exploration Postulate Revisited}\label{ss240}

A good notion of algorithm ought to be absolute to a certain degree. In set-theoretic contexts, absoluteness for transitive models of set theory comes up the most often. 

\begin{defi}
A sentence $\phi$ in the language of set theory with set parameters is \emph{absolute for transitive models of} $\mathsf{ZFC}$ iff for all $A$ and $B$ such that
\begin{itemize}
    \item $A \subset B$,
    \item $(A; \in)$ and $(B; \in)$ are transitive models of $\mathsf{ZFC}$, and
    \item all parameters of $\phi$ are members of $A$,
\end{itemize}
it must be the case that
\begin{equation*}
    (A; \in) \models \phi \iff (B; \in) \models \phi \text{.}
\end{equation*}
\end{defi}

\begin{defi}
Let $\phi$ be a formula in the language of set theory, with set parameters and free variables $x_1, \dots, x_n$. Suppose $\phi$ defines an $n$-ary relation $R$ (in $V$). Then $\phi$ is an \emph{absolute definition of} (or, $\phi$ \emph{absolutely defines}) $R$ \emph{for transitive models of} $\mathsf{ZFC}$ iff for all $A$ such that
\begin{itemize}
    \item $(A; \in)$ is a transitive model of $\mathsf{ZFC}$, and
    \item all parameters of $\phi$ are members of $A$,
\end{itemize}
it must be the case that
\begin{equation*}
     \{(a_1, \dots, a_n) \in A^n : (A; \in) \models \phi[x_1 \mapsto a_1, \dots, x_n \mapsto a_n]\} = R \cap A \text{.}
\end{equation*}
\end{defi}

Recall that it is implicit in our modified definition of a SeqA $\mathfrak{M}$, that $\phi^{\tau}_{\mathfrak{M}}$ needs to have the following property:
\begin{equation}\label{eq1}
    \forall s_0 \in S(\mathfrak{M}) \ \exists ! s_1 \in S(\mathfrak{M}) \ ((s_0, s_1) \models_2 \phi^{\tau}_{\mathfrak{M}}) \text{,}
\end{equation}
where
\begin{equation*}
    S(\mathfrak{M}) = \{s : s \models^* \mathfrak{T}(\mathfrak{M})\} \text{.}
\end{equation*}
However, given any $\mathfrak{T}(\mathfrak{M})$ and $\phi^{\tau}_{\mathfrak{M}}$, it is not immediate that (\ref{eq1}) is absolute for transitive models of $\mathsf{ZFC}$. As a result, it is difficult to ascertain that the definition of $\tau_{\mathfrak{M}}$ from $\mathfrak{T}(\mathfrak{M})$ and $\phi^{\tau}_{\mathfrak{M}}$ is absolute for transitive models of $\mathsf{ZFC}$. This implores us to make the dependency between the current and the next state of a SeqA even more local.

At this point, it is useful to revisit, in a more precise fashion, Gurevich's bounded exploration postulate, for it is an example of highly local dependency even without the closure of the state space under isomorphisms. Combined with our stipulation that a transition function must be definable, we expect to obtain a level of local definability from which the absoluteness we want can be easily derived. Consequently, abiding by the bounded exploration postulate should net us absoluteness at no additional cost.

\begin{defi}
Let $\mathfrak{M}$ be a SeqA and $s \in S(\mathfrak{M})$. Define $\Delta_{\mathfrak{M}}(s)$ to be the unique function from $\sigma(\mathfrak{M})$ into $\mathcal{P}(b(\mathfrak{M}))$ such that
\begin{enumerate}[label=(\alph*)]
    \item if $\ulcorner X \urcorner \in \sigma(\mathfrak{M})$ is an $n$-ary function symbol, then 
    \begin{equation*}
        \Delta_{\mathfrak{M}}(s)(\ulcorner X \urcorner) = \{\Vec{z}^{\frown}(c) \in b(\mathfrak{M})^{n + 1} : \ulcorner X \urcorner^{s}(\Vec{z}) \neq \ulcorner X \urcorner^{\tau_{\mathfrak{M}}(s)}(\Vec{z}) = c \} \text{,}
    \end{equation*}
    \item if $\ulcorner X \urcorner \in \sigma(\mathfrak{M})$ is an $n$-ary relation symbol, then 
    \begin{equation*}
        \Delta_{\mathfrak{M}}(s)(\ulcorner X \urcorner) = \{\Vec{z} \in b(\mathfrak{M})^n : \neg (\ulcorner X \urcorner^{s}(\Vec{z})) \iff \ulcorner X \urcorner^{\tau_{\mathfrak{M}}(s)}(\Vec{z})\} \text{,}
    \end{equation*}
    \item if $\ulcorner X \urcorner \in \sigma(\mathfrak{M})$ is a constant symbol, then 
    \begin{equation*}
        \Delta_{\mathfrak{M}}(s)(\ulcorner X \urcorner) =
        \begin{cases}
            \emptyset & \text{if } \ulcorner X \urcorner^{s} = \ulcorner X \urcorner^{\tau_{\mathfrak{M}}(s)} \\
            b(\mathfrak{M}) & \text{otherwise.}
        \end{cases}
    \end{equation*}
\end{enumerate}
Note that if $b(\mathfrak{M}) = \emptyset$, then $S(\mathfrak{M})$ is a singleton and for every $s \in S(\mathfrak{M})$, $\tau_{\mathfrak{M}}(s) = s$. Necessarily, any such $\mathfrak{M}$ must have $\Delta_{\mathfrak{M}}(s_0) = \Delta_{\mathfrak{M}}(s_1)$ for all $s_0, s_1 \in S(\mathfrak{M})$.
\end{defi}

\begin{defi}[Gurevich, \cite{gurevich}]\label{def217}
Let $\mathfrak{M}$ be a SeqA. Its transition function $\tau_{\mathfrak{M}}$ satisfies the \emph{bounded exploration postulate} iff there is a finite set of ground terms $\mathcal{D}$ over $\sigma(\mathfrak{M})$ such that for all $\ulcorner X \urcorner \in \sigma(\mathfrak{M})$ and $s_0, s_1 \in S(\mathfrak{M})$,
\begin{equation*}
    \forall t \in \mathcal{D} \ (t^{s_0} = t^{s_1}) \implies \Delta_{\mathfrak{M}}(s_0) = \Delta_{\mathfrak{M}}(s_1) \text{.}
\end{equation*}
\end{defi}

As every relation is a boolean function, there is a canonical way of simulating relation symbols in a ground term. Thus from a computability point of view, the bounded exploration postulate seems intuitive enough at first glance. Upon closer inspection, however, the emphasis on changes in valuations gives rise to artefacts such as the one below.

\begin{ex}\label{ex218}
Consider SeqAs $\mathfrak{M}_1$ and $\mathfrak{M}_2$ such that 
\begin{gather*}
    \sigma(\mathfrak{M}_1) = \sigma(\mathfrak{M}_2) = \{\ulcorner \mathrm{In} \urcorner, \ulcorner \mathrm{Out} \urcorner\} \\
    \mathfrak{T}(\mathfrak{M}_1) = \mathfrak{T}(\mathfrak{M}_2) \text{,}
\end{gather*}
$b(\mathfrak{M}_1)$ is infinite, and
\begin{gather*}
    \phi^{\tau}_{\mathfrak{M}_1} = \ulcorner \forall x \ ((\mathrm{In}_1(x) \iff \neg \mathrm{In}_0(x)) \wedge (\mathrm{Out}_1(x) \iff \neg \mathrm{Out}_0(x))) \urcorner \\
    \phi^{\tau}_{\mathfrak{M}_2} = \ulcorner \forall x \ ((\mathrm{In}_1(x) \iff x \neq x) \wedge (\mathrm{Out}_1(x) \iff \mathrm{Out}_0(x))) \urcorner \text{.}
\end{gather*}
Here, $\ulcorner \mathrm{In} \urcorner$ and $\ulcorner \mathrm{Out} \urcorner$ are supposed to represent input and output tapes respectively, with cells indexed by $b(\mathfrak{M}_1) = b(\mathfrak{M}_2)$ and entries from a binary alphabet. $\tau_{\mathfrak{M}_1}$ flips each bit of the input and output tapes, whereas $\tau_{\mathfrak{M}_2}$ resets the input tape to its all-zeroes state, an operation one can think of as a complete erasure of input tape data.

Observe that $\tau_{\mathfrak{M}_1}$ satisfies the bounded exploration postulate because for all $s_0, s_1 \in S(\mathfrak{M}_1)$, 
\begin{equation*}
    \Delta_{\mathfrak{M}}(s_0) = \Delta_{\mathfrak{M}}(s_1) = \{(\ulcorner \mathrm{In} \urcorner, b(\mathfrak{M}_1)), (\ulcorner \mathrm{Out} \urcorner, b(\mathfrak{M}_1))\} \text{.}
\end{equation*}
On the other hand, $\tau_{\mathfrak{M}_2}$ does not satisfy the bounded exploration postulate because for $s_0, s_1 \in S(\mathfrak{M}_1)$, $\Delta_{\mathfrak{M}}(s_0) = \Delta_{\mathfrak{M}}(s_1)$ only if $\ulcorner \mathrm{In} \urcorner^{s_0} = \ulcorner \mathrm{In} \urcorner^{s_1}$, and whether $\ulcorner \mathrm{In} \urcorner^{s_0} = \ulcorner \mathrm{In} \urcorner^{s_1}$ cannot be determined by just examining a finite fragment of $s_0$ and a finite fragment of $s_1$. This means the transition function of any SeqA built on $\tau_{\mathfrak{M}_2}$ by enlarging its signature while maintaining its function (``erase all data from the input tape and do nothing else''), cannot possibly satisfy the postulate.

We hence find ourselves in a weird position whereby universal bit-flipping --- which requires some information of prior states --- is an acceptably bounded, but universal erasure --- which requires no prior knowledge --- is not.
\end{ex}

Besides the erasure of a tape's contents, one would expect a SeqA $\mathfrak{M}_{\text{copy}}$ which copies the input tape's contents to the output tape, and does nothing else, to only carry out steps as valid as universal bit-flipping. Yet, by an argument similar to that for $\tau_{\mathfrak{M}_2}$ not satisfying the bounded exploration postulate in Example \ref{ex218}, $\mathfrak{M}_{\text{copy}}$ fails to satisfy the same postulate. 

With the aforementioned flaws to \ref{def217} in mind, we seek to reformulate the bounded exploration postulate.

\begin{defi}\label{def220}
Let $\sigma$ be a signature. We say a sentence $\phi$ over $2 \cdot \sigma$ is \emph{bounded for} $\sigma$ iff for each $\ulcorner X \urcorner \in \sigma$ there are $\phi^{\ulcorner X \urcorner}$ and $\psi^{\ulcorner X \urcorner}$ such that
\begin{enumerate}[label=(\alph*)]
    \item $\phi^{\ulcorner X \urcorner}$ is a formula over $\{\ulcorner Y_0 \urcorner : \ulcorner Y \urcorner \in \sigma\}$ ($\subset 2 \cdot \sigma$),
    \item $\psi^{\ulcorner X \urcorner}$ is a sentence over $2 \cdot \sigma$,
    \item if $\ulcorner X \urcorner$ is an $n$-ary function symbol, then 
    \begin{itemize}[label=$\circ$, leftmargin=20pt]
        \item $\phi^{\ulcorner X \urcorner}$ contains $n + 1$ free variables, and
        \item for some variable symbols $y_1, \dots, y_{n+1}$,
        \begin{align*}
            \psi^{\ulcorner X \urcorner} = \ulcorner \forall y_1 \dots \forall y_{n+1} \ (X_1(y_1, \dots, y_n) = y_{n+1} \iff \phi^{\ulcorner X \urcorner}(y_1, \dots, y_{n+1})) \urcorner \text{,}
        \end{align*}
    \end{itemize}
    \item if $\ulcorner X \urcorner$ is an $n$-ary relation symbol, then 
    \begin{itemize}[label=$\circ$, leftmargin=20pt]
        \item $\phi^{\ulcorner X \urcorner}$ contains $n$ free variables, and
        \item for some variable symbols $y_1, \dots, y_n$,
        \begin{align*}
            \psi^{\ulcorner X \urcorner} = \ulcorner \forall y_1 \dots \forall y_n \ (X_1(y_1, \dots, y_n) \iff \phi^{\ulcorner X \urcorner}(y_1, \dots, y_n)) \urcorner \text{,}
        \end{align*}
    \end{itemize}
    \item if $\ulcorner X \urcorner$ is a constant symbol, then 
    \begin{itemize}[label=$\circ$, leftmargin=20pt]
        \item $\phi^{\ulcorner X \urcorner}$ contains $1$ free variable, and
        \item for some variable symbol $y$,
        \begin{align*}
            \psi^{\ulcorner X \urcorner} = \ulcorner \forall y \ (X_1 = y \iff \phi^{\ulcorner X \urcorner}(y)) \urcorner \text{, and}
        \end{align*}
    \end{itemize}
    \item $\phi = \bigwedge \{\psi^{\ulcorner X \urcorner} : \ulcorner X \urcorner \in \sigma\}$.
\end{enumerate}
We call $\phi^{\ulcorner X \urcorner}$ the \emph{witness for} $\ulcorner X \urcorner$ \emph{to} $\phi$ \emph{being bounded for} $\sigma$.
\end{defi}

A few things are easily verifiable from Definition \ref{def220}. Let $\sigma$, $\phi$ and the $\phi^{\ulcorner X \urcorner}$s be as in the definition. Suppose $A$ is a set and $\mathfrak{A}$ is a structure with signature $\sigma$ and base set $A$.
\begin{enumerate}[label=(C\arabic*)]
    \item We can efficiently recover $\phi$ from the $\phi^{\ulcorner X \urcorner}$s, up to substitution of variable symbols.
    \item\label{c2pr} If there exists a structure $\mathfrak{B}$ with signature $\sigma$ and base set $A$, for which $(\mathfrak{A}, \mathfrak{B}) \models_2 \phi$, then such $\mathfrak{B}$ must be unique. Furthermore, $(\mathfrak{A}, \mathfrak{B}) \models_2 \phi$ is absolute for transitive models of $\mathsf{ZFC}$ containing $\mathfrak{A}$ and $\mathfrak{B}$.
    \item\label{c3pr} If there is no structure $\mathfrak{B}$ with signature $\sigma$ and base set $A$, for which $(\mathfrak{A}, \mathfrak{B}) \models_2 \phi$, then the cause of said non-existence is necessarily witnessed by a $\Delta_0$-definable (i.e. definable by a first-order formula over $\{\in\}$ with only bounded quantifiers) fragment of $\mathfrak{A}$'s partial valuation map of one of the $\phi^{\ulcorner X \urcorner}$s, with parameters in $A^{< \omega}$. As $\Delta_0$ definitions are absolute for transitive models of $\mathsf{ZFC}$, the existence of a witness to the non-existence such $\mathfrak{B}$ must be too.
    \item\label{c4pr} Points \ref{c2pr} and \ref{c3pr} guarantee that whenever $\mathfrak{M}$ is a SeqA with $\sigma(\mathfrak{M}) = \sigma$, the statement
    \begin{equation*}
        \forall s_0 \in S(\mathfrak{M}) \ \exists ! s_1 \in S(\mathfrak{M}) \ ((s_0, s_1) \models_2 \phi)
    \end{equation*}
    is absolute for transitive models of $\mathsf{ZFC}$ containing $\mathfrak{T}(\mathfrak{M})$.
    \item Suppose $\phi = \phi^{\tau}_{\mathfrak{M}}$ for some SeqA $\mathfrak{M}$. Then in the context of $\mathfrak{M}$, for each $\ulcorner X \urcorner \in \sigma$, the value of the next state's interpretation of $\ulcorner X \urcorner$ at a point depends only on finitely many values of the current state's interpretations of symbols in $\sigma$. This agrees in spirit with the requirement of bounded exploration, albeit only locally. The caveat of bounded exploration being local is a reasonable --- even expected --- compromise for generalised computation.
\end{enumerate}

\subsection{Input and Output}\label{ss250}

In Example \ref{ex212}, we are able to easily read the input of a Turing machine off its corresponding initial state in the SeqA representation. Similarly, we expect to easily read the output off any of its final states. These are self-evident desiderata if we were to port the input-output paradigm of computability to SeqAs. 

To simplify things, let us require any input or output of a SeqA $\mathfrak{M}$ to be a subset of $b(\mathfrak{M})$.

\begin{defi}
A \emph{loading function} of a SeqA $\mathfrak{M}$ is a function $l$ from $\mathcal{P}(b(\mathfrak{M}))$ onto $I(\mathfrak{M})$, such that for some first-order formula $\phi$ with one free variable $x$ over $\sigma(\mathfrak{M})$,
\begin{align*}
    l(A) = s \text{ iff } \forall a \in b(\mathfrak{M}) \ (a \in A \iff s \models \phi[x \mapsto a])
\end{align*}
whenever $A \in \mathcal{P}(b(\mathfrak{M}))$ and $s \in S(\mathfrak{M})$. In this case, we say $\phi$ is a witness for $l$ (as a loading function of $\mathfrak{M}$).
\end{defi}

\begin{defi}
An \emph{unloading function} of a SeqA $\mathfrak{M}$ is a function $u$ from $S(\mathfrak{M})$ into $\mathcal{P}(b(\mathfrak{M}))$, such that for some first-order formula $\phi$ with one free variable $x$ over $\sigma(\mathfrak{M})$,
\begin{align*}
    u(s) = A \text{ iff } \forall a \in b(\mathfrak{M}) \ (a \in A \iff s \models \phi[x \mapsto a])
\end{align*}
whenever $A \in \mathcal{P}(b(\mathfrak{M}))$ and $s \in S(\mathfrak{M})$. In this case, we say $\phi$ is a witness for $u$ (as an unloading function of $\mathfrak{M}$).
\end{defi}

The next two facts are easy to see.

\begin{fact}\label{fact215}
Let $\mathfrak{M}$ be a SeqA. If $\phi$ is a witness for $l_1$ and $l_2$ as loading functions of $\mathfrak{M}$, then $l_1 = l_2$.
\end{fact}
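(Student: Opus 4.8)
The plan is to exploit the fact that the biconditional defining a loading function refers only to $A$, $s$, and the witnessing formula $\phi$ --- and in no way to the loading function itself. Once $\phi$ is fixed, that biconditional completely pins down the graph of \emph{any} loading function it witnesses, so two loading functions sharing the witness $\phi$ are forced to have identical graphs. The proof is therefore a direct unwinding of the shared defining biconditional, with the equality $l_1 = l_2$ falling out pointwise.

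Concretely, I would fix an arbitrary $A \in \mathcal{P}(b(\mathfrak{M}))$ and set $s_1 := l_1(A)$. Since $l_1$ maps into $I(\mathfrak{M}) \subseteq S(\mathfrak{M})$, we have $s_1 \in S(\mathfrak{M})$, so the witness property of $\phi$ for $l_1$ applies at the pair $(A, s_1)$: from $l_1(A) = s_1$ it yields
\[
    \forall a \in b(\mathfrak{M}) \ (a \in A \iff s_1 \models \phi[x \mapsto a]) \text{.}
\]
But this right-hand side is exactly the condition appearing in the witness property of $\phi$ for $l_2$ at the same pair $(A, s_1)$ --- again legitimate precisely because $s_1 \in S(\mathfrak{M})$ --- so reading that biconditional in the other direction gives $l_2(A) = s_1$. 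Hence $l_1(A) = s_1 = l_2(A)$.

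Since $A$ was arbitrary and $l_1$, $l_2$ share the domain $\mathcal{P}(b(\mathfrak{M}))$, this shows $l_1 = l_2$. I do not anticipate any genuine obstacle here, which is consistent with the statement being flagged as ``easy to see''. The only point deserving a moment's care is confirming that the candidate value $l_1(A)$ actually lies in $S(\mathfrak{M})$, so that the defining biconditionals are applicable to it; this is immediate from $I(\mathfrak{M}) \subseteq S(\mathfrak{M})$ (requirement \ref{a2}), after which the argument is purely formal.
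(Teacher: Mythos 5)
Your argument is correct: the defining biconditional of a loading function determines $l(A)$ uniquely from $\phi$ alone, and your pointwise unwinding (noting $l_1(A) \in I(\mathfrak{M}) \subseteq S(\mathfrak{M})$ so the biconditional for $l_2$ applies) is exactly the intended verification --- the paper itself omits the proof, declaring the fact ``easy to see.''
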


\begin{fact}\label{fact216}
Let $\mathfrak{M}$ be a SeqA. If $\phi$ is a witness for $u_1$ and $u_2$ as unloading functions of $\mathfrak{M}$, then $u_1 = u_2$.
\end{fact}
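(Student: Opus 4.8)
The plan is to show that the witness $\phi$ determines the value of an unloading function at each state, so that any two unloading functions sharing $\phi$ must agree everywhere. The crucial observation is that the defining biconditional, read with care, does not merely constrain $u(s)$ but pins it down uniquely.

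First I would fix an arbitrary $s \in S(\mathfrak{M})$ and introduce the set
$$A_s := \{a \in b(\mathfrak{M}) : s \models \phi[x \mapsto a]\} \text{.}$$
Since $A_s \subseteq b(\mathfrak{M})$, it lies in $\mathcal{P}(b(\mathfrak{M}))$ and is therefore an admissible candidate value. By construction $\forall a \in b(\mathfrak{M}) \ (a \in A_s \iff s \models \phi[x \mapsto a])$ holds, so the defining condition for $\phi$ being a witness yields $u_i(s) = A_s$ for each $i \in \{1, 2\}$.

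Next I would check that no other value is possible, which establishes the uniqueness. If $u_i(s) = A$ for some $A \in \mathcal{P}(b(\mathfrak{M}))$, then $\forall a \in b(\mathfrak{M}) \ (a \in A \iff s \models \phi[x \mapsto a])$ by the witness condition; combined with $A \subseteq b(\mathfrak{M})$, this forces $A = A_s$. Hence $A_s$ is the only value either unloading function can take at $s$, giving $u_1(s) = A_s = u_2(s)$. Since $s$ was arbitrary and $u_1, u_2$ are functions with common domain $S(\mathfrak{M})$, I conclude $u_1 = u_2$.

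I do not anticipate any real obstacle: the argument is a direct unfolding of the definition. The one point deserving attention is that the biconditional quantifies over all $a \in b(\mathfrak{M})$ whereas the output is constrained to be a subset of $b(\mathfrak{M})$; it is exactly this interplay that upgrades the stated constraint into a determination of the value. The companion statement for loading functions (Fact \ref{fact215}) admits an essentially identical proof, with the roles of states and subsets interchanged: there $\phi$ together with the input $A$ singles out at most one initial state whose $\phi$-extension equals $A$, so two loading functions with witness $\phi$ must coincide at every $A$.
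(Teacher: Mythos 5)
Your argument is correct and is exactly the direct unfolding of the definition that the paper has in mind when it declares this fact (and its companion, Fact \ref{fact215}) ``easy to see'' without supplying a proof. The key point you identify --- that the biconditional, quantified over all of $b(\mathfrak{M})$ together with the requirement $u(s) \in \mathcal{P}(b(\mathfrak{M}))$, pins down $u(s)$ as the set $A_s$ --- is precisely the intended observation.
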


Let $\mathfrak{M}$ be a SeqA, $\phi$ be a witness for a loading function of $\mathfrak{M}$, and $\mathfrak{T}(\mathfrak{M}) = (\emptyset, \sigma(\mathfrak{M}), \dot{\mathcal{U}}, \vartheta)$. Assume $|b(\mathfrak{M})| > 1$. 

Replace
\begin{itemize}
    \item each $n$-ary function symbol $\ulcorner f \urcorner \in \sigma(\mathfrak{M})$ with an $(n+1)$-ary relation symbol $\ulcorner R^f \urcorner$, and
    \item each constant symbol $\ulcorner c \urcorner \in \sigma(\mathfrak{M})$ with a unary relation symbol $\ulcorner R^c \urcorner$.
\end{itemize}
Make the same replacements in the definition of $\vartheta$. Fix distinct members of $b(\mathfrak{M})$, $x$ and $y$. Add fresh constant symbols $\ulcorner c \urcorner$, $\ulcorner 0 \urcorner$ and $\ulcorner 1 \urcorner$ and a fresh unary relation symbol $\ulcorner \mathrm{In} \urcorner$ to $\sigma(\mathfrak{M})$, and extend $\vartheta$ so that
\begin{gather*}
    \vartheta(\ulcorner c \urcorner) = (b(\mathfrak{M}), 0) \\
    \vartheta(\ulcorner 0 \urcorner) = (\{x\}, 1) \\
    \vartheta(\ulcorner 1 \urcorner) = (\{y\}, 1) \\
    \vartheta(\ulcorner \mathrm{In} \urcorner) = (b(\mathfrak{M}), 0) \text{.}
\end{gather*}

Now, there is a standard way to speak of functions and constants as if they are relations satisfying either functionality or constancy, wherever relevant. In this way, every first-order sentence over the original $\sigma(\mathfrak{M})$ can be translated into a semantically equivalent sentence over the new $\sigma(\mathfrak{M})$, if we identify each non-relational $\ulcorner X \urcorner$ in the original $\sigma(\mathfrak{M})$ with $\ulcorner R^X \urcorner$. Let $\theta$ be such a translation of $\phi^{\tau}_{\mathfrak{M}}$. Let $\psi$ denote a first-order sentence stating that 
\begin{itemize}
    \item ``$\ulcorner R^f \urcorner$ is a function'' for each function symbol $\ulcorner f \urcorner$ in the original $\sigma(\mathfrak{M})$, and
    \item ``$\ulcorner R^c \urcorner$ is a constant'' for each constant symbol $\ulcorner c \urcorner$ in the original $\sigma(\mathfrak{M})$.
\end{itemize}
Update $\phi^{\tau}_{\mathfrak{M}}$ as follows: 
\begin{align*}
    \phi^{\tau}_{\mathfrak{M}} := \ulcorner & (c_0 = 0_0 \implies (\psi_1 \wedge c_1 = 1_0 \wedge \forall x \ (\mathrm{In}_0 (x) \iff \phi_1 (x) \iff \mathrm{In}_1 (x)))) \\
    & \wedge (\neg c_0 = 0_0 \implies (c_1 = c_0 \wedge \forall x \ (\mathrm{In}_0 (x) \iff \mathrm{In}_1 (x)) \wedge \theta)) \urcorner \text{,}
\end{align*}
where $\psi_1$ and $\phi_1$ are the results of replacing every symbol $\ulcorner z \urcorner \in \sigma(\mathfrak{M})$ occurring in $\psi$ and $\phi$ respectively, with $\ulcorner z_1 \urcorner \in 2 \cdot \sigma(\mathfrak{M})$. 

Next, define 
\begin{equation*}
    \varphi := \bigwedge \{\ulcorner \forall x \ (\neg R(x)) \urcorner : \ulcorner R \urcorner \neq \ulcorner \mathrm{In} \urcorner \text{ and } \ulcorner R \urcorner \text{ is a relation symbol in } \sigma(\mathfrak{M})\} \text{.}
\end{equation*}
Then for every $A \in \mathcal{P}(b(\mathfrak{M}))$ there is a unique $s \in S(\mathfrak{M})$ such that 
\begin{equation*}
    \forall a \ (a \in b(\mathfrak{M}) \implies (a \in A \iff s \models \ulcorner \varphi \wedge c = 0 \wedge \mathrm{In}(x) \urcorner [x \mapsto a])) \text{.}
\end{equation*}
In other words, 
\begin{equation*}
    \phi_l := \ulcorner \varphi \wedge c = 0 \wedge \mathrm{In}(x) \urcorner
\end{equation*}
defines a function $l$ from $\mathcal{P}(b(\mathfrak{M}))$ into $S(\mathfrak{M})$. Resetting $I(\mathfrak{M}) := ran(l)$, we have that $\phi_l$ is a witness for a loading function of the new $\mathfrak{M}$. 

For the new $\mathfrak{M}$, the relation symbol $\ulcorner \mathrm{In} \urcorner$ morally interprets a generalised Turing tape of shape $(b(\mathfrak{M}); \sigma(\mathfrak{M}) \setminus \{\ulcorner \mathrm{In} \urcorner\}$. Each initial state has the input written on the tape, while the other components of the machine are set to default values. As such, variability of initial states occurs only on the tape, as is the case for any Turing machine. 

Identifying each non-relational $\ulcorner X \urcorner$ in the original $\sigma(\mathfrak{M})$ with $\ulcorner R^X \urcorner$, we can conclude that for every $A \subset b(\mathfrak{M})$, 
\begin{enumerate}[label=(\arabic*)]
    \item the original $\mathfrak{M}$ terminates on input $A$ iff the new $\mathfrak{M}$ terminates on input $A$, and
    \item if the original and the new $\mathfrak{M}$ terminate on input $A$ with runs $(s_i : i < m)$ and $(s'_i : i < n)$ respectively, then $n = m + 1$ and for each $i < m$, $s_i$ is a reduct of $s'_{i+1}$.
\end{enumerate} 
This means that our modifications to $\mathfrak{M}$ do not change its essence as an algorithm. There is thus no loss in computability resulting from a Turing-machine-like standardisation of loading functions. A similar (in fact, simpler) argument can be used to justify a Turing-machine-like standardisation of unloading functions.

In accordance to the discussion above, we want a SeqA $\mathfrak{M}$ to have a signature $\sigma(\mathfrak{M})$ containing two distinguished unary relation symbols $\ulcorner \mathrm{In} \urcorner$ and $\ulcorner \mathrm{Out} \urcorner$, representing the input and output tapes respectively. Moreover, we want the specification of $\mathfrak{M}$ to include a first-order sentence $\phi^d_{\mathfrak{M}}$ over $\sigma(\mathfrak{M}) \setminus \{\ulcorner \mathrm{In} \urcorner, \ulcorner \mathrm{Out} \urcorner\}$ that describes the default ``non-tape'' configuration of $\mathfrak{M}$. As in the case of transition functions, we want the loading function described by $\phi^d_{\mathfrak{M}}$ to be absolute for transitive models of $\mathsf{ZFC}$. To do so, we want to first impose a greater amount of structure on the states of a SeqA.

Within the context of Turing computability, a countable set $A$ is computable iff there is a computable subset of $\omega$ coding an isomorphic copy of $A$. By Proposition \ref{prop22}, every set can be similarly coded as a subset of a large enough limit ordinal. It makes sense then to simplify things by requiring the uniform base set of a SeqA $\mathfrak{M}$ to be a limit ordinal $\kappa(\mathfrak{M})$. To successfully decode a set of ordinals using the G\"{o}del pairing function, we also need the true membership relation to be a part of each state of $S(\mathfrak{M})$. In other words, $\ulcorner \in \urcorner \in \sigma(\mathfrak{M})$, and whenever $s \in S(\mathfrak{M})$, $s$ interprets $\ulcorner \in \urcorner$ as the true membership relation so that $s$ an expansion of the structure $(\kappa(\mathfrak{M}); \in)$.

Now we can formally state our requirements for $\phi^d_{\mathfrak{M}}$. Taking a leaf from Definition \ref{def220}, for each 
\begin{equation*}
    \ulcorner X \urcorner \in \sigma(\mathfrak{M}) \setminus \{\ulcorner \in \urcorner, \ulcorner \mathrm{In} \urcorner, \ulcorner \mathrm{Out} \urcorner\}
\end{equation*} there must be $\phi^{\ulcorner X \urcorner}$ and $\psi^{\ulcorner X \urcorner}$ such that
\begin{enumerate}[label=(\alph*)]
    \item $\phi^{\ulcorner X \urcorner}$ is a formula over $\{\ulcorner \in \urcorner\}$,
    \item $\psi^{\ulcorner X \urcorner}$ is a sentence over $\sigma(\mathfrak{M}) \setminus \{\ulcorner \mathrm{In} \urcorner, \ulcorner \mathrm{Out} \urcorner\}$,
    \item if $\ulcorner X \urcorner$ is an $n$-ary function symbol, then 
    \begin{itemize}[label=$\circ$, leftmargin=20pt]
        \item $\phi^{\ulcorner X \urcorner}$ contains $n + 1$ free variables, and
        \item for some variable symbols $y_1, \dots, y_{n+1}$,
        \begin{align*}
            \psi^{\ulcorner X \urcorner} = \ulcorner \forall y_1 \dots \forall y_{n+1} \ (X(y_1, \dots, y_n) = y_{n+1} \iff \phi^{\ulcorner X \urcorner}(y_1, \dots, y_{n+1})) \urcorner \text{,}
        \end{align*}
    \end{itemize}
    \item if $\ulcorner X \urcorner$ is an $n$-ary relation symbol, then 
    \begin{itemize}[label=$\circ$, leftmargin=20pt]
        \item $\phi^{\ulcorner X \urcorner}$ contains $n$ free variables, and
        \item for some variable symbols $y_1, \dots, y_n$,
        \begin{align*}
            \psi^{\ulcorner X \urcorner} = \ulcorner \forall y_1 \dots \forall y_n \ (X(y_1, \dots, y_n) \iff \phi^{\ulcorner X \urcorner}(y_1, \dots, y_n)) \urcorner \text{,}
        \end{align*}
    \end{itemize}
    \item if $\ulcorner X \urcorner$ is a constant symbol, then 
    \begin{itemize}[label=$\circ$, leftmargin=20pt]
        \item $\phi^{\ulcorner X \urcorner}$ contains $1$ free variable, and
        \item for some variable symbol $y$,
        \begin{align*}
            \psi^{\ulcorner X \urcorner} = \ulcorner \forall y \ (X = y \iff \phi^{\ulcorner X \urcorner}(y)) \urcorner \text{, and}
        \end{align*}
    \end{itemize}
    \item $\phi^d_{\mathfrak{M}} = \bigwedge \{\psi^{\ulcorner X \urcorner} : \ulcorner X \urcorner \in \sigma\}$.
\end{enumerate}
We call such $\phi^d_{\mathfrak{M}}$ \emph{simple for} $\sigma(\mathfrak{M})$, and $\phi^{\ulcorner X \urcorner}$ the \emph{witness for} $\ulcorner X \urcorner$ \emph{to} $\phi^d_{\mathfrak{M}}$ \emph{being simple for} $\sigma(\mathfrak{M})$ Additionally,
\begin{equation*}
    \ulcorner \phi^d_{\mathfrak{M}} \wedge \forall y \ (\neg \mathrm{Out}(y)) \wedge \mathrm{In}(x) \urcorner
\end{equation*}
must be a witness for a (necessarily unique, by Fact \ref{fact215}) loading function $l_{\mathfrak{M}}$ of $\mathfrak{M}$. We call $l_{\mathfrak{M}}$ the \emph{default loading function} of $\mathfrak{M}$. The \emph{default unloading function} $u_{\mathfrak{M}}$ of $\mathfrak{M}$ is the unique (by Fact \ref{fact216}) unloading function of $\mathfrak{M}$ for which $\ulcorner \mathrm{Out}(x) \urcorner$ is a witness.

Analogous to \ref{c2pr}, given $A \subset \kappa(\mathfrak{M})$, if there is $s$ satisfying
\begin{equation*}
    \forall a \in \kappa(\mathfrak{M}) \ (a \in A \iff s \models \ulcorner \phi^d_{\mathfrak{M}} \wedge \forall y \ (\neg \mathrm{Out}(y)) \wedge \mathrm{In}(a) \urcorner) \text{,}
\end{equation*}
then such $s$ is unique and absolute for transitive models of $\mathsf{ZFC}$ containing $A$ and $\kappa(\mathfrak{M})$. By an argument similar to that through which \ref{c4pr} is concluded, we can see that 
\begin{equation*}
    \forall A \subset \kappa(\mathfrak{M}) \ \exists ! s \in S(\mathfrak{M}) \ \forall a \in \kappa(\mathfrak{M}) \ (a \in A \iff s \models \ulcorner \phi^d_{\mathfrak{M}} \wedge \forall y \ (\neg \mathrm{Out}(y)) \wedge \mathrm{In}(a) \urcorner)
\end{equation*}
is absolute for transitive models of $\mathsf{ZFC}$ containing $\kappa(\mathfrak{M})$. It goes without saying that a unique $A$ satisfying
\begin{equation*}
    \forall a \in \kappa(\mathfrak{M}) \ (a \in A \iff s \models \ulcorner \mathrm{Out}(a) \urcorner)
\end{equation*}
exists for any given $s \in S(\mathfrak{M})$, and is absolute for transitive models of $\mathsf{ZFC}$ containing $s$ and $\kappa(\mathfrak{M})$. Likewise obvious is the absoluteness of the sentence
\begin{equation*}
    \forall s \in S(\mathfrak{M}) \ \exists ! A \subset \kappa(\mathfrak{M}) \ \forall a \in \kappa(\mathfrak{M}) \ (a \in A \iff s \models \ulcorner \mathrm{Out}(a) \urcorner)
\end{equation*}
for transitive models of $\mathsf{ZFC}$. As a consequence, our designated witnesses for the default loading and unloading functions must also be absolute for transitive models of $\mathsf{ZFC}$.

\subsection{Constraining the Interpretation Constraints} 

Now that we require $\ulcorner \in \urcorner$ to be a member of the signature of every SeqA, and for it to always be interpreted as the true membership relation, should there be further requirements on the interpretation constraints of $\mathfrak{T}(\mathfrak{M})$?

\begin{ex}\label{ex217}
Choose a limit ordinal $\kappa$ and $A \subset \kappa$. Let $\mathfrak{M}$ be a SeqA with 
\begin{gather*}
    \kappa(\mathfrak{M}) = \kappa \\
    \sigma(\mathfrak{M}) = \{\ulcorner \in \urcorner, \ulcorner \mathrm{In} \urcorner, \ulcorner \mathrm{Out} \urcorner, \ulcorner R \urcorner\}
\end{gather*} 
and associated TCI 
\begin{equation*}
    \mathfrak{T}(\mathfrak{M}) = (\emptyset, \sigma(\mathfrak{M}), \dot{\mathcal{U}}, \vartheta) \text{,}
\end{equation*} 
such that $\ulcorner \in \urcorner$, $\ulcorner \mathrm{In} \urcorner$ and $\ulcorner \mathrm{Out} \urcorner$ are as stipulated and $\ulcorner R \urcorner$ is a unary relation symbol. To wit, it must be the case that
\begin{gather*}
    \vartheta(\dot{\mathcal{U}}) = (\kappa, 1) \\
    \vartheta(\ulcorner \in \urcorner) = (\in \cap \ \kappa, 1) \\
    \vartheta(\ulcorner \mathrm{In} \urcorner) = (\kappa, 0) \\
    \vartheta(\ulcorner \mathrm{Out} \urcorner) = (\kappa, 0) \text{.}
\end{gather*}
Now suppose
\begin{gather*}
    \vartheta(\ulcorner R \urcorner) = (A, 1) \\
    \phi^d_{\mathfrak{M}} = \ulcorner \urcorner
\end{gather*}
and
\begin{align*}
    \phi^{\tau}_{\mathfrak{M}} = \ulcorner & (\forall x \ (R_0(x) \iff \mathrm{Out}_0(x)) \implies \\
    & \forall x \ ((\mathrm{In}_0(x) \iff \mathrm{In}_1(x)) \wedge (\mathrm{Out}_0(x) \iff \mathrm{Out}_1(x)))) \\
    & \wedge (\neg (\forall x \ (R_0(x) \iff \mathrm{Out}_0(x))) \implies \\
    & \forall x \ ((\mathrm{In}_0(x) \iff \mathrm{In}_1(x)) \wedge (R_0(x) \iff \mathrm{Out}_1(x)))) \urcorner \text{.}
\end{align*}
Then basically, $\mathfrak{M}$ is an algorithm that starts with $\ulcorner R \urcorner$ holding $A$, before copying $A$ onto the output tape. As it does this regardless of input, $\mathfrak{M}$ computes $A$ from any input, even if $A$ is a very complicated set by any other measure.
\end{ex}

In the usual input-output paradigm of computability where the output complexity should depend solely on the input complexity, Example \ref{ex217} cautions against hiding too much information in the states of a SeqA. One way to fix this is to enforce certain definability constraints on the interpretation constraints. 

Fix a SeqA $\mathfrak{M}$ with $\mathfrak{T}(\mathfrak{M}) = (\emptyset, \sigma(\mathfrak{M}), \dot{\mathcal{U}}, \vartheta)$. We want $A$ to be definable over the structure $(\kappa(\mathfrak{M}); \in)$ for all $(A, z) \in ran(\vartheta)$. In other words, if $(A, z) \in ran(\vartheta)$, then for some $n < \omega$ there is a formula $\varphi$ over $\{\ulcorner \in \urcorner\}$ with free variables $x_1, \dots, x_n$ such that 
\begin{equation*}
    A = \{(\alpha_1, \dots, \alpha_n) \in \kappa^n : (\kappa; \in) \models \varphi[x_1 \mapsto \alpha_n, \dots, x_n \mapsto \alpha_n]\} \text{.}
\end{equation*}
Due to the fact that every state of $\mathfrak{M}$ is an expansion of $(\kappa(\mathfrak{M}); \in)$, we can actually include the definition of any such $A$ in $\phi^{\tau}_{\mathfrak{M}}$ --- more specifically, in each $\phi^{\ulcorner X \urcorner}$ for which $\vartheta(\ulcorner X \urcorner) = (A, z)$. Doing this would allow all the ``programming'' to occur in the formulation of $\phi^{\tau}_{\mathfrak{M}}$; all symbols in $\sigma(\mathfrak{M})$ besides $\ulcorner \in \urcorner$ can simply be viewed as declarations of the ``programming variables'' we expect to occur in $\phi^{\tau}_{\mathfrak{M}}$. 

Justifiably, we shall stipulate that whenever $(A, z) \in ran(\vartheta)$, $z = 0$ and $A = \kappa^n$ for some $n < \omega$. We are now ready to formally define our modified version of a SeqA.

\begin{defi}\label{def224}
A \emph{generalised sequential algorithm} --- henceforth \textit{GSeqA} --- $\mathfrak{M}$ is fully specified by a limit ordinal $\kappa(\mathfrak{M})$, a finite signature $\sigma(\mathfrak{M})$, and sentences $\phi^{\tau}_{\mathfrak{M}}$ and $\phi^d_{\mathfrak{M}}$ such that
\begin{enumerate}[label=(D\arabic*)]
    \item $\{\ulcorner \in \urcorner, \ulcorner \mathrm{In} \urcorner, \ulcorner \mathrm{Out} \urcorner\} \subset \sigma(\mathfrak{M})$,
    \item there is a unique $\vartheta_{\mathfrak{M}} : (\sigma(\mathfrak{M}) \cup \{\dot{\mathcal{U}}\}) \longrightarrow V$ with the following properties:
    \begin{itemize}
        \item $\vartheta(\dot{\mathcal{U}}) = (\kappa(\mathfrak{M}), 1)$,
        \item $\vartheta(\ulcorner \in \urcorner) = (\in, 1)$,
        \item $\vartheta(\ulcorner \mathrm{In} \urcorner) = (\kappa(\mathfrak{M}), 0)$,
        \item $\vartheta(\ulcorner \mathrm{Out} \urcorner) = (\kappa(\mathfrak{M}), 0)$, and
        \item for all $\ulcorner X \urcorner \in \sigma(\mathfrak{M}) \setminus \{\ulcorner \in \urcorner, \ulcorner \mathrm{In} \urcorner, \ulcorner \mathrm{Out} \urcorner\}$ there is $n < \omega$ satisfying 
        \begin{equation*}
            \vartheta(\ulcorner X \urcorner) = (\kappa(\mathfrak{M})^n, 0) \text{,}
        \end{equation*}
    \end{itemize}
    \item there is a unique $\mathfrak{T}(\mathfrak{M})$ for which $\mathfrak{T}(\mathfrak{M}) = (\emptyset, \sigma(\mathfrak{M}), \dot{\mathcal{U}}, \vartheta_{\mathfrak{M}})$,
    \item there is a unique $S(\mathfrak{M})$ for which $S(\mathfrak{M}) = \{s : s \models^* \mathfrak{T}(\mathfrak{M})\}$,
    \item $\phi^{\tau}_{\mathfrak{M}}$ is bounded for $\sigma(\mathfrak{M})$, 
    \item for all $s_0 \in S(\mathfrak{M})$ there is a unique $s_1 \in S(\mathfrak{M})$ for which $(s_0, s_1) \models_2 \phi^{\tau}_{\mathfrak{M}}$,
    \item $\phi^d_{\mathfrak{M}}$ is simple for $\sigma(\mathfrak{M})$, and
    \item for all $A \subset \kappa(\mathfrak{M})$ there is a unique $s \in S(\mathfrak{M})$ for which
    \begin{equation*}
        \forall a \in \kappa(\mathfrak{M}) \ (a \in A \iff s \models \ulcorner \phi^d_{\mathfrak{M}} \wedge \forall y \ (\neg \mathrm{Out}(y)) \wedge \mathrm{In}(a) \urcorner) \text{.}
    \end{equation*}
\end{enumerate}
In this case, we say $\mathfrak{M} = (\kappa(\mathfrak{M}), \sigma(\mathfrak{M}), \phi^{\tau}_{\mathfrak{M}}, \phi^d_{\mathfrak{M}})$.
\end{defi}

\begin{rem}\label{rem225}
Fix any limit ordinal $\kappa$. Notice that there are only countably many GSeqAs $\mathfrak{M}$ satisfying $\kappa(\mathfrak{M}) = \kappa$. Indeed, this countable set of GSeqAs specifications is absolute for transitive models of $\mathsf{ZFC}$ containing $\kappa$.
\end{rem}

\begin{defi}
If $\mathfrak{M} = (\kappa(\mathfrak{M}), \sigma(\mathfrak{M}), \phi^{\tau}_{\mathfrak{M}}, \phi^d_{\mathfrak{M}})$ is a GSeqA, define 
\begin{enumerate}[label=(\arabic*)]
    \item $S(\mathfrak{M})$ as in Definition \ref{def224},
    \item $\tau_{\mathfrak{M}} := \{(s_0, s_1) \in S(\mathfrak{M})^2 : (s_0, s_1) \models_2 \phi^{\tau}_{\mathfrak{M}}\}$,
    \item 
    \!
    $\begin{aligned}[t]
       l_{\mathfrak{M}} := \{ & (A, s) \in \mathcal{P}(\kappa(\mathfrak{M})) \times S(\mathfrak{M}) : \\
       & \forall a \in \kappa(\mathfrak{M}) \ (a \in A \iff s \models \ulcorner \phi^d_{\mathfrak{M}} \wedge \forall y \ (\neg \mathrm{Out}(y)) \wedge \mathrm{In}(a) \urcorner)\} 
    \end{aligned}$
    \item $I(\mathfrak{M}) := ran(l_{\mathfrak{M}})$
    \item 
    \!
    $\begin{aligned}[t]
        u_{\mathfrak{M}} := \{ & (s, A) \in S(\mathfrak{M}) \times \mathcal{P}(\kappa(\mathfrak{M})) : \\
        & \forall a \in \kappa(\mathfrak{M}) \ (a \in A \iff s \models \ulcorner \mathrm{Out}(a) \urcorner)\} \text{.}
    \end{aligned}$
\end{enumerate}
\end{defi}

The next proposition highlights the important properties of GSeqAs, based on what we have discussed so far.

\begin{prop}\label{prop226}
Let $\mathfrak{M}$ be a GSeqA. Then the following are absolute for transitive models of $\mathsf{ZFC}$ containing $\mathfrak{T}(\mathfrak{M})$:
\begin{enumerate}[label=(\arabic*)]
    \item the definition of $S(\mathfrak{M})$,
    \item the definition of $\tau_{\mathfrak{M}}$ and $\tau_{\mathfrak{M}}$ being a function from $S(\mathfrak{M})$ into $S(\mathfrak{M})$,
    \item the definition of $l_{\mathfrak{M}}$ and $l_{\mathfrak{M}}$ being a function from $\mathcal{P}(\kappa(\mathfrak{M}))$ into $S(\mathfrak{M})$,
    \item the definition of $I(\mathfrak{M})$,
    \item the definition of $u_{\mathfrak{M}}$ and $u_{\mathfrak{M}}$ being a function from $S(\mathfrak{M})$ into $\mathcal{P}(\kappa(\mathfrak{M}))$. 
\end{enumerate}
\end{prop}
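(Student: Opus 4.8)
The plan is to treat this as a consolidation result: each of the five items asserts that a certain defining formula is an \emph{absolute definition} (in the sense introduced just before the proposition), and in every case the relevant formula is assembled from building blocks whose absoluteness has already been established --- chiefly points \ref{c2pr}, \ref{c3pr} and \ref{c4pr} on the transition side, and their loading/unloading analogues from Subsection \ref{ss250}. Throughout I would lean on two standard absoluteness facts: first, that $\Delta_0$ formulas are absolute for transitive models of $\mathsf{ZF}$ and that bounded quantification over an absolutely-defined set preserves absoluteness; second, that ordinary Tarski satisfaction $s \models \phi$ (and its variant $\models_2$ on a combined structure), for a set-sized structure and a first-order $\phi$, is absolute for transitive models of $\mathsf{ZF}$ containing that structure.

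For item (1) I would unwind $s \models^* \mathfrak{T}(\mathfrak{M})$ for TCIs of the special shape dictated by Definition \ref{def224}: the base set is forced to be $\kappa(\mathfrak{M})$, the symbol $\ulcorner \in \urcorner$ is interpreted as true membership on $\kappa(\mathfrak{M})$, and every other symbol is interpreted by a subset of a fixed finite power $\kappa(\mathfrak{M})^n$. Clause by clause, the conditions defining $\models^*$ then become $\Delta_0$ assertions in the parameters $\kappa(\mathfrak{M})$ and $\sigma(\mathfrak{M})$, with all quantifiers bounded by finite powers of $\kappa(\mathfrak{M})$; hence the predicate ``$s \models^* \mathfrak{T}(\mathfrak{M})$'' absolutely defines $S(\mathfrak{M})$. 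For item (2), the defining formula of $\tau_{\mathfrak{M}}$ is the conjunction of ``$s_0, s_1 \in S(\mathfrak{M})$'' --- absolute by item (1) --- with ``$(s_0, s_1) \models_2 \phi^{\tau}_{\mathfrak{M}}$'', which is absolute by \ref{c2pr} (equivalently, because $\models_2$ is a form of Tarski satisfaction on the combined structure). That $\tau_{\mathfrak{M}}$ is moreover a total function from $S(\mathfrak{M})$ into $S(\mathfrak{M})$ is exactly the statement shown absolute in \ref{c4pr}, which Definition \ref{def224} asserts to hold in $V$; so it holds in every transitive model of $\mathsf{ZFC}$ containing $\mathfrak{T}(\mathfrak{M})$.

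Items (3), (4) and (5) are handled on the loading/unloading side. For (3) and (5) I would invoke the absoluteness analysis already carried out for the default loading and unloading functions in Subsection \ref{ss250}: satisfaction of the simple sentence built from $\phi^d_{\mathfrak{M}}$ (respectively of $\ulcorner \mathrm{Out}(a) \urcorner$) is absolute, the witnessing state (respectively witnessing set) is unique and absolutely determined, and totality is supplied by Definition \ref{def224} (trivially so in the unloading case). For (4), I would observe that the range of $l_{\mathfrak{M}}$ coincides with the first-order-definable subclass
\[
    I(\mathfrak{M}) = \{s \in S(\mathfrak{M}) : s \models \ulcorner \phi^d_{\mathfrak{M}} \wedge \forall y \ (\neg \mathrm{Out}(y)) \urcorner\}
\]
(the input tape being left unconstrained), so its definition is absolute by item (1) together with the absoluteness of satisfaction.

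The delicate point --- and the one I expect to need the most care --- is the appearance of $\mathcal{P}(\kappa(\mathfrak{M}))$ as the domain of $l_{\mathfrak{M}}$ in item (3), and implicitly in (4). Since the power set operation is the archetypal non-absolute operation, two transitive models $A \subset B$ of $\mathsf{ZFC}$ containing $\mathfrak{T}(\mathfrak{M})$ may disagree about which subsets of $\kappa(\mathfrak{M})$ exist, while the assertion ``$l_{\mathfrak{M}}$ is a function from $\mathcal{P}(\kappa(\mathfrak{M}))$ into $S(\mathfrak{M})$'' quantifies over each model's own power set. The resolution is that for every $A \subset \kappa(\mathfrak{M})$ --- in whichever model it lives --- the corresponding initial state is built explicitly and absolutely from $A$: the non-tape symbols receive their simple $\phi^d_{\mathfrak{M}}$-definitions over $(\kappa(\mathfrak{M}); \in)$, the symbol $\ulcorner \mathrm{Out} \urcorner$ is interpreted as empty, and $\ulcorner \mathrm{In} \urcorner$ as $A$ itself. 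Consequently existence, uniqueness, and membership in $S(\mathfrak{M})$ of the witness hold uniformly and absolutely as functions of $A$, so the functionality-and-totality assertion, once relativised to each model's own $\mathcal{P}(\kappa(\mathfrak{M}))$, is preserved between $A$ and $B$; the same uniformity underlies the agreement of $ran(l_{\mathfrak{M}})$ across models in item (4).
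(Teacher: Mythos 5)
Your proposal is correct and follows the same route the paper intends: the paper states Proposition \ref{prop226} without a separate proof, presenting it as a consolidation of the absoluteness observations already made --- points \ref{c2pr}--\ref{c4pr} for the transition function and the parallel analysis of the default loading and unloading functions in Subsection \ref{ss250} --- and your write-up assembles exactly those ingredients, with item (1) reduced to the $\Delta_0$ character of $\models^*$ for TCIs of the shape forced by Definition \ref{def224}. Your explicit treatment of the non-absoluteness of $\mathcal{P}(\kappa(\mathfrak{M}))$, resolved by noting that the initial state is built uniformly and absolutely from each individual $A \subset \kappa(\mathfrak{M})$, is a point the paper leaves implicit but is handled correctly.
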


Now that GSeqAs have been officially defined, it is imperative to revisit Example \ref{ex212}, for any notion of generalised algorithm that cannot represent an arbitrary Turing cannot be too useful.

\begin{ex}[\ref{ex212} revisited]\label{ex228}
Let us describe changes to the $\mathfrak{M}$ defined in Example \ref{ex212} that would qualify it as a GSeqA. First, an observation: there are formulas $\phi_P$ and $\phi_S$ over $\ulcorner \in \urcorner$, each with two free variables $x$ and $y$, such that for all $s \in S(\mathfrak{M})$,
\begin{equation*}
    s \models \forall x \ \forall y \ ((\mathrm{Pre}(x) = y \iff \phi_P(x, y)) \wedge (\mathrm{Suc}(x) = y \iff \phi_S(x, y))) \text{.}
\end{equation*}
Since the definition of the functions $\mathrm{Pre}$ and $\mathrm{Suc}$ are absolute for transitive models of $\mathsf{ZFC}$, we can omit $\ulcorner \mathrm{Pre} \urcorner$ and $\ulcorner \mathrm{Suc} \urcorner$ from $\sigma(\mathrm{M})$.

Next, we shall replace $\ulcorner R \urcorner$ with $\ulcorner \mathrm{Out} \urcorner$ while adding $\ulcorner \mathrm{In} \urcorner$ to $\sigma(\mathrm{M})$. We also want to have another constant symbol $\ulcorner e \urcorner$ in $\sigma(\mathrm{M})$. Set 
\begin{gather*}
    \vartheta_{\mathfrak{M}}(\ulcorner e \urcorner) := (\omega, 0) \\
    \phi^d_{\mathfrak{M}} := \ulcorner h = \langle 0 \rangle \wedge t = \langle 0 \rangle \wedge e = \langle 0 \rangle \urcorner \text{.}
\end{gather*}

We are left to define the $\phi^{\ulcorner X \urcorner}$s for $\ulcorner X \urcorner \in \sigma(\mathfrak{M})$. The only non-trivial ones are for $\ulcorner X \urcorner \in \{\ulcorner h \urcorner, \ulcorner t \urcorner, \ulcorner e \urcorner, \ulcorner \mathrm{In} \urcorner, \ulcorner \mathrm{Out} \urcorner\}$. So let
\begin{align*}
    \phi^{\ulcorner h \urcorner}(x) := \ulcorner & (e_0 = \langle 0 \rangle \wedge h_0 = x) \ \vee \\
    & \bigvee \{(\neg e_0 = \langle 0 \rangle) \wedge q(\mathrm{Out}_0(h_0), k) \wedge t_0 \! = \! \langle j \rangle \wedge x = \mathrm{Pre}(h_0) : \\
    & \exists x \ \exists y \ (\delta((c_j, k)) = (x, y, \mathrm{left}))\} \ \vee \\ 
    & \bigvee \{(\neg e_0 = \langle 0 \rangle) \wedge q(\mathrm{Out}_0(h_0), k) \wedge t_0 \! = \! \langle j \rangle \wedge x = \mathrm{Suc}(h_0) : \\
    & \exists x \ \exists y \ (\delta((c_j, k)) = (x, y, \mathrm{right}))\} \urcorner
\end{align*}
\begin{align*}
    \phi^{\ulcorner t \urcorner}(x) := \ulcorner & (e_0 = \langle 0 \rangle \wedge t_0 = x) \ \vee \\
    & \bigvee \{(\neg e_0 = \langle 0 \rangle) \wedge q(\mathrm{Out}_0(h_0), k) \wedge t_0 \! = \! \langle j \rangle \wedge x = \langle j' \rangle : \\
    & \exists x \ \exists y \ (\delta((c_j, k)) = (c_{j'}, x, y))\} \urcorner \\
    \phi^{\ulcorner e \urcorner}(x) := \ulcorner & (e_0 = \langle 0 \rangle \wedge x = \langle 1 \rangle) \vee ((\neg e_0 = \langle 0 \rangle) \wedge x = e_0) \urcorner \\
    \phi^{\ulcorner \mathrm{In} \urcorner}(x) := \ulcorner & \mathrm{In}_0(x) \urcorner \\
    \phi^{\ulcorner \mathrm{Out} \urcorner}(x) := \ulcorner & (e_0 = \langle 0 \rangle \wedge \mathrm{In}_0(x)) \vee ((\neg e_0 = \langle 0 \rangle) \wedge (\neg x = h_0) \wedge \mathrm{Out}_0(x)) \ \vee \\
    & \bigvee \{(\neg e_0 = \langle 0 \rangle) \wedge q(\mathrm{Out}_0(h_0), k) \wedge t_0 \! = \! \langle j \rangle \wedge x = h_0 \wedge (\neg x = x) : \\
    & \exists x \ \exists y \ (\delta((c_j, k)) = (x, 0, y))\} \ \vee \\
    & \bigvee \{(\neg e_0 = \langle 0 \rangle) \wedge q(\mathrm{Out}_0(h_0), k) \wedge t_0 \! = \! \langle j \rangle \wedge x = h_0 \wedge x = x : \\
    & \exists x \ \exists y \ (\delta((c_j, k)) = (x, 1, y))\} \urcorner \text{,}
\end{align*}
where $x = \mathrm{Pre}(h_0)$ and $x = \mathrm{Suc}(h_0)$ are shorthand notations for semantically equivalent formulas over $\{\ulcorner \in_0 \urcorner, \ulcorner h_0 \urcorner\}$. 

A routine check would reveal that the aforementioned details are sufficient to specify $\mathfrak{M}$ as a GSeqA, and that they allow $\mathfrak{M}$ to accurately simulate the Turing machine $M$. 
\end{ex}

From the next subsection onwards, the construction of GSeqAs used in proofs and arguments can be too cumbersome to formally spell out. In such cases, only high-level details will be given. The reader should always try to convince themselves that those details can be properly realised by GSeqAs.

\subsection{Transfinite Runs}\label{ss270}

Consider a finite terminating run $\Vec{s} = (s_1, \dots, s_n)$ of a GSeqA $\mathfrak{M}$. Then by combining witnesses to $\phi^{\tau}_{\mathfrak{M}}$ being bounded for $\sigma(\mathfrak{M})$, we can conjure witnesses to some sentence $\phi$ being bounded for $\sigma(\mathfrak{M})$, such that using $\phi$ in place of $\phi^{\tau}_{\mathfrak{M}}$ in the definition of $\mathfrak{M}$ gives us a GSeqA $\mathfrak{M}'$ with $\tau_{\mathfrak{M}'}(s_1) = s_n$. Therefore, as far as computability is concerned, allowing finite steps of arbitrary sizes in computations is no more powerful than allowing only one-step computations. This goes to show that
GSeqAs are in fact built for runs of infinite lengths.

Given an infinite ordinal $\delta$, what does it mean to have a GSeqA computation of length $\delta$? Naturally, every state of the computation must depend on the states prior. For successor ordinals $\alpha < \delta$, the $\alpha$-th state of the computation depends solely on its unique previous state, according to the transition function of the GSeqA. But what about limit ordinals less than $\delta$? How do we describe the states associated with these ordinals in terms of states occurring in the computation before them? This is equivalent to asking how we are going to process the limit of a sequence of states.

Fix a GSeqA $\mathfrak{M}$, a limit ordinal $\delta$, and let $Sq$ be a sequence of states of $\mathfrak{M}$. We want to have a uniform definition of $lim(Sq) \in S(\mathfrak{M})$. One natural way to do so is to take ordinal limits point-wise (according to the order topology on $\kappa(\mathfrak{M})$ induced by $\in$), while treating relations as boolean functions. The obvious hindrance to this method is, the limit of a sequence of ordinals does not always exist. However, for various purposes, there are suitable ersatzes.

\begin{defi}
Let $T$ be a function from some ordinal into $ORD$. Then $lim \ inf (T)$, $lim \ sup (T)$ and $lim(T)$ refer respectively to the limit inferior, the limit superior and the limit of $T$ as a sequence, with respect to the order topology induced on $\kappa(\mathfrak{M})$ by $\in$.
\end{defi}

\begin{fact}
If $T$ is a function from some ordinal into $ORD$, then $lim \ inf (T)$ and $lim \ sup (T)$ always exist.
\end{fact}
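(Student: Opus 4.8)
The plan is to sidestep the topology and establish existence through the standard order-theoretic description of the limit inferior and limit superior as iterated suprema and infima over tails (which, for the order topology on a well-ordered class, agrees with the topological notions). This reduces the entire statement to two elementary facts about the ordinals: every \emph{set} of ordinals has a supremum, namely its union; and every nonempty set of ordinals has a least element, by well-foundedness of $\in$. Since neither fact holds for proper classes --- the class $ORD$ itself has no supremum --- the only point requiring genuine care is to confirm that every collection over which a supremum is taken is in fact a set.

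First I would fix $T : \gamma \longrightarrow ORD$ and, for each $\beta < \gamma$, form the tail $A_\beta := \{T(\alpha) : \beta \le \alpha < \gamma\}$. As $A_\beta$ is the image of the set $[\beta, \gamma)$ under the function $T$, Replacement guarantees that $A_\beta$ is a set; moreover $A_\beta \neq \emptyset$ whenever $\beta < \gamma$, since $T(\beta) \in A_\beta$. Hence $\sup A_\beta = \bigcup A_\beta$ is an ordinal and $\min A_\beta$ exists by well-foundedness. I would then set
\begin{equation*}
    lim \ sup (T) := \min\{\, \sup A_\beta : \beta < \gamma \,\} \text{,} \qquad lim \ inf (T) := \sup\{\, \min A_\beta : \beta < \gamma \,\} \text{.}
\end{equation*}
Invoking Replacement a second time, the two index collections $\{\sup A_\beta : \beta < \gamma\}$ and $\{\min A_\beta : \beta < \gamma\}$ are themselves sets of ordinals; so the outer $\min$ exists (the collection being nonempty as soon as $\gamma > 0$) and the outer $\sup$ exists as an ordinal. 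This exhibits both quantities as bona fide ordinals, which is exactly the required existence.

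It remains only to address the degenerate index shapes. If $\gamma$ is a successor $\mu + 1$, then every tail contains $\mu$ and $A_\mu = \{T(\mu)\}$, so the two expressions collapse to $lim \ sup (T) = lim \ inf (T) = T(\mu)$; if $\gamma = 0$ the sequence is empty and the values are fixed by convention, though this case is vacuous in our applications, where $\gamma$ is always a limit. The substantive case is $\gamma$ a limit ordinal, yet even there no new difficulty appears: existence never invokes monotonicity of the tail functions $\beta \mapsto \sup A_\beta$ and $\beta \mapsto \min A_\beta$, but only the fact that each relevant family is a set. I therefore expect the main --- indeed the sole --- obstacle to be purely set-theoretic bookkeeping, namely ensuring at each of the two nested stages that we take the supremum of a set and not of the proper class $ORD$; this gap is closed cleanly and uniformly by Replacement.
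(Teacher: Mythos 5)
Your argument is correct: reading $lim \ inf$ and $lim \ sup$ order-theoretically as $\sup_{\beta<\gamma}\min A_\beta$ and $\min_{\beta<\gamma}\sup A_\beta$ over tails, and using Replacement plus the facts that every set of ordinals has a supremum (its union) and every nonempty set of ordinals has a least element, is exactly the standard justification. The paper states this as an unproved Fact, so there is no proof to compare against; your write-up supplies the routine verification the paper implicitly relies on, and your identification of the set-versus-proper-class issue as the only point needing care is the right emphasis.
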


\begin{fact}\label{fact230}
If $T$ is a function from $\delta < \kappa(\mathfrak{M})$ into $ORD$ and
\begin{equation*}
    sup \{T(\alpha) : \alpha \in \delta\} < \kappa(\mathfrak{M}) \text{,}
\end{equation*}
then $lim \ inf (T)$ and $lim \ sup (T)$ always exist in $\kappa(\mathfrak{M})$.
\end{fact}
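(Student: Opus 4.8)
The plan is to reduce the statement to the explicit order-theoretic formulas for the limit inferior and limit superior and then bound both by the hypothesised supremum. Write $S := \sup\{T(\alpha) : \alpha \in \delta\}$, so that the hypothesis reads $S < \kappa(\mathfrak{M})$. First I would recall that for an ordinal-valued sequence the limit inferior and limit superior admit the descriptions
$$\limsup(T) = \inf_{\beta < \delta} \sup\{T(\alpha) : \beta \le \alpha < \delta\}, \qquad \liminf(T) = \sup_{\beta < \delta} \inf\{T(\alpha) : \beta \le \alpha < \delta\},$$
which agree with the notions defined through the order topology. The preceding fact already guarantees that both quantities exist as ordinals, so it suffices to show that each is an element of $\kappa(\mathfrak{M})$, i.e. strictly less than $\kappa(\mathfrak{M})$.

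The bounding step is then immediate by monotonicity of the supremum. For every $\beta < \delta$ the tail $\{T(\alpha) : \beta \le \alpha < \delta\}$ is a subset of $\{T(\alpha) : \alpha \in \delta\}$, so its supremum is at most $S$; taking the infimum over $\beta$ yields $\limsup(T) \le S$. Likewise, each tail infimum $\inf\{T(\alpha) : \beta \le \alpha < \delta\}$ is bounded above by the single value $T(\beta) \le S$, so the supremum of the tail infima satisfies $\liminf(T) \le S$. Since $\kappa(\mathfrak{M})$ is an ordinal and $S < \kappa(\mathfrak{M})$, every ordinal at most $S$ lies in $\kappa(\mathfrak{M})$; hence $\liminf(T), \limsup(T) \in \kappa(\mathfrak{M})$, as required.

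There is no genuinely hard step here: once the formulas are in hand the argument is a one-line bound. The only point demanding a little care is the identification of the topological $\liminf$ and $\limsup$ with the sup--inf formulas above, so that ``exists in $\kappa(\mathfrak{M})$'' is correctly read as ``is an element of (equivalently, is less than) $\kappa(\mathfrak{M})$''. This identification is precisely what the preceding fact underwrites, so I would cite it rather than reprove it, and would only flag the trivial vacuous case $\delta = 0$ (where the outer infimum or supremum ranges over no $\beta$) if completeness demanded it.
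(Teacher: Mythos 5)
Your argument is correct and is exactly the standard bound the paper leaves implicit: Fact \ref{fact230} is stated without proof, and the intended justification is precisely that the tail suprema and tail infima are all dominated by $\sup\{T(\alpha) : \alpha \in \delta\}$, so the resulting $\liminf$ and $\limsup$ are ordinals $\leq$ that supremum and hence elements of $\kappa(\mathfrak{M})$. Your handling of the identification with the order-topological notions and the vacuous $\delta = 0$ case is appropriately careful but does not change the substance.
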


Point-wise $lim \ inf$ and $lim \ sup$ operations are commonly used to define the limit stages of transfinite computations. Hamkins and Lewis, in \cite{ittm}, defined each cell in the tape of an infinite time Turing machine to take the $lim \ sup$ of its previous values. Koepke, in his definition of ordinal Turing machines (\cite{koepke1} and \cite{koepke2}), instead require each cell of the machine tape to take the $lim \ inf$ of its previous values. Point-wise $lim \ inf$ and $lim \ sup$ always exist in these generalised Turing machines, for computations shorter than the lengths of their tapes. This is because single-step movements of these machines are always small, and consequently the hypothesis of Fact \ref{fact230} applies.

Hereon, we shall identify relations with boolean functions and constants with $0$-ary functions wherever convenient, in part so that only function symbols need to be considered in definitions and case analyses. 

\begin{defi}
Let $\mathfrak{M}$ be a GSeqA and $Sq$ be a function from some ordinal into $S(\mathfrak{M})$. Given $\ulcorner X \urcorner \in \sigma(\mathfrak{M})$ an $n$-ary function symbol, denote by $\ulcorner R^{\ulcorner X \urcorner} \urcorner$ a distinguished $(n+2)$-ary relation symbol and define the relation 
\begin{equation*}
    R_{Sq}^{\ulcorner X \urcorner} := \{(\alpha, \beta_1, \dots, \beta_n, \beta) \in dom(Sq) \times \kappa(\mathfrak{M})^{n+1} : X^{Sq(\alpha)} (\beta_1, \dots, \beta_n) = \beta\} \text{.}
\end{equation*}
\end{defi}

\begin{defi}
Let $\mathfrak{M}$ be a GSeqA. Given $\ulcorner X \urcorner \in \sigma(\mathfrak{M})$ an $n$-ary function symbol, we say $\phi$ is a \emph{simple limit formula for} $(\ulcorner X \urcorner, \mathfrak{M})$ iff $\phi$ is a formula over $\{\ulcorner \in \urcorner, \ulcorner R^{\ulcorner X \urcorner} \urcorner\}$ with free variables $x_1, \dots, x_{n+1}$.
\end{defi}

\begin{defi}\label{def234c}
Let $\mathfrak{M}$ be a GSeqA. Given $\ulcorner X \urcorner \in \sigma(\mathfrak{M})$ an $n$-ary function symbol and $\phi$ a \emph{simple limit formula for} $(\ulcorner X \urcorner, \mathfrak{M})$, we say $\phi$ \emph{preserves ordinal limits} iff interpreting $\ulcorner \in \urcorner$ as $\in$ and $\ulcorner R^{\ulcorner X \urcorner} \urcorner$ as $R_{Sq}^{\ulcorner X \urcorner}$, 
\begin{equation*}
    (\kappa(\mathfrak{M}); \in, R_{Sq}^{\ulcorner X \urcorner}) \models \phi[x_1 \mapsto \alpha_1, \dots, x_{n+1} \mapsto \alpha_{n+1}]
\end{equation*}
for all functions $Sq$ from a limit ordinal $< \kappa(\mathfrak{M})$ into $S(\mathfrak{M})$ and all $(\alpha_1, \dots, \alpha_{n+1}) \in \kappa(\mathfrak{M})^{n+1}$ such that 
\begin{equation*}
    \alpha_{n+1} = lim \{(\beta, \ulcorner X \urcorner^{Sq(\beta)}(\alpha_1, \dots, \alpha_n)) : \beta < dom(Sq)\} \text{.}
\end{equation*}
\end{defi}

\begin{defi}\label{def234}
Let $\mathfrak{M}$ be a GSeqA and $Sq$ be a function from a limit ordinal $< \kappa(\mathfrak{M})$ into $S(\mathfrak{M})$. A state $s \in S(\mathfrak{M})$ is a \emph{simple limit of} $Sq$ iff for each $n$-ary function symbol $\ulcorner X \urcorner \in \sigma(\mathfrak{M})$ there is a simple limit formula $\phi_{lim}^{\ulcorner X \urcorner}$ for $(\ulcorner X \urcorner, \mathfrak{M})$ such that, interpreting $\ulcorner \in \urcorner$ as $\in$ and $\ulcorner R^{\ulcorner X \urcorner} \urcorner$ as $R_{Sq}^{\ulcorner X \urcorner}$, 
\begin{equation*}
    \ulcorner X \urcorner^s (\alpha_1, \dots, \alpha_n) = \alpha_{n+1} \iff (\kappa(\mathfrak{M}); \in, R_{Sq}^{\ulcorner X \urcorner}) \models \phi[x_1 \mapsto \alpha_1, \dots, x_{n+1} \mapsto \alpha_{n+1}]
\end{equation*}
whenever $(\alpha_1, \dots, \alpha_{n+1}) \in \kappa(\mathfrak{M})^{n+1}$. In this case, we say $\{\phi_{lim}^{\ulcorner X \urcorner} : \ulcorner X \urcorner \in \sigma(\mathfrak{M})\}$ \emph{witnesses} $s$ \emph{is a simple limit of} $Sq$.
\end{defi}

It is immediate from Definition \ref{def234} and our assumption that $\sigma(\mathfrak{M})$ contains only function symbols, that a simple limit of a $< \kappa(\mathfrak{M})$-length ordinal sequence of states of $\mathfrak{M}$ must be unique if it exists. Furthermore, for any such sequence $Sq$ and any set $Y$, whether $Y$ witnesses the existence of a simple limit of $Sq$, as well as the identity of said limit if it exists, is absolute for transitive models of $\mathsf{ZFC}$ containing $Sq$.

\begin{prop}\label{prop233}
Let $\mathfrak{M}$ be a GSeqA. For each $n$-ary function symbol $\ulcorner X \urcorner \in \sigma(\mathfrak{M})$ there is simple limit formula $\phi_{lim}^{\ulcorner X \urcorner}$ for $(\ulcorner X \urcorner, \mathfrak{M})$ that preserves ordinal limits, such that whenever $Sq$ is a function from $\delta < \kappa(\mathfrak{M})$ into $S(\mathfrak{M})$, $\{\phi_{lim}^{\ulcorner X \urcorner} : \ulcorner X \urcorner \in \sigma(\mathfrak{M})\}$ witnesses the existence of a simple limit $s$ of $Sq$ satisfying 
\begin{gather*}
    \ulcorner X \urcorner^{s} (\alpha_1, ..., \alpha_n) = 
    \begin{cases}
        lim \ inf \{(\beta, \ulcorner X \urcorner^{Sq(\beta)}(\alpha_1, ..., \alpha_n)) : \beta < \delta\} & \!\!\!\! \text{if it exists} < \kappa(\mathfrak{M}) \\
        0 & \!\!\!\! \text{otherwise}
    \end{cases} 
\end{gather*}
for every $n$-ary function symbol $\ulcorner X \urcorner \in \sigma(\mathfrak{M})$.
\end{prop}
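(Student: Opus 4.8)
The plan is to produce, for each $n$-ary function symbol $\ulcorner X \urcorner$, a single formula over $\{\ulcorner \in \urcorner, \ulcorner R^{\ulcorner X \urcorner} \urcorner\}$ in the free variables $x_1, \dots, x_{n+1}$ that defines the pointwise $\liminf$ of the stage-indexed sequence and collapses the exceptional case to $0$. Since $\liminf T = \lim T$ whenever the order-topology limit $\lim T$ exists, any formula that correctly defines $\liminf$ automatically preserves ordinal limits in the sense of Definition \ref{def234c}; so it suffices to treat $\liminf$. I work inside $(\kappa(\mathfrak{M}); \in, R_{Sq}^{\ulcorner X \urcorner})$ with $x_1, \dots, x_n$ held fixed as the evaluation point, abbreviating by $V(\xi, v)$ the atomic formula $R^{\ulcorner X \urcorner}(\xi, x_1, \dots, x_n, v)$ (``$\ulcorner X \urcorner$ has value $v$ at stage $\xi$'') and by $\mathrm{St}(\xi)$ the formula $\exists v\, V(\xi, v)$. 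Because each $Sq(\xi)$ is a total structure, $\mathrm{St}(\xi)$ holds precisely for $\xi \in dom(Sq) = \delta$, so the otherwise unnamed domain $\delta$ is available as ``$\mathrm{St}(\cdot)$''; this one observation is what lets me phrase ``eventually'' and ``cofinally'' first-order.

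The crux is a first-order characterization of $\liminf$. Writing $T(\xi) = \ulcorner X \urcorner^{Sq(\xi)}(x_1, \dots, x_n)$ and using $\liminf T = \sup_{\eta} \inf_{\eta \le \xi < \delta} T(\xi)$ together with the elementary fact that, for ordinals, $\liminf T > \zeta$ iff $T$ is eventually $> \zeta$, one checks that an ordinal $\gamma$ equals $\liminf T$ exactly when (i) for every $\zeta < \gamma$ the sequence is eventually above $\zeta$, and (ii) the sequence is cofinally at most $\gamma$. I render (i) as $\forall \zeta \in y\, \exists \eta (\mathrm{St}(\eta) \wedge \forall \xi ((\mathrm{St}(\xi) \wedge \eta \subseteq \xi) \to \exists v (V(\xi, v) \wedge \zeta \in v)))$ and (ii) as $\forall \eta (\mathrm{St}(\eta) \to \exists \xi (\mathrm{St}(\xi) \wedge \eta \subseteq \xi \wedge \exists v (V(\xi, v) \wedge \neg (y \in v))))$, where $\eta \subseteq \xi$ abbreviates $\neg (\xi \in \eta)$ and $\zeta \in v$, $y \in v$ encode the ordinal order; let $\mathrm{IsLim}(y)$ be their conjunction. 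Since every quantifier ranges over $\kappa(\mathfrak{M})$ while all relevant witnesses ($\zeta < y$, stages $< \delta$, values $< \kappa(\mathfrak{M})$) already live in $\kappa(\mathfrak{M})$, the internal reading of $\mathrm{IsLim}(y)$ coincides with the external one; hence $\mathrm{IsLim}(y)$ holds of exactly one $y < \kappa(\mathfrak{M})$, namely $\liminf T$, when $\liminf T < \kappa(\mathfrak{M})$, and of no such $y$ when $\liminf T \ge \kappa(\mathfrak{M})$.

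I then set $\phi_{lim}^{\ulcorner X \urcorner} := \mathrm{IsLim}(x_{n+1}) \vee (\neg \exists y\, \mathrm{IsLim}(y) \wedge \forall z\, \neg (z \in x_{n+1}))$, the final conjunct asserting $x_{n+1} = 0$. This is a simple limit formula (correct signature and free variables), and a short case split yields the defining equation of the proposition: if $\liminf T < \kappa(\mathfrak{M})$ the left disjunct forces $x_{n+1} = \liminf T$ while the right is false, and if $\liminf T \ge \kappa(\mathfrak{M})$ no $y$ meets $\mathrm{IsLim}$, so the formula forces $x_{n+1} = 0$; in both cases the witnessing $x_{n+1}$ is unique, so the recipe does define a structure $s$, and $\{\phi_{lim}^{\ulcorner X \urcorner} : \ulcorner X \urcorner \in \sigma(\mathfrak{M})\}$ witnesses that $s$ is a simple limit of $Sq$ in the sense of Definition \ref{def234}. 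That $s \models^{*} \mathfrak{T}(\mathfrak{M})$, i.e. $s \in S(\mathfrak{M})$, is automatic: the constraints only require $\ulcorner X \urcorner^{s} \subseteq \kappa(\mathfrak{M})^{n}$ (true, as all $\liminf$-values lie below $\kappa(\mathfrak{M})$) and that $\ulcorner \in \urcorner$ be reproduced as true membership (true, since the $\in$-sequence is constant and so equals its own $\liminf$). Preservation of ordinal limits then follows for free, as noted at the outset. The single real obstacle is this characterization step: getting (i)--(ii) into the permitted two-symbol language while staying faithful at the boundary of $\kappa(\mathfrak{M})$, for which the definability of $\delta$ through $\mathrm{St}$ and the detection of an escaping $\liminf$ via $\neg \exists y\, \mathrm{IsLim}(y)$ are the decisive devices.
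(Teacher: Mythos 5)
Your proof is correct and follows essentially the same route as the paper's: the paper's own proof merely observes that $\liminf$ agrees with $\lim$ whenever the latter exists and explicitly delegates to the reader the check that the $\liminf$ values are definable over $(\kappa(\mathfrak{M}); \in, R_{Sq}^{\ulcorner X \urcorner})$ by a formula depending only on $\ulcorner X \urcorner$. Your first-order characterisation of $\liminf$ via ``eventually above every $\zeta < \gamma$'' and ``cofinally at most $\gamma$'', with the stage predicate recovering $dom(Sq)$ and the default-to-$0$ disjunct handling the case where the $\liminf$ escapes $\kappa(\mathfrak{M})$, is precisely that delegated check, carried out correctly.
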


\begin{proof}
By the definition of a GSeqA and the observation that 
\begin{equation*}
    lim \ inf (T) = lim (T)
\end{equation*}
whenever $T$ is a function from some ordinal in $ORD$ and $lim (T)$ exists. The reader should check that the $lim \ inf$ values specified by the proposition relative to $\ulcorner X \urcorner$ and $Sq$, in case they exist, are definable over $(\kappa(\mathfrak{M}); \in, R_{Sq}^{\ulcorner X \urcorner})$. In fact, there is one such definition that depends only on $\ulcorner X \urcorner$ and not on $\mathfrak{M}$.
\end{proof}

Colloquially, Proposition \ref{prop233} tells us that every GSeqA is \emph{strongly closed under limit inferiors}. The following definitions generalise this concept.

\begin{defi}\label{def237}
Let $\mathfrak{M}$ be a GSeqA. Then $\mathfrak{M}$ is \emph{strongly closed under} a set $Y$ iff for each $n$-ary function symbol $\ulcorner X \urcorner \in \sigma(\mathfrak{M})$ there is a simple limit formula $\phi_{lim}^{\ulcorner X \urcorner}$ for $(\ulcorner X \urcorner, \mathfrak{M})$, such that 
\begin{equation*}
    Y = \{\phi_{lim}^{\ulcorner X \urcorner} : \ulcorner X \urcorner \in \sigma(\mathfrak{M})\}
\end{equation*}
and whenever $Sq$ is a function from some limit ordinal $< \kappa(\mathfrak{M})$ into $S(\mathfrak{M})$, there is $s \in S(\mathfrak{M})$ witnessed by $Y$ to be a simple limit of $Sq$.
\end{defi}

\begin{defi}\label{def238}
Let $\mathfrak{M}$ be a GSeqA. Then $\mathfrak{M}$ is \emph{practically closed under} a set $Y$ iff for each $n$-ary function symbol $\ulcorner X \urcorner \in \sigma(\mathfrak{M})$ there is a simple limit formula $\phi_{lim}^{\ulcorner X \urcorner}$ for $(\ulcorner X \urcorner, \mathfrak{M})$, such that 
\begin{equation*}
    Y = \{\phi_{lim}^{\ulcorner X \urcorner} : \ulcorner X \urcorner \in \sigma(\mathfrak{M})\}
\end{equation*}
and whenever $Sq$ is a function from some limit ordinal $\delta < \kappa(\mathfrak{M})$ into $S(\mathfrak{M})$ satisfying
\begin{itemize}
    \item $Sq(0) \in I(\mathfrak{M})$,
    \item $Sq(\alpha + 1) = \tau_{\mathfrak{M}}(Sq(\alpha))$ for all $\alpha < \delta$, and
    \item $Y$ witnesses $Sq(\gamma)$ is a simple limit of $Sq \restriction \gamma$ for all limit $\gamma < \delta$,
\end{itemize}
there is $s \in S(\mathfrak{M})$ witnessed by $Y$ to be a simple limit of $Sq$.
\end{defi}

Clearly, for any GSeqA $\mathfrak{M}$ and any set $Y$,
\begin{align*}
    \mathfrak{M} \text{ is strongly closed under } Y \implies \mathfrak{M} \text{ is practically closed under } Y \text{.}
\end{align*}

We are ready to define transfinite terminating runs of GSeqAs.

\begin{defi}
Let $\mathfrak{M}$ be a GSeqA that is practically closed under $Y$. A \emph{short terminating run of} $(\mathfrak{M}, Y)$ \emph{with input} $A$ \emph{and output} $B$ is a function $Sq$ from some successor ordinal $\delta < \kappa(\mathfrak{M})$ into $S(\mathfrak{M})$ such that 
\begin{enumerate}[label=(\alph*)]
    \item $A, B \subset \kappa(\mathfrak{M})$,
    \item $Sq(0) = l_{\mathfrak{M}}(A)$,
    \item $Sq(\alpha + 1) = \tau_{\mathfrak{M}}(Sq(\alpha)) \neq Sq(\alpha)$ for all $\alpha < \delta - 1$,
    \item $Y$ witnesses $Sq(\gamma)$ is a simple limit of $Sq \restriction \gamma$ for all limit $\gamma < \delta$,
    \item $\tau_{\mathfrak{M}}(Sq(\delta - 1)) = Sq(\delta - 1)$, and
    \item $u_{\mathfrak{M}}(Sq(\delta - 1)) = B$.
\end{enumerate}
\end{defi}

\begin{prop}\label{prop239}
Suppose $\mathfrak{M}$ is a GSeqA practically closed under $Y$. Then for some signature $\sigma'$, as long as $\mathfrak{M}'$ is a GSeqA satisfying 
\begin{itemize}
    \item $\sigma(\mathfrak{M}') = \sigma'$ and
    \item $\kappa(\mathfrak{M}') = \kappa(\mathfrak{M})$,
\end{itemize}
and $Z$ is a set of simple limit formulas preserving ordinal limits with $\mathfrak{M}'$ strongly closed under $Z$, there is a GSeqA $\mathfrak{M}^*$ such that
\begin{enumerate}[label=(\alph*)]
    \item\label{2391} $\sigma(\mathfrak{M}^*) = \sigma'$,
    \item\label{2392} $\kappa(\mathfrak{M}^*) = \kappa(\mathfrak{M})$, and
    \item\label{2393} for all $A, B \subset \kappa(\mathfrak{M})$, a short terminating run of $(\mathfrak{M}, Y)$ with input $A$ and output $B$ exists iff a short terminating run of $(\mathfrak{M}^*, Z)$ with input $A$ and output $B$ exists.
\end{enumerate}
\end{prop}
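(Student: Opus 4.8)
The plan is to keep the \emph{same} base ordinal and the \emph{same} limit rule as the given pair $(\mathfrak{M}', Z)$, and to absorb the idiosyncratic limit behaviour encoded by $Y$ into ordinary transition steps, so that the canonical rule $Z$ suffices. Concretely, I would set $\sigma' := \sigma(\mathfrak{M})$ augmented with one fresh constant symbol $\ulcorner clk \urcorner$ (a \emph{clock}) and, for each $n$-ary symbol $\ulcorner X \urcorner \in \sigma(\mathfrak{M})$, one fresh $(n+2)$-ary relation symbol $\ulcorner H^{\ulcorner X \urcorner} \urcorner$ (a \emph{history tape}); this choice of $\sigma'$ depends only on $(\mathfrak{M}, Y)$, as required. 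Since $S(\mathfrak{M}^*)$ depends only on $\sigma(\mathfrak{M}^*)$ and $\kappa(\mathfrak{M}^*)$, setting $\sigma(\mathfrak{M}^*) = \sigma' = \sigma(\mathfrak{M}')$ and $\kappa(\mathfrak{M}^*) = \kappa(\mathfrak{M}') = \kappa(\mathfrak{M})$ forces $S(\mathfrak{M}^*) = S(\mathfrak{M}')$, so $\mathfrak{M}^*$ is \emph{strongly} (hence practically) \emph{closed under the very same} $Z$ that makes $\mathfrak{M}'$ strongly closed; this is exactly what the hypothesis on $\mathfrak{M}'$ and $Z$ buys us, and it gives \ref{2391} and \ref{2392} at once. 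The intended meaning of the new symbols is that $\ulcorner clk \urcorner$ reads the current stage, while $\ulcorner H^{\ulcorner X \urcorner} \urcorner(\xi, \vec\alpha, v)$ records ``at stage $\xi$ the simulated $\mathfrak{M}$-state assigned $v$ to $\ulcorner X \urcorner$ at $\vec\alpha$''. Using separate history relations of the appropriate arity (rather than a single G\"odel-coded tape) keeps everything inside $\kappa(\mathfrak{M})$ for an \emph{arbitrary} limit ordinal base.

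Next I would write $\phi^{\tau}_{\mathfrak{M}^*}$ and $\phi^d_{\mathfrak{M}^*}$ so that $\mathfrak{M}^*$ single-steps a simulation of $\mathfrak{M}$. Each transition first reconstructs the ``current simulated $\mathfrak{M}$-state'' $\hat s$: when $\ulcorner clk \urcorner$ is $0$ or a successor, $\hat s$ is read directly off the $\sigma(\mathfrak{M})$-reduct of the current state; when $\ulcorner clk \urcorner$ is a limit, $\hat s$ is \emph{recomputed} by evaluating, for each $\ulcorner X \urcorner$, the $Y$-limit formula $\phi_{lim}^{\ulcorner X \urcorner}$ (the witness to $\mathfrak{M}$ being practically closed under $Y$) with $R^{\ulcorner X \urcorner}_{Sq}$ interpreted as the corresponding slice of $\ulcorner H^{\ulcorner X \urcorner} \urcorner$. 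The transition then sets the $\sigma(\mathfrak{M})$-part of the next state to $\tau_{\mathfrak{M}}(\hat s)$ (via the witnesses to $\phi^{\tau}_{\mathfrak{M}}$ being bounded for $\sigma(\mathfrak{M})$), appends $\hat s$ at index $\ulcorner clk \urcorner$ to every history tape, and increments $\ulcorner clk \urcorner$. Because ``bounded for $\sigma'$'' (Definition \ref{def220}) only demands that each next-valuation be first-order definable from the current state, and because every state expands $(\kappa(\mathfrak{M}); \in)$, all of these operations --- evaluating $\phi_{lim}^{\ulcorner X \urcorner}$, applying $\tau_{\mathfrak{M}}$, testing whether $\ulcorner clk \urcorner$ is a limit, and monotonically extending the histories --- are expressible, so $\phi^{\tau}_{\mathfrak{M}^*}$ can be taken bounded, and the definability delivers the uniqueness and absoluteness of \ref{c2pr} and \ref{c3pr}. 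The default sentence $\phi^d_{\mathfrak{M}^*}$ sets $\ulcorner clk \urcorner = 0$, empties every history tape, and imposes $\phi^d_{\mathfrak{M}}$ on the $\sigma(\mathfrak{M})$-part, so that $l_{\mathfrak{M}^*}(A)$ agrees with $l_{\mathfrak{M}}(A)$ on the $\sigma(\mathfrak{M})$-reduct.

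The heart of the argument is a transfinite induction establishing the invariant that, along the $\mathfrak{M}^*$-run on input $A$, $\ulcorner clk \urcorner$ holds the stage $\alpha$, each $\ulcorner H^{\ulcorner X \urcorner} \urcorner$ records precisely the values $\ulcorner X \urcorner^{s_\xi}$ for $\xi < \alpha$ (where $(s_\xi)$ is the $\mathfrak{M}$-run on $A$), and the $\sigma(\mathfrak{M})$-reduct equals $s_\alpha$ whenever $\alpha$ is $0$ or a successor. The one delicate point --- and the step I expect to be the main obstacle --- is the limit case: I must show that the $Z$-limit of the current state recovers the clock value $\gamma$ and the \emph{full} accumulated histories. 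This is where I would use that $Z$ \emph{preserves ordinal limits} in the sense of Definition \ref{def234c}. Since the history tapes are updated monotonically, the per-tuple boolean history of each $\ulcorner H^{\ulcorner X \urcorner} \urcorner$ is nondecreasing in $\{0, 1\}$ and hence ordinal-convergent, so $Z$ is \emph{forced} to return its eventual value, namely the union $\bigcup_{\xi < \gamma} (\text{the records of } s_\xi)$; likewise the clock's strictly increasing history forces the $Z$-value $\gamma$. Crucially, I never need to know what $Z$ does on non-convergent histories, nor that $Z$ is literally the limit-inferior rule of Proposition \ref{prop233}: the monotone encoding confines all dependence to convergent inputs, where ``preserves ordinal limits'' pins $Z$ down exactly. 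Consequently the next transition, reading $\ulcorner clk \urcorner = \gamma$ as a limit, recomputes $\hat s = s_\gamma$ correctly from the recovered histories, and the induction continues. Note that at limit stages the $\sigma(\mathfrak{M})$-reduct of $\mathfrak{M}^*$ itself may hold $Z$-garbage; this is harmless, precisely because the simulation reads limit states off the histories rather than off that reduct.

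Finally I would transfer termination and output. The strict increase of $\ulcorner clk \urcorner$ guarantees $Sq^*(\alpha + 1) \neq Sq^*(\alpha)$ until the simulation halts, so no spurious fixed point occurs. I arrange $\phi^{\tau}_{\mathfrak{M}^*}$ to freeze \emph{all} symbols, reaching a genuine fixed point, as soon as the reconstructed $\hat s$ satisfies $\tau_{\mathfrak{M}}(\hat s) = \hat s$, while writing $u_{\mathfrak{M}}(\hat s)$ to $\ulcorner \mathrm{Out} \urcorner$; by the invariant this reproduces exactly the input--output pair of $(\mathfrak{M}, Y)$. If the $\mathfrak{M}$-run's final state sits at a successor (or $0$) stage the lengths match; if it sits at a limit $\gamma$, one extra step is spent recomputing $s_\gamma$ before freezing, costing a single additional stage, so the $\mathfrak{M}^*$-run has length at most $\delta + 1$. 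As $\kappa(\mathfrak{M})$ is a limit ordinal this keeps both lengths below $\kappa(\mathfrak{M})$, and since $\tau_{\mathfrak{M}^*}$ is a function the correspondence is a genuine biconditional: a short terminating run of $(\mathfrak{M}, Y)$ on $(A, B)$ exists iff the unique $\mathfrak{M}^*$-run on $A$ halts with output $B$, which is \ref{2393}.
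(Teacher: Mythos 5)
Your construction is essentially the paper's own proof: the paper likewise takes $\sigma'$ to be $\sigma(\mathfrak{M})$ augmented with history relations $\ulcorner R^{\ulcorner X \urcorner} \urcorner$ and stage counters (two constants $\ulcorner c_0 \urcorner, \ulcorner c_1 \urcorner$ whose equality flags limit stages, in place of your single clock tested for being a limit ordinal), notes that $S(\mathfrak{M}^*) = S(\mathfrak{M}')$ forces strong closure under $Z$, has $\mathfrak{M}^*$ simulate $\mathfrak{M}$ step by step while applying the $Y$-formulas to the recorded histories at limit stages, and argues by induction using exactly your key point that $Z$ preserving ordinal limits keeps the monotone history data and the counters intact across limits. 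The one detail the paper includes that you omit is a fallback clause for states whose history relations hold garbage on which $Y$ is not functional (the paper simply terminates there), which is needed so that $\phi^{\tau}_{\mathfrak{M}^*}$ determines a unique next state on \emph{all} of $S(\mathfrak{M}^*)$ as Definition \ref{def224} requires; this is a one-line fix and does not affect the correctness of your argument.
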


\begin{proof}
Define 
\begin{equation*}
    \sigma' := \sigma(\mathfrak{M}) \cup \{\ulcorner R^{\ulcorner X \urcorner} \urcorner : \ulcorner X \urcorner \in \sigma(\mathfrak{M})\} \cup \{\ulcorner c_0 \urcorner, \ulcorner c_1 \urcorner\} \text{,}
\end{equation*}
where $\ulcorner c_0 \urcorner, \ulcorner c_1 \urcorner$ are constant symbols distinct from all members of
\begin{equation*}
    \sigma(\mathfrak{M}) \cup \{\ulcorner R^{\ulcorner X \urcorner} \urcorner : \ulcorner X \urcorner \in \sigma(\mathfrak{M})\} \text{.}
\end{equation*}

Let $\mathfrak{M}'$ and $Z$ be as given by the proposition. We need to define $\mathfrak{M}^*$ so that \ref{2391} to \ref{2393} are satisfied. We are thus forced to set 
\begin{gather*}
    \sigma(\mathfrak{M}^*) := \sigma' \\
    \kappa(\mathfrak{M}^*) := \kappa \text{.}
\end{gather*}
But this means $S(\mathfrak{M}^*) = S(\mathfrak{M}')$, so $\mathfrak{M}^*$ is strongly closed under $Z$. We briefly describe the default ``non-tape'' configuration and the transition function of $\mathfrak{M}^*$. 

The symbols in $\sigma(\mathfrak{M})$ have default interpretations following $\mathfrak{M}$. For all $\ulcorner X \urcorner \in \sigma(\mathfrak{M})$, the default interpretation of $\ulcorner R^{\ulcorner X \urcorner} \urcorner$ is the empty set. The two extra constant symbols $\ulcorner c_0 \urcorner$ and $\ulcorner c_1 \urcorner$ default to $0$ and $1$ respectively. 

At every step of a computation of $\mathfrak{M}^*$, it checks if ``$c_0 = c_1$''. If the answer is no, it copies the previous $\mathfrak{M}$ portion data to the $R^{\ulcorner X \urcorner}$s within the fibers denoted by $c_0$ (which is supposed to count the steps in every short terminating run), simulates the next step of its $\mathfrak{M}$ portion, and increments both $c_0$ and $c_1$. Otherwise, the moral implication is that a limit stage has been reached, whence it first verifies if $Y$ is functional on the current $R^{\ulcorner X \urcorner}$s (note that the $R^{\ulcorner X \urcorner}$s might contain gibberish in states not reachable from any initial state). If so, it computes point-wise limits of the $R^{\ulcorner X \urcorner}$s according to $Y$, copies the resulting limit state information to its $\mathfrak{M}$ portion, and increments just $c_1$. Otherwise, it terminates. We can argue by induction that $\mathfrak{M}^*$ always correctly computes the states of a short terminating run of $\mathfrak{M}$, primarily because the preservation of ordinal limits by the simple limit formulas in $Z$ allows both
\begin{itemize}
    \item the preservation of data previously stored in the $R^{\ulcorner X \urcorner}$s, and
    \item the pair of constants $c_0$ and $c_1$ to flag limit stages of a terminating run.
\end{itemize} 

In conclusion, the $\mathfrak{M}$ portion of $\mathfrak{M}^*$ simulates (with some latency) the entirety of each short terminating run of $\mathfrak{M}$. As said portion contains both the input and output tapes of $\mathfrak{M}^*$, \ref{2393} holds.
\end{proof}

The following definition is now well-motivated.

\begin{defi}
Let $\mathfrak{M}$ and $Y_{\mathrm{lf}} := \{\phi_{lim}^{\ulcorner X \urcorner} : \ulcorner X \urcorner \in \sigma(\mathfrak{M})\}$ be as in Proposition \ref{prop233}. A \emph{short terminating run of} $\mathfrak{M}$ \emph{with input} $A$ \emph{and output} $B$ is a short terminating run of $(\mathfrak{M}, Y_{\mathrm{lf}})$ with input $A$ and output $B$.
\end{defi}

Basically, we are mandating the taking of limit inferiors point-wise wherever possible to be the standard method by which we compute limit states of transfinite runs. This is not a real restriction, for we have established from Propositions \ref{prop233} and \ref{prop239} that our new standard is among the most powerful ways to take limits of a GSeqA's states, while still maintaining absoluteness and the spirit of local definability.

Until now, we have only considered ``short'' transfinite runs, that is, runs that take less steps than the length of the input/output tape. What about longer runs? After all, the ability for Turing machines to run forever is the reason they can ``output'' non-computable computably enumerable sets. Allowing a Turing machine to run transfinitely (e.g. in \cite{ittm}) allows it to explicitly decide these sets. Through such means, oracle computation becomes more straightforward: the oracle can be incorporated into the input, and the result of the machine being fed that oracle can be directly written on the output tape. We want to have an analogous concept of ``long'' runs on GSeqAs. It turns out that removing the length constraint on terminating runs works just fine; 

\begin{defi}\label{def242}
Let $\mathfrak{M}$ be a GSeqA. A \emph{terminating run of} $\mathfrak{M}$ \emph{with input} $A$ \emph{and output} $B$ is a function $Sq$ from some successor ordinal $\delta$ into $S(\mathfrak{M})$ such that 
\begin{enumerate}[label=(\alph*)]
    \item $A, B \subset \kappa(\mathfrak{M})$,
    \item $Sq(0) = l_{\mathfrak{M}}(A)$,
    \item $Sq(\alpha + 1) = \tau_{\mathfrak{M}}(Sq(\alpha)) \neq Sq(\alpha)$ for all $\alpha < \delta - 1$,
    \item for every limit $\gamma < \delta$, $n$-ary function symbol $\ulcorner X \urcorner \in \sigma(\mathfrak{M})$, and $\alpha_1, \dots, \alpha_n \in \kappa(\mathfrak{M})$,
    \begin{gather*}
        \ulcorner X \urcorner^{Sq(\gamma)} (\alpha_1, ..., \alpha_n) \! = \! 
        \begin{cases}
            lim \ inf \{(\beta, \ulcorner X \urcorner^{Sq(\beta)}(\alpha_1, ..., \alpha_n)) \! : \! \beta < \gamma\} & \!\!\!\! \text{if it exists} \\
            0 & \!\!\!\! \text{otherwise,}
        \end{cases} 
    \end{gather*}
    \item $\tau_{\mathfrak{M}}(Sq(\delta - 1)) = Sq(\delta - 1)$, and
    \item $u_{\mathfrak{M}}(Sq(\delta - 1)) = B$.
\end{enumerate}
\end{defi}

A key realisation informing Definition \ref{def242} is that the existence, the uniqueness and the computation of limit inferiors, with regards to sequences of ordinals indexed by ordinals, are absolute. This, coupled with Proposition \ref{prop226}, gives us the absoluteness of terminating runs. More precisely, let $\kappa$ be a limit ordinal and $A, B$ be two sets of ordinals. Due to the absoluteness of taking limit inferior in general, the sets
\begin{align*}
    \mathrm{SRuns}_{\kappa}(A, B) := \{(\mathfrak{M}, Sq) : \ & \mathfrak{M} \text{ is a GSeqA with } \kappa(\mathfrak{M}) = \kappa \text{ and } Sq \text{ is a short} \\
    & \text{terminating run of } \mathfrak{M} \text{ with input } A \text{ and output } B\}
\end{align*}
and
\begin{align*}
    \mathrm{Runs}_{\kappa}(A, B) := \{(\mathfrak{M}, Sq) : \ & \mathfrak{M} \text{ is a GSeqA with } \kappa(\mathfrak{M}) = \kappa \text{ and } Sq \text{ is a} \\
    & \text{terminating run of } \mathfrak{M} \text{ with input } A \text{ and output } B\}
\end{align*}
as defined are absolute for transitive models of $\mathsf{ZFC}$ containing $A, B$ and $\kappa$. 

\begin{rem}\label{rem243}
For a limit ordinal $\kappa$ and two sets of ordinals $A$ and $B$, that $\mathrm{Runs}_{\kappa}(A, B)$ is a set and not a proper class stems from the fact that by our definition of terminating runs, $Sq$ must be an injection into $S(\mathfrak{M})$ whenever $(\mathfrak{M}, Sq) \in \mathrm{Runs}_{\kappa}(A, B)$. The aforementioned fact also means that all terminating runs of a GSeqA $\mathfrak{M}$ must be of length $< (2^{\kappa(\mathfrak{M})})^+$.
\end{rem}

\section{Computability and Comparisons}

In this section, we derive from our generalised sequential algorithms, various notions of generalised (relative) computability. We then compare these notions with existing ones.

\subsection{Relative Computability}

As prefaced in the previous subsection, we aim to use our definition of terminating runs to capture the notion of relative computability right off the bat.

\begin{defi}
For any limit ordinal $\kappa$ and any two sets of ordinals $A$ and $B$, $B$ is $\kappa$\emph{-computable from} $A$ --- denoted $B \leq^C_{\kappa} A$ --- iff
\begin{enumerate}[label=(\alph*)]
    \item $A, B \subset \kappa$, and
    \item for some GSeqA $\mathfrak{M}$ with $\kappa(\mathfrak{M}) = \kappa$, there is a terminating run of $\mathfrak{M}$ with input $A$ and output $B$. 
\end{enumerate}
In this case, $\mathfrak{M}$ \emph{witnesses} $B \leq^C_{\kappa} A$.
\end{defi}

\begin{defi}
We say $B$ is \emph{computable from} $A$, or $B \leq^C A$, iff $B \leq^C_{\kappa} A$ for some limit ordinal $\kappa$.
\end{defi}

By Proposition \ref{prop226}, Remark \ref{rem225} and the absoluteness of how the $\mathrm{Runs}_{\kappa}(A, B)$s are defined, $\leq^C_{\kappa}$ is absolute for transitive models of $\mathsf{ZFC}$ containing $\kappa$. Further, $\leq^C$ is absolute for transitive models of $\mathsf{ZFC}$ with the same ordinals. 

Given any candidate for relative computability, we want to approach a definition for outright computability, i.e. computability without qualification. Perhaps the most natural way to do so is to define computable sets as sets that can be computed with the simplest of inputs. 

\begin{defi}
Let $A$ be a set of ordinals.
\begin{enumerate}[label=(\arabic*)]
    \item $A$ is $\kappa$\emph{-computable} iff $A \leq^C_{\kappa} \emptyset$.
    \item $A$ is \emph{computable} iff $A$ is $\kappa$-computable for some limit ordinal $\kappa$.
\end{enumerate}
\end{defi}

Let us look into some basic properties of the relations $\leq^C_{\kappa}$ and $\leq^C$, in an attempt to reflect on the validity of our notion of (relative) computability. Intuitively, any good notion of relative computability should be transitive. 

\begin{prop}\label{prop245}
Let $\kappa$ be a limit ordinal. Suppose $A \leq^C_{\kappa} B$ and $B \leq^C_{\kappa} C$. Then $A \leq^C_{\kappa} C$.
\end{prop}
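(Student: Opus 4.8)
The plan is to chain the two witnessing algorithms into a single GSeqA that first runs the reduction computing $B$ from $C$ and then feeds the result into the reduction computing $A$ from $B$. By hypothesis fix a GSeqA $\mathfrak{M}_1$ with $\kappa(\mathfrak{M}_1) = \kappa$ admitting a terminating run with input $C$ and output $B$, and a GSeqA $\mathfrak{M}_2$ with $\kappa(\mathfrak{M}_2) = \kappa$ admitting a terminating run with input $B$ and output $A$. I would build a composite GSeqA $\mathfrak{M}$ with $\kappa(\mathfrak{M}) = \kappa$ whose signature consists of $\ulcorner \in \urcorner$, $\ulcorner \mathrm{In} \urcorner$, $\ulcorner \mathrm{Out} \urcorner$, a fresh ``phase'' flag, and disjoint relabelled copies of the remaining symbols of $\sigma(\mathfrak{M}_1)$ and $\sigma(\mathfrak{M}_2)$. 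I would identify the composite input tape $\ulcorner \mathrm{In} \urcorner$ with $\mathfrak{M}_1$'s input tape and the composite output tape $\ulcorner \mathrm{Out} \urcorner$ with $\mathfrak{M}_2$'s output tape, keeping $\mathfrak{M}_1$'s output tape and $\mathfrak{M}_2$'s input tape as internal symbols. Every symbol inherits the constraint demanded by Definition \ref{def224}: each relabelled copy keeps its $(\kappa^n, 0)$ constraint, and the flag is a constant with constraint $(\kappa, 0)$. With these identifications the initial composite state $l_{\mathfrak{M}}(C)$ already agrees with $l_{\mathfrak{M}_1}(C)$ on the $\mathfrak{M}_1$-part, provided $\phi^d_{\mathfrak{M}}$ installs $\mathfrak{M}_1$'s (necessarily $\in$-definable) default configuration $\phi^d_{\mathfrak{M}_1}$ there and sets the flag to its phase-one value.

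The transition $\phi^{\tau}_{\mathfrak{M}}$ runs in two phases governed by the flag. While the flag marks phase one and the $\mathfrak{M}_1$-part is not yet a fixed point of $\tau_{\mathfrak{M}_1}$, $\mathfrak{M}$ applies the relabelled copy of $\phi^{\tau}_{\mathfrak{M}_1}$ to the $\mathfrak{M}_1$-part and freezes everything else. Here it is crucial that ``$\mathfrak{M}_1$ has halted'' is a single first-order sentence over the current $\mathfrak{M}_1$-symbols: since $\phi^{\tau}_{\mathfrak{M}_1}$ is bounded, each witness $\phi^{\ulcorner X \urcorner}_{\mathfrak{M}_1}$ refers only to the current state, so the next values are first-order definable from the current state and the assertion that they all coincide with the current values is expressible. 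The format of Definition \ref{def220} permits arbitrary quantifiers inside each witness, so this fits the boundedness requirement. On the step at which this sentence first holds, $\mathfrak{M}$ performs one successor transition that copies $\mathfrak{M}_1$'s output onto $\mathfrak{M}_2$'s input, installs $\phi^d_{\mathfrak{M}_2}$ on the rest of the $\mathfrak{M}_2$-part, and flips the flag to phase two; the flag change makes this a genuine state change. In phase two $\mathfrak{M}$ applies the relabelled $\phi^{\tau}_{\mathfrak{M}_2}$ to the $\mathfrak{M}_2$-part (keeping the $\mathfrak{M}_1$-part frozen), so it automatically reaches a $\tau_{\mathfrak{M}}$-fixed point exactly when $\mathfrak{M}_2$ halts. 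One then verifies the clauses of Definition \ref{def224}: $\phi^{\tau}_{\mathfrak{M}}$ is bounded (every witness quantifies only over current-state symbols), $\phi^d_{\mathfrak{M}}$ is simple (all defaults are $\in$-definable), and the case split is exhaustive and functional on \emph{every} state, so the unique-next-state and loading clauses hold for reachable and unreachable states alike.

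The main obstacle is checking that the forced point-wise limit-inferior rule at limit stages (Definition \ref{def242}(d)) faithfully simulates both sub-runs, since $\mathfrak{M}$ cannot select a bespoke limit rule. I would settle this using the fact that a $\liminf$ over $\beta < \gamma$ depends only on the tail of the sequence. During each phase the active symbols evolve in lockstep with the corresponding sub-run, so at a limit stage internal to a phase the composite $\liminf$ coincides with the sub-machine's own $\liminf$; the flag and the inactive part are constant there, so their $\liminf$ returns the constant value. The delicate point is a limit stage $\delta_1 + \lambda$ in phase two, where $\delta_1$ is the (successor) length of the phase-one run and $\lambda$ is a limit: the $\mathfrak{M}_2$-symbols then carry a long prefix of irrelevant default values (from phase one) before the genuine $\mathfrak{M}_2$-values indexed by $[\delta_1, \delta_1 + \lambda)$. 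Writing $\liminf_{\beta < \delta_1 + \lambda} = \sup_{\eta} \inf_{\eta \le \beta < \delta_1 + \lambda}$ and restricting $\eta$ to the cofinal set $\{\eta \ge \delta_1\}$, the prefix is discarded, and the composite value at $\delta_1 + \lambda$ equals the tail $\liminf$ over the phase-two steps, which is exactly the value (and existence-below-$\kappa$) prescribed by $\mathfrak{M}_2$'s run at its own stage $\lambda$. Hence, after the offset $\delta_1$, the $\mathfrak{M}_2$-part of $\mathfrak{M}$'s run reproduces $\mathfrak{M}_2$'s terminating run stage for stage.

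Assembling these facts, the run of $\mathfrak{M}$ on input $C$ reproduces $\mathfrak{M}_1$'s run on its $\mathfrak{M}_1$-part (leaving $B$ on $\mathfrak{M}_1$'s output tape), then after one bridging step reproduces $\mathfrak{M}_2$'s run on input $B$ (leaving $A$ on $\ulcorner \mathrm{Out} \urcorner$), and halts precisely when $\mathfrak{M}_2$ does. Its length is $\delta_1 + \delta_2$, a successor ordinal since $\delta_2$ is; every proper successor step is a genuine change; and the final state is a $\tau_{\mathfrak{M}}$-fixed point whose $u_{\mathfrak{M}}$-value is $A$. Therefore this run is a terminating run of $\mathfrak{M}$ with input $C$ and output $A$, so $\mathfrak{M}$ witnesses $A \leq^C_{\kappa} C$, as required.
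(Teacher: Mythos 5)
Your proposal is correct and follows essentially the same route as the paper's proof: build a composite GSeqA on the disjoint union of the two signatures that simulates $\mathfrak{M}_1$ until its portion stabilises, detects this halting first-order via the boundedness of $\phi^{\tau}_{\mathfrak{M}_1}$, transfers the intermediate output to the $\mathfrak{M}_2$-input, and then simulates $\mathfrak{M}_2$. The only difference is that you spell out the details (the explicit phase flag, and the tail-of-the-$\liminf$ argument at limit stages straddling the phase boundary) that the paper dismisses as a straightforward check.
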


\begin{proof}
Let $\mathfrak{M}_1$ witness $B \leq^C_{\kappa} C$ and $\mathfrak{M}_2$ witness $A \leq^C_{\kappa} B$. We shall construct a GSeqA $\mathfrak{M}^*$ with $\kappa(\mathfrak{M}^*) = \kappa$ that essentially simulates $\mathfrak{M}_1$ followed by $\mathfrak{M}_2$, so that $\mathfrak{M}^*$ witnesses $A \leq^C_{\kappa} C$. Set $\sigma(\mathfrak{M}^*)$ to be a disjoint union of $\sigma(\mathfrak{M}_1)$ and $\sigma(\mathfrak{M}_2)$, and 
\begin{equation*}
    \phi^d_{\mathfrak{M}^*} := \phi^d_{\mathfrak{M}_1} \wedge \phi^d_{\mathfrak{M}_2} \text{.}
\end{equation*}

At the beginning, the $\mathfrak{M}^*$ simulates $\mathfrak{M}_1$ in its $\mathfrak{M}_1$ portion, updating this portion of each configuration while keeping the ``non-tape'' $\mathfrak{M}_2$ portion undisturbed. In the background, checks are consistently being done. At the start of each step, before an update to its $\mathfrak{M}_1$ portion is carried out, $\mathfrak{M}^*$ checks if said portion will change going from the current state to the next. If so, it continues to simulate $\mathfrak{M}_1$. Otherwise, it copies the contents of the output tape to the input tape, erases all data from the output tape, and starts having its $\mathfrak{M}_2$ portion simulate $\mathfrak{M}_2$. From then on, the ``non-tape'' $\mathfrak{M}_1$ portion of each configuration is left untouched, so that $\mathfrak{M}^*$ terminates exactly when its simulation of $\mathfrak{M}_2$ ends.

Checking that $\mathfrak{M}^*$ fulfils our requirement is straightforward.
\end{proof}

It makes sense that a machine is more powerful when endowed with more space. This might not quite be the case for GSeqAs, but we can get close to this sort of monotonicity with a mild assumption.

\begin{prop}\label{prop246}
Suppose $\kappa, \kappa'$ are limit ordinals, $\kappa < \kappa'$, $A \leq^C_{\kappa} B$, and $\kappa$ is definable over the structure $(\kappa'; \in)$. Then $A \leq^C_{\kappa'} B$.
\end{prop}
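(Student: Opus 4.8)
The plan is to build, from a GSeqA $\mathfrak{M}$ with $\kappa(\mathfrak{M})=\kappa$ witnessing $A \leq^C_{\kappa} B$, a GSeqA $\mathfrak{M}'$ with $\kappa(\mathfrak{M}')=\kappa'$ witnessing $A \leq^C_{\kappa'} B$, by \emph{relativising} $\mathfrak{M}$ to the definable initial segment $\kappa$ of $(\kappa';\in)$. Fix a formula $\chi(x)$ over $\{\ulcorner \in \urcorner\}$ defining $\kappa$ over $(\kappa';\in)$. Take $\sigma(\mathfrak{M}')=\sigma(\mathfrak{M})$, and obtain the witnesses for $\phi^{d}_{\mathfrak{M}'}$ and $\phi^{\tau}_{\mathfrak{M}'}$ from those of $\mathfrak{M}$ by bounding every quantifier to $\chi$ and decreeing that, on arguments outside $\kappa$, every symbol takes its default value ($0$ for functions, false for relations). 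For a state $s$ of $\mathfrak{M}$ let $s^{+}$ be the state of $\mathfrak{M}'$ agreeing with $s$ on arguments below $\kappa$ and trivial elsewhere; the map $s \mapsto s^{+}$ is injective. Since $A,B \subset \kappa \subset \kappa'$, the subset requirements are immediate, and one checks directly that $l_{\mathfrak{M}'}(A)=l_{\mathfrak{M}}(A)^{+}$, that $\tau_{\mathfrak{M}'}(s^{+})=\tau_{\mathfrak{M}}(s)^{+}$ on states all of whose values lie below $\kappa$, and that $u_{\mathfrak{M}'}(s^{+})=u_{\mathfrak{M}}(s)$; these are the routine parts, and their absoluteness follows from Proposition \ref{prop226}.

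The real obstacle is the limit clause (d) of Definition \ref{def242}. At a limit stage $\gamma$ the run of $\mathfrak{M}$ assigns the pointwise limit inferior of the previous values \emph{provided it is} $<\kappa$, and $0$ otherwise, whereas $\mathfrak{M}'$ is forced to use the same rule with threshold $\kappa'$. For an argument $\vec{\alpha} \in \kappa^{n}$ the two sequences of values coincide and lie below $\kappa$, so the only discrepancy occurs exactly when the limit inferior equals $\kappa$: there $\mathfrak{M}$ records $0$ while the unmodified $\mathfrak{M}'$ would record $\kappa$, now a legitimate value $<\kappa'$. A naive mirror therefore fails, and, crucially, the erroneous value $\kappa$ persists in the run and corrupts every later limit inferior, since replacing the $0$'s that $\mathfrak{M}$ writes at its over-shooting limits by large values removes the downward drag that pins those later limit inferiors to their correct values.

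To repair this I would fold a \emph{capping step} into $\tau_{\mathfrak{M}'}$: define $\tau_{\mathfrak{M}'}(s')$ to reset to $0$ every value of $s'$ that is $\geq \kappa$ whenever $s'$ has such a value, and to perform the relativised $\mathfrak{M}$-transition otherwise. The case split is first-order via $\chi$, and since the relativised transition and the pointwise limit inferior never produce a value in $[\kappa,\kappa')$ other than $\kappa$ itself, a state carries an out-of-range value precisely when it was just produced by an over-shooting limit; thus each such limit stage is immediately followed by a capping step that writes exactly the $0$'s $\mathfrak{M}$ records at the corresponding limit. One then verifies by induction along the run that the resulting $Sq'$ tracks $Sq$ up to the insertion of these capping steps: the value-carrying stages of $\mathfrak{M}'$ reproduce the values of $\mathfrak{M}$ (the inserted $0$'s standing in for $\mathfrak{M}$'s capped limit values), while the isolated over-shoot stages carry the harmlessly large value $\kappa$, which lies above every genuine value and so never lowers a subsequent limit inferior. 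Because a cofinal monotone reindexing preserves limit inferiors, each limit inferior computed by $\mathfrak{M}'$ agrees with the one computed by $\mathfrak{M}$, so $Sq'$ is a genuine terminating run; its last state is the image of $\mathfrak{M}$'s final state, is fixed by $\tau_{\mathfrak{M}'}$, and unloads to $B$. The main effort — and the step I expect to be most delicate — is this inductive tracking argument, in particular checking that no over-shoot value pollutes a later limit stage and that clauses (c) and (e) of Definition \ref{def242} survive the interleaved capping steps.
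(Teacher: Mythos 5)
Your proof follows the same skeleton as the paper's: the paper's entire argument is the first half of your first paragraph, namely relativise $\phi^{d}_{\mathfrak{M}}$ and $\phi^{\tau}_{\mathfrak{M}}$ to the $(\kappa';\in)$-definable set $\kappa$, default everything outside $\kappa$ to $0$, and declare that the resulting $\mathfrak{M}'$ witnesses $A \leq^{C}_{\kappa'} B$. The paper stops there and does not discuss limit stages at all. The obstacle you isolate is genuine: clause (d) of Definition \ref{def242} hard-codes the base set of the machine into the limit rule (``exists'' can only mean ``exists in $\kappa(\mathfrak{M})$'', by comparison with Proposition \ref{prop233} and Fact \ref{fact230}), so a pointwise limit inferior equal to $\kappa$ is replaced by $0$ in a run of $\mathfrak{M}$ but is a legitimate value $<\kappa'$ that the run of $\mathfrak{M}'$ is \emph{forced} to record, after which the naive simulation can drift. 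Such over-shooting limits really can occur in terminating runs (which may be as long as $(2^{\kappa})^{+}$), so the paper's one-line conclusion is at best incomplete here. Your capping repair is sound: the condition ``some value is $\geq \kappa$'' is first-order over the current state via $\chi$ and hence admissible in a sentence bounded for $\sigma(\mathfrak{M}')$; the inserted capping stage restores exactly the state $Sq(\gamma)^{+}$; the isolated value $\kappa$ dominates every genuine value and the genuine values remain cofinal in every tail, so no later limit inferior is perturbed under the monotone cofinal reindexing; and the final state, having no out-of-range values, is still a fixed point unloading to $B$. In short, you prove the same statement by the same construction, but you supply the limit-stage bookkeeping that the paper's proof silently omits --- a gap that equally affects Proposition \ref{prop252}, which reuses this construction.
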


\begin{proof}
Let $\mathfrak{M}$ be a GSeqA witnessing $A \leq^C_{\kappa} B$. Define a GSeqA $\mathfrak{M}'$ with $\kappa(\mathfrak{M}') = \kappa'$ and $\sigma(\mathfrak{M}') = \sigma(\mathfrak{M})$. Since $\kappa$ is definable over $(\kappa'; \in)$, we can relativise both $\phi^d_{\mathfrak{M}}$ and $\phi^{\tau}_{\mathfrak{M}}$ to $\kappa'$ so that if $\ulcorner X \urcorner \in \sigma(\mathfrak{M})$ is an $n$-ary function symbol and $\alpha_1, \dots, \alpha_n \in \kappa'$, then 
\begin{enumerate}[label=(\arabic*)]
    \item the default value of $X(\alpha_1, \dots, \alpha_n)$
    \begin{enumerate}[label=(\alph*)]
        \item follows what is given by $\phi^d_{\mathfrak{M}}$ if $\alpha_1, \dots, \alpha_n \in \kappa$, and
        \item is set to $0$ otherwise, and
    \end{enumerate}
    \item the next-stage value of $X(\alpha_1, \dots, \alpha_n)$
    \begin{enumerate}[label=(\alph*)]
        \item follows the result of computation by $\tau_{\mathfrak{M}}$ if $\alpha_1, \dots, \alpha_n \in \kappa$, and
        \item is set to $0$ otherwise.
    \end{enumerate}
\end{enumerate} 
Set $\phi^d_{\mathfrak{M}'}$ and $\phi^{\tau}_{\mathfrak{M}'}$ to be the relativised versions of $\phi^d_{\mathfrak{M}}$ and $\phi^{\tau}_{\mathfrak{M}}$ respectively. Then $\mathfrak{M}'$ witnesses $A \leq^C_{\kappa'} B$.
\end{proof}

\subsection{Further Modifications}\label{ss320}

The definability requirement in Proposition \ref{prop246} hinders us from concluding the transitivity of $\leq^C$. One can read this as a sign that $\leq^C$, and by extension the $\leq^C_{\kappa}$s, is not quite the right notion of relative computability.

Moreover, it can be unsatisfactory to see that
\begin{equation*}
    \{A \subset \kappa : A \leq^C_{\kappa} B\} \text{ is countable}
\end{equation*}
for all limit ordinals $\kappa$ and $B \subset \kappa$ (which follows from Remark \ref{rem225}). An ideal relative computability relation uniformly definable in $\kappa$ should have its predecessor sets grow along with $\kappa$. Towards such a relation, we shall introduce an augmentation to GSeqAs.

\begin{defi}\label{def836}
An \emph{GSeqA with parameters} (henceforth \emph{GSeqAP}) 
\begin{equation*}
    \mathfrak{M} = (\kappa(\mathfrak{M}), \sigma(\mathfrak{M}), \phi^{\tau}_{\mathfrak{M}}, \phi^d_{\mathfrak{M}}, \vartheta_{\mathfrak{M}})
\end{equation*} 
is defined like a GSeqA, except constants in $\sigma(\mathfrak{M})$ are allowed to evaluate to $(\{\alpha\}, 1)$ under $\vartheta_{\mathfrak{M}}$ for any $\alpha \in \kappa(\mathfrak{M})$. In this case, we call $\alpha$ a \emph{parameter} of $\mathfrak{M}$.
\end{defi}

\begin{rem}\label{rem248}
Fix any limit ordinal $\kappa$. There are $|\kappa|$ many GSeqAPs $\mathfrak{M}$ satisfying $\kappa(\mathfrak{M}) = \kappa$. Also, this set of GSeqAP specifications is absolute for transitive models of $\mathsf{ZFC}$ containing $\kappa$ (cf. Remark \ref{rem225}).
\end{rem}

The definitions and properties so far associated with GSeqAs adapt perfectly to GSeqAPs. We will use the adapted definitions in the context of GSeqAPs without further clarifications.

\begin{defi}\label{def38}
For any limit ordinal $\kappa$ and any two sets of ordinals $A$ and $B$, $B$ is $\kappa$\emph{-computable with parameters from} $A$ --- denoted $B \leq^P_{\kappa} A$ --- iff 
\begin{enumerate}[label=(\alph*)]
    \item $A, B \subset \kappa$, and
    \item for some GSeqAP $\mathfrak{M}$ with $\kappa(\mathfrak{M}) = \kappa$, there is a terminating run of $\mathfrak{M}$ with input $A$ and output $B$. 
\end{enumerate}
In this case, $\mathfrak{M}$ \emph{witnesses} $B \leq^P_{\kappa} A$.
\end{defi}

\begin{defi}\label{def839}
We say $B$ is \emph{computable with parameters from} $A$, or $B \leq^P A$, iff $B \leq^P_{\kappa} A$ for some limit ordinal $\kappa$.
\end{defi}

\begin{defi}
Let $A$ be a set of ordinals.
\begin{enumerate}[label=(\arabic*)]
    \item $A$ is $\kappa$\emph{-computable with parameters} iff $A \leq^P_{\kappa} \emptyset$.
    \item $A$ is \emph{computable with parameters} iff $A$ is $\kappa$-computable with parameters for some limit ordinal $\kappa$.
\end{enumerate}
\end{defi}

Let $\kappa$ be a limit ordinal and $A, B$ be two sets of ordinals. Like $\mathrm{SRuns}_{\kappa}(A, B)$ and $\mathrm{Runs}_{\kappa}(A, B)$, the sets
\begin{align*}
    \mathrm{SRunsP}_{\kappa}(A, B) := \{(\mathfrak{M}, Sq) : \ & \mathfrak{M} \text{ is a GSeqAP with } \kappa(\mathfrak{M}) = \kappa \text{ and } Sq \text{ is a short} \\
    & \text{terminating run of } \mathfrak{M} \text{ with input } A \text{ and output } B\}
\end{align*}
and
\begin{align*}
    \mathrm{RunsP}_{\kappa}(A, B) := \{(\mathfrak{M}, Sq) : \ & \mathfrak{M} \text{ is a GSeqAP with } \kappa(\mathfrak{M}) = \kappa \text{ and } Sq \text{ is a} \\
    & \text{terminating run of } \mathfrak{M} \text{ with input } A \text{ and output } B\}
\end{align*}
have absolute definitions for transitive models of $\mathsf{ZFC}$ containing $A, B$ and $\kappa$. 

Adapting Proposition \ref{prop226} to GSeqAPs, in conjunction with Remark \ref{rem248}, the absoluteness of $\kappa^{< \omega}$, and the absoluteness of the definition of the $\mathrm{RunsP}_{\kappa}(A, B)$s, gives us that $\leq^P_{\kappa}$ is absolute for transitive models of $\mathsf{ZFC}$ containing $\kappa$. As a result, like $\leq^C$, $\leq^P$ is absolute for transitive models of $\mathsf{ZFC}$ with the same ordinals. 

By Remark \ref{rem248}, the predecessor sets
\begin{equation*}
    \{A \subset \kappa : A \leq^P_{\kappa} B\}
\end{equation*}
are of size $|\kappa|$ for all limit ordinals $\kappa$ and $B \subset \kappa$. Therefore, in the $\leq^P_{\kappa}$s we get a class of relative computability relations with predecessor sets that grow uniformly in $\kappa$.

A routine check confirms that the parameterised version of Proposition \ref{prop245} holds with the same proof.

\begin{prop}\label{prop251}
Let $\kappa$ be a limit ordinal. Suppose $A \leq^P_{\kappa} B$ and $B \leq^P_{\kappa} C$. Then $A \leq^P_{\kappa} C$.
\end{prop}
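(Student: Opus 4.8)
The plan is to follow the proof of Proposition \ref{prop245} essentially verbatim, since the only feature distinguishing a GSeqAP from a GSeqA is that finitely many constant symbols may be pinned to ordinal values by $\vartheta_{\mathfrak{M}}$, and this feature is preserved under the construction used there. Concretely, I would fix a GSeqAP $\mathfrak{M}_1$ witnessing $B \leq^P_{\kappa} C$ and a GSeqAP $\mathfrak{M}_2$ witnessing $A \leq^P_{\kappa} B$, and build a single GSeqAP $\mathfrak{M}^*$ with $\kappa(\mathfrak{M}^*) = \kappa$ that runs $\mathfrak{M}_1$ on input $C$ to produce $B$, then runs $\mathfrak{M}_2$ on that intermediate $B$ to produce $A$, so that $\mathfrak{M}^*$ witnesses $A \leq^P_{\kappa} C$.

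The construction proceeds as follows. First, I would let $\sigma(\mathfrak{M}^*)$ be the union of $\sigma(\mathfrak{M}_1)$ and $\sigma(\mathfrak{M}_2)$, identifying the mandatory shared symbols $\ulcorner \in \urcorner$, $\ulcorner \mathrm{In} \urcorner$, $\ulcorner \mathrm{Out} \urcorner$ and keeping every other symbol private to its machine, so that $\mathfrak{M}^*$ carries a private copy of each working symbol of $\mathfrak{M}_1$ and of $\mathfrak{M}_2$. Accordingly I set $\phi^d_{\mathfrak{M}^*} := \phi^d_{\mathfrak{M}_1} \wedge \phi^d_{\mathfrak{M}_2}$ (over the non-tape symbols, as required), and I take $\vartheta_{\mathfrak{M}^*}$ to agree with $\vartheta_{\mathfrak{M}_1}$ and $\vartheta_{\mathfrak{M}_2}$ on their respective private symbols; in particular the parameter constants of both machines stay pinned to their ordinal values, and since each signature is finite the combined parameter block is still finite, so $\mathfrak{M}^*$ is a legitimate GSeqAP. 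Next I would design $\phi^{\tau}_{\mathfrak{M}^*}$ in two phases, exactly as in Proposition \ref{prop245}: while the $\mathfrak{M}_1$-portion is not yet a fixed point of $\tau_{\mathfrak{M}_1}$, advance that portion by one $\tau_{\mathfrak{M}_1}$-step and leave the $\mathfrak{M}_2$-portion at default; on detecting that the $\mathfrak{M}_1$-portion would not change (signalling $\mathfrak{M}_1$ has terminated with $B$ on $\ulcorner \mathrm{Out} \urcorner$), copy $\ulcorner \mathrm{Out} \urcorner$ onto $\ulcorner \mathrm{In} \urcorner$, clear $\ulcorner \mathrm{Out} \urcorner$, and thereafter advance only the $\mathfrak{M}_2$-portion by $\tau_{\mathfrak{M}_2}$-steps while freezing the $\mathfrak{M}_1$-portion, so that $\mathfrak{M}^*$ reaches a fixed point precisely when $\mathfrak{M}_2$ does. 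As in the unparameterised case, $\phi^{\tau}_{\mathfrak{M}^*}$ remains bounded for $\sigma(\mathfrak{M}^*)$, since each phase merely recombines the bounded witnesses for $\phi^{\tau}_{\mathfrak{M}_1}$ and $\phi^{\tau}_{\mathfrak{M}_2}$ with the one-step copy/clear operations, all of which are individually bounded.

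Finally I would verify that $\mathfrak{M}^*$ computes as intended. Because the limit rule is the point-wise limit-inferior rule of Definition \ref{def242} applied coordinatewise, and the two phases act on disjoint private working symbols, the limit stages of a run of $\mathfrak{M}^*$ restrict correctly to the limit stages of the embedded runs of $\mathfrak{M}_1$ and then of $\mathfrak{M}_2$. Hence splicing a terminating run of $\mathfrak{M}_1$ on input $C$ (output $B$) to a terminating run of $\mathfrak{M}_2$ on input $B$ (output $A$), with the single handoff step inserted between them, yields a terminating run of $\mathfrak{M}^*$ on input $C$ with output $A$, giving $A \leq^P_{\kappa} C$. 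The only genuinely new point beyond Proposition \ref{prop245} is the parameter bookkeeping, and I expect the sole mild obstacle to be checking that identifying the shared tape symbols while keeping the two parameter blocks private creates no clash in $\vartheta_{\mathfrak{M}^*}$; this is dispatched by renaming so the parameter constants occupy disjoint symbols, after which Remark \ref{rem248} confirms that $\mathfrak{M}^*$ still belongs to a legitimate, absolutely definable family of GSeqAPs.
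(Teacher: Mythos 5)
Your proposal is correct and follows essentially the same route as the paper, which itself proves Proposition \ref{prop251} by observing that the two-phase composition construction from Proposition \ref{prop245} (disjoint working signatures, $\phi^d_{\mathfrak{M}^*} = \phi^d_{\mathfrak{M}_1} \wedge \phi^d_{\mathfrak{M}_2}$, simulate $\mathfrak{M}_1$ until its portion stabilises, hand the output tape to the input tape, then simulate $\mathfrak{M}_2$) goes through verbatim for GSeqAPs. The parameter bookkeeping you spell out is exactly the "routine check" the paper leaves implicit.
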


If we substitute the parametrised counterparts of relative computability for the non-parametrised ones in Proposition \ref{prop246}, we can do away with the definability hypothesis.

\begin{prop}\label{prop252}
Suppose $\kappa, \kappa'$ are limit ordinals, $\kappa < \kappa'$ and $A \leq^P_{\kappa} B$. Then $A \leq^P_{\kappa'} B$.
\end{prop}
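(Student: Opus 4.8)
The plan is to mimic the proof of Proposition \ref{prop246}, replacing the appeal to the definability of $\kappa$ over $(\kappa'; \in)$ by the use of an ordinal parameter. Let $\mathfrak{M}$ be a GSeqAP witnessing $A \leq^P_{\kappa} B$, so that $\kappa(\mathfrak{M}) = \kappa$ and there is a terminating run of $\mathfrak{M}$ with input $A$ and output $B$. Since $\kappa < \kappa'$ we have $\kappa \in \kappa'$, so $\kappa$ is a legitimate parameter for a GSeqAP with base set $\kappa'$. First I would define a GSeqAP $\mathfrak{M}'$ with $\kappa(\mathfrak{M}') = \kappa'$ and $\sigma(\mathfrak{M}') = \sigma(\mathfrak{M}) \cup \{\ulcorner k \urcorner\}$, where $\ulcorner k \urcorner$ is a fresh constant symbol, and set $\vartheta_{\mathfrak{M}'}(\ulcorner k \urcorner) = (\{\kappa\}, 1)$ so that $\ulcorner k \urcorner$ is always interpreted as the ordinal $\kappa$. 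Because $\ulcorner \in \urcorner$ is interpreted as true membership on $\kappa'$ and $\kappa \in \kappa'$, the formula $\ulcorner x \in k \urcorner$ defines exactly the initial segment $\kappa$ inside $(\kappa'; \in)$.

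Next I would relativise both $\phi^d_{\mathfrak{M}}$ and $\phi^{\tau}_{\mathfrak{M}}$ to this initial segment, exactly as in Proposition \ref{prop246} but using $\ulcorner k \urcorner$ in place of a defining formula for $\kappa$. Concretely, for each $n$-ary function symbol $\ulcorner X \urcorner \in \sigma(\mathfrak{M})$ I would arrange the witnesses $\phi^{\ulcorner X \urcorner}$ so that the default and next-stage value of $\ulcorner X \urcorner(\alpha_1, \dots, \alpha_n)$ follows $\mathfrak{M}$ when $\alpha_1, \dots, \alpha_n \in k$ (that is, all $< \kappa$) and is set to $0$ otherwise; the quantifiers in the original witnesses are likewise bounded to range over $k$. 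Calling the results $\phi^d_{\mathfrak{M}'}$ and $\phi^{\tau}_{\mathfrak{M}'}$, I would verify that $\phi^{\tau}_{\mathfrak{M}'}$ is still bounded for $\sigma(\mathfrak{M}')$ and $\phi^d_{\mathfrak{M}'}$ still simple for $\sigma(\mathfrak{M}')$, and that conditions (D6) and (D8) of Definition \ref{def224} continue to hold, so that $\mathfrak{M}'$ is a genuine GSeqAP.

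Finally I would check that $\mathfrak{M}'$ witnesses $A \leq^P_{\kappa'} B$. Since $A, B \subset \kappa \subset \kappa'$, both are valid inputs and outputs over $\kappa'$, and the input tape of the loaded initial state of $\mathfrak{M}'$ agrees with that of $\mathfrak{M}$ below $\kappa$ while being $0$ above. An induction along the run shows that the unique run of $\mathfrak{M}'$ on input $A$ restricts, on its $\kappa$-part, to the given run of $\mathfrak{M}$, while remaining constantly $0$ on $\kappa' \setminus \kappa$: successor stages follow the relativised transition, and at limit stages the point-wise $lim \ inf$ of a constantly-$0$ sequence is $0$ above $\kappa$ and equals the $lim \ inf$ taken by $\mathfrak{M}$ below $\kappa$. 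Because the $\kappa' \setminus \kappa$ part is frozen at $0$, which is a fixed point of $\tau_{\mathfrak{M}'}$, the machine $\mathfrak{M}'$ reaches a fixed point exactly when $\mathfrak{M}$ does, and it then reads off the same output $B$.

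The main obstacle I anticipate is purely bookkeeping: confirming that the relativised witnesses still meet the syntactic shape demanded by \emph{bounded} (Definition \ref{def220}) and \emph{simple} (the conditions preceding Definition \ref{def224}), and that the uniqueness clauses (D6) and (D8) survive relativisation. The conceptual content, namely that an ordinal parameter lets a GSeqAP name its predecessor's base set without any definability assumption, is exactly what removes the hypothesis present in Proposition \ref{prop246}, and the run-projection step is the same inductive, absoluteness-style argument used there.
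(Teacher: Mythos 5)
Your overall strategy (add an ordinal parameter naming $\kappa$, relativise $\mathfrak{M}$ to the initial segment it denotes, and project the run) is the right one and matches the paper's, but there is a genuine gap at the point you dismiss as bookkeeping: the relativised $\phi^d_{\mathfrak{M}'}$ cannot be \emph{simple} for $\sigma(\mathfrak{M}')$. By the conditions preceding Definition \ref{def224}, each witness $\phi^{\ulcorner X \urcorner}$ to $\phi^d_{\mathfrak{M}'}$ being simple must be a formula over $\{\ulcorner \in \urcorner\}$ \emph{alone}; unlike the witnesses to boundedness of $\phi^{\tau}_{\mathfrak{M}'}$ (which by Definition \ref{def220}(a) range over all of $\{\ulcorner Y_0 \urcorner : \ulcorner Y \urcorner \in \sigma\}$ and so may mention $\ulcorner k_0 \urcorner$), the default-configuration witnesses may not mention your constant $\ulcorner k \urcorner$. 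So the clause ``follows $\mathfrak{M}$ when $\alpha_1, \dots, \alpha_n \in k$'' in your relativised default is inexpressible in the required shape unless $\kappa$ is definable over $(\kappa'; \in)$ --- which is exactly the hypothesis of Proposition \ref{prop246} that this proposition is supposed to drop. Your relativisation of $\phi^{\tau}_{\mathfrak{M}}$, by contrast, is unproblematic.

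The paper's proof repairs precisely this point. It takes $\phi^d_{\mathfrak{M}'}$ to be a genuinely simple, universally valid sentence (all return values $0$), adds a second fresh constant $\ulcorner d \urcorner$ with $\vartheta_{\mathfrak{M}'}(\ulcorner d \urcorner) = (\kappa', 0)$ serving as a flag, and folds the set-up of the relativised default configuration into the \emph{first transition step}: when ``$d = 0$'', the transition function (which may legitimately refer to the parameter constant) writes the relativised version of $\phi^d_{\mathfrak{M}}$ onto the $\sigma(\mathfrak{M})$ portion of the state and increments $d$; thereafter it simulates $\phi^{\tau}_{\mathfrak{M}}$ relativised to the parameter. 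This costs only a finite overhead in run length (Remark \ref{rem256}), and the rest of your run-projection argument then goes through. To complete your proof you would need to incorporate this (or an equivalent) device rather than relativising $\phi^d_{\mathfrak{M}}$ directly.
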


\begin{proof}
We modify the construction of $\mathfrak{M}'$ in the proof of Proposition \ref{prop246}. To ensure $\kappa$ can be referred to in the relativisation process, we add two fresh constant symbols $\ulcorner c \urcorner$ and $\ulcorner d \urcorner$ to $\sigma(\mathfrak{M'})$ and set
\begin{gather*}
    \vartheta_{\mathfrak{M}'}(\ulcorner c \urcorner) := (\{\kappa\}, 1) \\
    \vartheta_{\mathfrak{M}'}(\ulcorner d \urcorner) := (\kappa', 0) \text{.}
\end{gather*}
As there is no way to formulate $\phi^d_{\mathfrak{M}'}$ such that $\ulcorner c \urcorner$ occurs in it, we choose $\phi^d_{\mathfrak{M}'}$ to be any universally valid sentence over $\sigma(\mathfrak{M'})$. For example, setting all return values to $0$ works. Define $\tau_{\mathfrak{M}'}$ as follows. If ``$d = 0$'' in the current state, make use of $c$ to set up the $\sigma(\mathfrak{M})$ portion of the next state according to the relativised version of $\phi^d_{\mathfrak{M}}$, then increment $d$. Otherwise, do nothing to $d$ and modify the $\sigma(\mathfrak{M})$ portion of the next state according to $\phi^{\tau}_{\mathfrak{M}}$ relativised to $c$. The $\mathfrak{M}'$ defined thus witnesses $A \leq^P_{\kappa'} B$.
\end{proof}

\begin{rem}\label{rem256}
The $\mathfrak{M}'$ defined based on $\mathfrak{M}$ in Proposition \ref{prop252} simulates $\mathfrak{M}$ with finite overhead. That is, given any $A, B \subset \kappa$, if there is an $\alpha$-length terminating run of $\mathfrak{M}$ with input $A$ and output $B$, then for some $n < \omega$ there is an $(\alpha + n)$-length terminating run of $\mathfrak{M}'$ with input $A$ and output $B$.
\end{rem}

Notice that the construction in Proposition \ref{prop252} of $\mathfrak{M}'$ from any $\mathfrak{M}$ witnessing $A \leq^P_{\kappa} B$ and any $\kappa'$ can be made uniform. So let $\mathcal{G}$ be a definable class function mapping each such pair $(\mathfrak{M}, \kappa')$ to such a $\mathfrak{M}'$.

The next proposition is immediate from Propositions \ref{prop251} and \ref{prop252}.

\begin{prop}
Suppose $A \leq^P B$ and $B \leq^P C$. Then $A \leq^P C$.
\end{prop}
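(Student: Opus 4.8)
The plan is to unwind the definition of $\leq^P$ into its $\kappa$-indexed components, bring both reductions to a common space bound, and then apply fixed-$\kappa$ transitivity. First I would use Definition \ref{def839} to extract limit ordinals $\kappa_1$ and $\kappa_2$ witnessing $A \leq^P_{\kappa_1} B$ and $B \leq^P_{\kappa_2} C$; by Definition \ref{def38} these automatically record the containments $A, B \subset \kappa_1$ and $B, C \subset \kappa_2$, which I will need to carry along.

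The crux is that Proposition \ref{prop251} only delivers transitivity when all the reductions in question are witnessed at a single common $\kappa$, whereas a priori $\kappa_1$ and $\kappa_2$ may differ. So next I would set $\kappa := \max(\kappa_1, \kappa_2)$, noting that the maximum of two limit ordinals is again a limit ordinal, and observe that $A, B, C \subset \kappa$. Using the space-monotonicity of Proposition \ref{prop252} I lift each reduction to this common bound: if $\kappa_1 < \kappa$ then $A \leq^P_{\kappa} B$ by Proposition \ref{prop252}, while if $\kappa_1 = \kappa$ the reduction already lives at $\kappa$; symmetrically I obtain $B \leq^P_{\kappa} C$. The only fiddly point is that Proposition \ref{prop252} is phrased for a strict inequality, so the boundary case in which one of $\kappa_1, \kappa_2$ already equals the maximum must be dispatched separately, but that case is trivial.

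With both $A \leq^P_{\kappa} B$ and $B \leq^P_{\kappa} C$ now witnessed at the same space bound $\kappa$, I would invoke Proposition \ref{prop251} to conclude $A \leq^P_{\kappa} C$, and hence $A \leq^P C$ by Definition \ref{def839}. I do not anticipate any genuine obstacle here: the whole argument is a bookkeeping fusion of fixed-space transitivity (Proposition \ref{prop251}) with space-monotonicity (Proposition \ref{prop252}). The mildest care required is confirming that the set-containment side conditions demanded by Definition \ref{def38} are inherited after passing to the maximum, which is immediate since $\kappa_1, \kappa_2 \subseteq \kappa$.
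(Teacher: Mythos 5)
Your proposal is correct and matches the paper's argument: the paper derives this proposition as an immediate consequence of Propositions \ref{prop251} and \ref{prop252}, which is exactly the fusion of fixed-$\kappa$ transitivity with space-monotonicity (lifting both reductions to the maximum of the two witnessing ordinals) that you describe. The extra bookkeeping you flag --- the boundary case where one witness already equals the maximum, and the inheritance of the containment side conditions --- is handled correctly and is indeed routine.
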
 

Proposition \ref{prop252}, the transitivity of $\leq^P$, and the nice relationship between limit ordinals $\kappa$ and the sizes of $\leq^P_{\kappa}$ predecessor sets, all point to $\leq^P$, $\leq^P_{\kappa}$ being the superior and more valid notion of relative computability. In fact, they also point to a GSeqAP being a better model of generalised algorithm than a GSeqA. For this reason, the rest of this subsection will focus on $\leq^P$, $\leq^P_{\kappa}$ and GSeqAPs.

We are now able to solidify our case for using point-wise limit inferiors, whenever they exist, to compute limit states of terminating runs which are not short. Let us first generalise the concepts of ordinal limits preservation and simple limit states (Definitions \ref{def234c} and \ref{def234}) to GSeqAPs and sequences of arbitrary lengths.

\begin{defi}\label{257c}
Let $\mathfrak{M}$ be a GSeqAP. Given $\ulcorner X \urcorner \in \sigma(\mathfrak{M})$ an $n$-ary function symbol and $\phi$ a \emph{simple limit formula for} $(\ulcorner X \urcorner, \mathfrak{M})$, we say $\phi$ \emph{preserves ordinal limits} iff interpreting $\ulcorner \in \urcorner$ as $\in$ and $\ulcorner R^{\ulcorner X \urcorner} \urcorner$ as $R_{Sq}^{\ulcorner X \urcorner}$, 
\begin{equation*}
    (\lambda; \in, R_{Sq}^{\ulcorner X \urcorner}) \models \phi[x_1 \mapsto \alpha_1, \dots, x_{n+1} \mapsto \alpha_{n+1}]
\end{equation*}
for all functions $Sq$ from a limit ordinal into $S(\mathfrak{M})$, all limit ordinals 
\begin{equation*}
    \lambda \geq max\{\kappa(\mathfrak{M}), dom(Sq)\}
\end{equation*}
and all $(\alpha_1, \dots, \alpha_{n+1}) \in \kappa(\mathfrak{M})^{n+1}$ such that 
\begin{equation*}
    \alpha_{n+1} = lim \{(\beta, \ulcorner X \urcorner^{Sq(\beta)}(\alpha_1, \dots, \alpha_n)) : \beta < dom(Sq)\} \text{.}
\end{equation*}
\end{defi}

\begin{defi}\label{def256}
Let $\mathfrak{M}$ be a GSeqAP and $Sq$ be a function from a limit ordinal into $S(\mathfrak{M})$. A state $s \in S(\mathfrak{M})$ is a \emph{simple limit of} $Sq$ iff for each $n$-ary function symbol $\ulcorner X \urcorner \in \sigma(\mathfrak{M})$ there is a simple limit formula $\phi_{lim}^{\ulcorner X \urcorner}$ for $(\ulcorner X \urcorner, \mathfrak{M})$ such that, interpreting $\ulcorner \in \urcorner$ as $\in$ and $\ulcorner R^{\ulcorner X \urcorner} \urcorner$ as $R_{Sq}^{\ulcorner X \urcorner}$, 
\begin{equation*}
    \ulcorner X \urcorner^s (\alpha_1, \dots, \alpha_n) = \alpha_{n+1} \iff (\lambda; \in, R_{Sq}^{\ulcorner X \urcorner}) \models \phi[x_1 \mapsto \alpha_1, \dots, x_{n+1} \mapsto \alpha_{n+1}]
\end{equation*}
whenever $\lambda$ is a limit ordinal $\geq max\{\kappa(\mathfrak{M}), dom(Sq)\}$ and $(\alpha_1, \dots, \alpha_{n+1}) \in \kappa(\mathfrak{M})^{n+1}$. In this case, we say $\{\phi_{lim}^{\ulcorner X \urcorner} : \ulcorner X \urcorner \in \sigma(\mathfrak{M})\}$ \emph{witnesses} $s$ \emph{is a simple limit of} $Sq$.
\end{defi}

Similarly, we can generalise Definitions \ref{def237} and \ref{def238}.

\begin{defi}\label{def257p}
Let $\mathfrak{M}$ be a GSeqAP. Then $\mathfrak{M}$ is \emph{strongly closed under} a set $Y$ iff for each $n$-ary function symbol $\ulcorner X \urcorner \in \sigma(\mathfrak{M})$ there is a simple limit formula $\phi_{lim}^{\ulcorner X \urcorner}$ for $(\ulcorner X \urcorner, \mathfrak{M})$, such that 
\begin{equation*}
    Y = \{\phi_{lim}^{\ulcorner X \urcorner} : \ulcorner X \urcorner \in \sigma(\mathfrak{M})\}
\end{equation*}
and whenever $Sq$ is a function from some limit ordinal into $S(\mathfrak{M})$, there is $s \in S(\mathfrak{M})$ witnessed by $Y$ to be a simple limit of $Sq$.
\end{defi}

\begin{defi}\label{def258}
Let $\mathfrak{M}$ be a GSeqAP. Then $\mathfrak{M}$ is \emph{practically closed under} a set $Y$ iff for each $n$-ary function symbol $\ulcorner X \urcorner \in \sigma(\mathfrak{M})$ there is a simple limit formula $\phi_{lim}^{\ulcorner X \urcorner}$ for $(\ulcorner X \urcorner, \mathfrak{M})$, such that 
\begin{equation*}
    Y = \{\phi_{lim}^{\ulcorner X \urcorner} : \ulcorner X \urcorner \in \sigma(\mathfrak{M})\}
\end{equation*}
and whenever $Sq$ is a function from some limit ordinal into $S(\mathfrak{M})$ satisfying
\begin{itemize}
    \item $Sq(0) \in I(\mathfrak{M})$,
    \item $Sq(\alpha + 1) = \tau_{\mathfrak{M}}(Sq(\alpha))$ for all $\alpha < \delta$, and
    \item $Y$ witnesses $Sq(\gamma)$ is a simple limit of $Sq \restriction \gamma$ for all limit $\gamma < \delta$,
\end{itemize}
there is $s \in S(\mathfrak{M})$ witnessed by $Y$ to be a simple limit of $Sq$.
\end{defi}

Note that Definitions \ref{def257p} and \ref{def258} are almost word-for-word the same as Definitions \ref{def237} and \ref{def238}. This only changes made are \begin{itemize}
    \item substituting ``GSeqAP'' for ``GSeqA'' and 
    \item removing the requirement for $\delta$ to be $< \kappa(\mathfrak{M})$.
\end{itemize} 

\begin{prop}\label{prop261c}
Proposition \ref{prop233} holds with ``GSeqAP'' substituted for ``GSeqA'' and the requirement for $\delta$ to be $< \kappa(\mathfrak{M})$ removed.
\end{prop}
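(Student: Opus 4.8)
The plan is to recycle the point-wise limit-inferior formulas of Proposition \ref{prop233} essentially verbatim, and to reduce the proposition to a single uniformity check: that for a fixed such formula the biconditional of Definition \ref{def256} holds simultaneously at \emph{every} admissible evaluation ordinal $\lambda \ge \max\{\kappa(\mathfrak{M}), dom(Sq)\}$ and for sequences $Sq$ of unrestricted length. First I would fix an $n$-ary function symbol $\ulcorner X \urcorner$ and write, over $\{\ulcorner \in \urcorner, \ulcorner R^{\ulcorner X \urcorner}\urcorner\}$, the standard formula $\Lambda(x_1,\dots,x_{n+1})$ expressing ``$x_{n+1}$ is the limit inferior of $\beta \mapsto \ulcorner X\urcorner^{Sq(\beta)}(x_1,\dots,x_n)$ over $\beta \in D$'', where $D$ is defined as the first-coordinate projection of $R^{\ulcorner X\urcorner}_{Sq}$ and the value at $\beta$ is read off as the unique final coordinate related to $(\beta, x_1,\dots,x_n)$. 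The characterisation I would use is the elementary one that ``eventually the value exceeds $c$'' is equivalent to ``the limit inferior exceeds $c$''; the point of this paragraph is that the quantifiers in $\Lambda$ range only over indices $< dom(Sq) \le \lambda$ and over values $< \kappa(\mathfrak{M}) \le \lambda$, and the limit inferior, when below $\kappa(\mathfrak{M})$, again lies in $\kappa(\mathfrak{M}) \subseteq \lambda$, so the truth value of $\Lambda$ at the relevant tuples is correct and independent of $\lambda$.

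The heart of the matter is the threshold ``$\liminf < \kappa(\mathfrak{M})$ versus $\ge \kappa(\mathfrak{M})$'', which in Proposition \ref{prop233} came for free: there the evaluation structure had universe exactly $\kappa(\mathfrak{M})$, so a limit inferior equal to $\kappa(\mathfrak{M})$ simply had no witness in the structure, and the ``otherwise $=0$'' clause could be triggered by $\neg \exists y\,\Lambda(\vec{x}, y)$. Once $dom(Sq)$ is allowed to reach or exceed $\kappa(\mathfrak{M})$ this trick collapses: we are forced to take $\lambda \ge dom(Sq) > \kappa(\mathfrak{M})$, the ordinal $\kappa(\mathfrak{M})$ now sits inside the universe, $\Lambda(\vec{x}, \kappa(\mathfrak{M}))$ becomes satisfiable, and the formula can no longer distinguish a true limit inferior below $\kappa(\mathfrak{M})$ from one equal to $\kappa(\mathfrak{M})$. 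I therefore need to recover $\kappa(\mathfrak{M})$ internally and replace $\Lambda$ by $[\,\Lambda(\vec{x}, x_{n+1}) \wedge x_{n+1} \in \hat\kappa\,] \vee [\,x_{n+1} = 0 \wedge \neg\exists y\,(\Lambda(\vec{x}, y) \wedge y \in \hat\kappa)\,]$, where $\hat\kappa$ is a $\{\ulcorner \in\urcorner, \ulcorner R^{\ulcorner X\urcorner}\urcorner\}$-definition of $\kappa(\mathfrak{M})$. For every symbol of arity $n \ge 1$ this is immediate: since each $Sq(\beta)$ interprets $\ulcorner X\urcorner$ as a total operation on $\kappa(\mathfrak{M})^n$ and $dom(Sq)$ is a nonzero limit ordinal, the projection of $R^{\ulcorner X\urcorner}_{Sq}$ onto any argument coordinate is exactly the ordinal $\kappa(\mathfrak{M})$, so ``$y \in \hat\kappa$'' literally means $y < \kappa(\mathfrak{M})$. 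A short case check then confirms the rewritten formula outputs the limit inferior in the good case and $0$ in the bad case, uniformly across all admissible $\lambda$, and a routine verification shows it preserves ordinal limits in the sense of Definition \ref{257c}, since a genuine limit below $\kappa(\mathfrak{M})$ coincides with the limit inferior and lies in $\hat\kappa$.

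The main obstacle is the $0$-ary case, i.e.\ constant symbols. Here $R^{\ulcorner X\urcorner}_{Sq}$ is merely a binary relation recording $\beta \mapsto \ulcorner X\urcorner^{Sq(\beta)} \in \kappa(\mathfrak{M})$; its value-projection is only a \emph{subset} of $\kappa(\mathfrak{M})$, cofinal precisely in the bad case, so no $\{\ulcorner \in\urcorner, \ulcorner R^{\ulcorner X\urcorner}\urcorner\}$-formula can pin down $\kappa(\mathfrak{M})$ inside an arbitrarily large $\lambda$. Indeed a constant sequence $\gamma_\beta \equiv 5$ and a sequence climbing cofinally to $\kappa(\mathfrak{M})$ present indistinguishable ``limit inferior equals supremum of range'' data once $\kappa(\mathfrak{M})$ is invisible, yet the first must output $5$ and the second $0$. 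This is exactly the configuration that cannot bite in Proposition \ref{prop233} --- there $dom(Sq) < \kappa(\mathfrak{M})$ forces the limit inferior below $\kappa(\mathfrak{M})$ at regular $\kappa(\mathfrak{M})$, and the universe-boundary trick covers the singular case --- so it constitutes the genuinely new content of Proposition \ref{prop261c}. I expect to resolve it by exposing $\kappa(\mathfrak{M})$ to each constant's limit formula, for instance by working relative to the standard presentation in which $\kappa(\mathfrak{M})$ is recoverable from the ever-present positive-arity symbols, or equivalently by simulating a constant through a unary function whose argument projection reveals $\kappa(\mathfrak{M})$, and then invoking the positive-arity argument verbatim. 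Once this threshold recovery is in place, the remaining obligations --- existence and uniqueness of the simple limit, the absoluteness clauses, and the fact that the chosen family of formulas depends only on $\ulcorner X\urcorner$ and not on $\mathfrak{M}$ --- transfer unchanged from Proposition \ref{prop233}.
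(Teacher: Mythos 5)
Your first two paragraphs are a correct and considerably more careful account of what the paper's one-line proof merely asserts: the paper simply says that the uniform (in $\ulcorner X \urcorner$, independent of $\mathfrak{M}$) limit-inferior formula from Proposition \ref{prop233} ``also works here''. You have correctly isolated the one place where this is not automatic, namely the threshold clause ``output the limit inferior if it lies below $\kappa(\mathfrak{M})$, else output $0$'': in Proposition \ref{prop233} the evaluation structure has universe exactly $\kappa(\mathfrak{M})$, so the bad case is flagged by the non-existence of a witness, whereas in Definition \ref{def256} the formula must be correct in $(\lambda; \in, R_{Sq}^{\ulcorner X \urcorner})$ for \emph{every} limit $\lambda \geq \max\{\kappa(\mathfrak{M}), dom(Sq)\}$, where $\kappa(\mathfrak{M})$ is an ordinary element of the universe. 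Your repair for symbols of arity $n \geq 1$ --- recovering $\kappa(\mathfrak{M})$ as the projection of $R_{Sq}^{\ulcorner X \urcorner}$ onto an argument coordinate, which equals $\kappa(\mathfrak{M})$ by totality of the interpretations --- is correct, uniform in $\mathfrak{M}$, and compatible with preservation of ordinal limits.

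The gap is the $0$-ary case, and neither of your proposed resolutions closes it within the paper's definitions. A simple limit formula for $(\ulcorner X \urcorner, \mathfrak{M})$ is by definition a formula over $\{\ulcorner \in \urcorner, \ulcorner R^{\ulcorner X \urcorner} \urcorner\}$ only, so it cannot ``see'' the trace relations of the positive-arity symbols that would reveal $\kappa(\mathfrak{M})$; and replacing a constant by a unary function produces a different GSeqAP, so it proves a variant of the statement rather than the statement itself, which quantifies over every function symbol of the given $\mathfrak{M}$. The obstruction you describe is real: for a free (non-parameter) constant $\ulcorner c \urcorner$ and a sequence $Sq$ along which $c^{Sq(\beta)}$ climbs cofinally to $\kappa(\mathfrak{M})$, the structure $(\lambda; \in, R_{Sq}^{\ulcorner c \urcorner})$ with $\lambda > \kappa(\mathfrak{M})$ carries no information about $\kappa(\mathfrak{M})$, and the very same binary relation arises from a GSeqAP with a larger tape for which the required output is $\kappa(\mathfrak{M})$ rather than $0$; since simple limit formulas admit no parameters, no single formula --- and for non-definable $\kappa(\mathfrak{M})$, no formula at all --- can produce both answers. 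So your argument is incomplete here. To be fair, your analysis exposes that the paper's own proof silently skips exactly this point; repairing the $0$-ary case appears to require amending Definitions \ref{257c} and \ref{def256} so that a limit formula may refer to $\kappa(\mathfrak{M})$ (say as a parameter or a designated predicate), after which your positive-arity argument does transfer verbatim.
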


\begin{proof}
Any formula that works for all GSeqAs $\mathfrak{M}$ with $\ulcorner X \urcorner \in \sigma(\mathfrak{M})$ in the context of proving Proposition \ref{prop233} (such a formula exists by the last sentence of the proof of Proposition \ref{prop233}) also works here.
\end{proof}

\begin{prop}\label{prop261}
Suppose $\mathfrak{M}$ is a GSeqAP practically closed under $Y$. Then for some signature $\sigma'$, as long as $\mathfrak{M}'$ is a GSeqAP satisfying
\begin{itemize}
    \item $\sigma(\mathfrak{M}') = \sigma'$ and
    \item $\kappa(\mathfrak{M}') = \kappa(\mathfrak{M})$,
\end{itemize}
and $Z$ is a set of simple limit formulas preserving ordinal limits with $\mathfrak{M}'$ strongly closed under $Z$, there is a GSeqAP $\mathfrak{M}^*$ such that
\begin{enumerate}[label=(\alph*)]
    \item $\sigma(\mathfrak{M}^*) = \sigma'$,
    \item $\kappa(\mathfrak{M}^*) = (2^{\kappa(\mathfrak{M})})^+$, and
    \item\label{2613} for all $A, B \subset \kappa(\mathfrak{M}^*)$, a terminating run of $(\mathfrak{M}, Y)$ with input $A \restriction \kappa(\mathfrak{M})$ and output $B \restriction \kappa(\mathfrak{M})$ exists iff a terminating run of $(\mathfrak{M}^*, Z)$ with input $A$ and output $B \restriction \kappa(\mathfrak{M})$ exists.
\end{enumerate}
\end{prop}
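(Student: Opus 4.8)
The plan is to prove this as the long-run analogue of Proposition \ref{prop239}, reusing that construction almost verbatim and isolating the two genuinely new ingredients: the jump in base set to $(2^{\kappa(\mathfrak{M})})^+$ and the relativisation needed to keep the $\mathfrak{M}$-simulation faithful inside this larger ordinal. I would take
$$\sigma' := \sigma(\mathfrak{M}) \cup \{\ulcorner R^{\ulcorner X \urcorner} \urcorner : \ulcorner X \urcorner \in \sigma(\mathfrak{M})\} \cup \{\ulcorner c_0 \urcorner, \ulcorner c_1 \urcorner, \ulcorner k \urcorner\},$$
where the $\ulcorner R^{\ulcorner X \urcorner} \urcorner$ record the history of each $\ulcorner X \urcorner$, the counters $\ulcorner c_0 \urcorner, \ulcorner c_1 \urcorner$ track the step number, and $\ulcorner k \urcorner$ is a fresh parameter constant. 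The decisive point is that, by Remark \ref{rem243}, every terminating run of $\mathfrak{M}$ has length $< (2^{\kappa(\mathfrak{M})})^+$; setting $\kappa(\mathfrak{M}^*) := (2^{\kappa(\mathfrak{M})})^+$ is therefore exactly what is needed for $\ulcorner c_0 \urcorner$ to index every step of such a run and for the history relations to record it. Since $\kappa(\mathfrak{M})$ need not be definable over $((2^{\kappa(\mathfrak{M})})^+; \in)$, I would let $\vartheta_{\mathfrak{M}^*}(\ulcorner k \urcorner) = (\{\kappa(\mathfrak{M})\}, 1)$ and relativise the whole $\sigma(\mathfrak{M})$-portion of $\phi^d_{\mathfrak{M}^*}$ and $\phi^{\tau}_{\mathfrak{M}^*}$ to the initial segment named by $\ulcorner k \urcorner$, precisely as in Proposition \ref{prop252}; this is the one place where the parameter feature of GSeqAPs is indispensable.

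With $\mathfrak{M}^*$ so configured the simulation is identical to that of Proposition \ref{prop239}. At a non-limit step, flagged by $\ulcorner c_0 \urcorner \neq \ulcorner c_1 \urcorner$, the machine copies the current $\sigma(\mathfrak{M})$-portion into the $\ulcorner c_0 \urcorner$-fibre of the history relations, advances the $\sigma(\mathfrak{M})$-portion one $\tau_{\mathfrak{M}}$-step relativised to $\ulcorner k \urcorner$, and increments both counters, so that $\ulcorner c_0 \urcorner, \ulcorner c_1 \urcorner$ read $\alpha, \alpha + 1$ at step $\alpha$. At a limit stage their point-wise limit inferiors coincide, raising the flag $\ulcorner c_0 \urcorner = \ulcorner c_1 \urcorner$; the transition function then reads the stored history off the $\ulcorner R^{\ulcorner X \urcorner} \urcorner$s, computes the $Y$-limit of the run so far --- which exists because $\mathfrak{M}$ is practically closed under $Y$ and, inductively, the history is a genuine partial run --- writes it into the $\sigma(\mathfrak{M})$-portion, and increments only $\ulcorner c_1 \urcorner$ to clear the flag. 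That $Z$ preserves ordinal limits is what ensures the limit-inferior stage leaves every already-filled history fibre untouched and produces the flag, so that no information about the run so far is lost.

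I would then establish \ref{2613} by an induction showing that the $\sigma(\mathfrak{M})$-portion of $\mathfrak{M}^*$ reproduces the terminating run of $(\mathfrak{M}, Y)$ on input $A \restriction \kappa(\mathfrak{M})$, with a latency of a single correction step per limit stage. The part of the input above $\kappa(\mathfrak{M})$ is discarded by the relativisation, and since the $\sigma(\mathfrak{M})$-portion output tape lives inside $\kappa(\mathfrak{M})$, the output of $\mathfrak{M}^*$ equals $B \restriction \kappa(\mathfrak{M})$ exactly when that of $(\mathfrak{M}, Y)$ does; the converse direction reads the $(\mathfrak{M}, Y)$-run off the $\sigma(\mathfrak{M})$-portion and deletes the padding steps.

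The step I expect to be the main obstacle is the counting argument that the padding does not overflow the base set. If $\ulcorner c_0 \urcorner$ were ever forced up to $\kappa(\mathfrak{M}^*) = (2^{\kappa(\mathfrak{M})})^+$ the counters would leave the base set and the simulation would collapse, so I must check that inserting one extra step at each of the $\leq |\delta|$ limit stages of a length-$\delta$ run keeps the padded length below $(2^{\kappa(\mathfrak{M})})^+$. This is where choosing a successor cardinal pays off: the padded run has cardinality at most $|\delta| \leq 2^{\kappa(\mathfrak{M})}$, and since $(2^{\kappa(\mathfrak{M})})^+$ is a cardinal, any ordinal of smaller cardinality lies below it. A secondary point to confirm is that $\mathfrak{M}^*$ is genuinely practically closed under $Z$ over its enlarged base set, so that the expression $(\mathfrak{M}^*, Z)$ is well-posed; this reduces to checking that the limit-inferior-type values computed by the simple limit formulas in $Z$ remain legitimate states over $(2^{\kappa(\mathfrak{M})})^+$, which the hypothesis that $Z$ preserves ordinal limits supplies along every partial run that arises.
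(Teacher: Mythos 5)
Your proposal is correct and follows essentially the same route as the paper: the paper first forms $\mathfrak{M}^{\dagger} := \mathcal{G}(\mathfrak{M}, (2^{\kappa(\mathfrak{M})})^+)$ (the Proposition \ref{prop252} relativisation via a parameter naming $\kappa(\mathfrak{M})$, which is exactly your $\ulcorner k \urcorner$), invokes Remark \ref{rem243} so that every terminating run becomes a short run over the enlarged base set, and then runs the Proposition \ref{prop239} history-and-counters construction on $\mathfrak{M}^{\dagger}$ --- the only difference being that you fuse these two steps into a single machine rather than composing the two prior constructions as black boxes. Your explicit cardinality check that the padded run length stays below the successor cardinal, and your remark that the generalised preservation-of-ordinal-limits hypothesis is what keeps $Z$ usable over the larger base set, are both sound and consistent with what the paper leaves implicit.
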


\begin{proof}
Let us try to follow the proof of Proposition \ref{prop239} with $\mathfrak{M}^{\dagger} := \mathcal{G}(\mathfrak{M}, (2^{\kappa(\mathfrak{M})})^+)$ specified in place of the arbitrary ``$\mathfrak{M}$'' given by said proposition. Note that 
\begin{itemize}
    \item Remark \ref{rem243} remains true with ``$\mathrm{RunsP}_{\kappa}(A, B)$'' substituted for ``$\mathrm{Runs}_{\kappa}(A, B)$'' and ``GSeqAP'' substituted for ``GSeqA'', and 
    \item \ref{2613} holds with $\mathfrak{M}^{\dagger}$ in place of $\mathfrak{M}^*$.
\end{itemize}
Also observe that by having $\mathfrak{M}^{\dagger}$ as our starting point and choosing $\sigma'$ to be $\sigma(\mathfrak{M}^{\dagger})$, the argument from the second paragraph onwards in the proof of Proposition \ref{prop239} applies here if we just assume the hypotheses for $\mathfrak{M}'$ and $Z$ to be as stated in this proposition. 
\end{proof}

Propositions \ref{prop261c} and \ref{prop261} allude to our method of computing limit states being utmost encompassing computability-wise amidst ``locally definable'' competition, regardless of sequence lengths.

Clearly, $\leq^P$ is reflexive, witnessed by the one-step copying of input tape contents onto the output tape. Hence $\leq^P$ is a preorder. We can quotient $\leq^P$ by a canonical equivalence relation to produce a proper-class-sized partial order analogous to the order of Turing degrees.

\begin{defi}
Let $A, B$ be two sets of ordinals. We define two equivalence relations below.
\begin{enumerate}[label=(\arabic*)]
    \item For any limit ordinal $\kappa$, we say $A \equiv^P_{\kappa} B$ iff $A \leq^P_{\kappa} B$ and $B \leq^P_{\kappa} A$.
    \item We say $A \equiv^P B$ iff $A \leq^P B$ and $B \leq^P A$.
\end{enumerate}
\end{defi}

Taking quotient of $\leq^P$ by $\equiv^P$ would result in a partial order. Denote this partial order by $(\mathcal{D}_P, \leq_{\mathcal{D}_P})$. Members of $\mathcal{D}_P$ are called $P$\emph{-degrees}. We would like to study the structure of $P$-degrees under $\leq_{\mathcal{D}_P}$ the way we study the structure of Turing degrees under Turing reducibility (modulo Turing equivalence). But before that, it would serve us well to consult and compare with existing literature on models of generalised computability. 

If classical recursion or computability theory is about the study of subsets of $\omega$ through Turing reducibility and degrees, then one can envision a natural extension of this study aiming at analogous notions of reducibility and degrees for subsets of ordinals larger than $\omega$. Indeed, there have been a number of successful extensions, lumped under the subject of \emph{higher recursion theory}. The theories of alpha recursion (or $\alpha$-recursion) and $E$-recursion are the most developed among them, but both recursion theories in their typical presentations are built upon schemata akin to G\"{o}del's model of real computation (see \cite{takeuti} and \cite{normann} respectively).

Emerging later into the scene is the theory of ordinal computability, started by Koepke in \cite{koepke1}. A comparison with $\alpha$-recursion theory is given in \cite{koepke2}. We will drawn on parts of this comparative study to give an equivalent definition of $\leq^P$.

\subsection{Associations with Constructibility}

Let us begin this subsection with a recapitulation of selected set-theoretic concepts.

\begin{defi}
We use $L[A]$ to denote the \emph{constructible universe relative to a set} $A$, and $L$ to denote $L[\emptyset]$. For each ordinal $\alpha$, let $L_{\alpha}[A]$ respectively denote the $\alpha$-th level $L[A]$, so $L_{\alpha}[\emptyset]$ is just $L_{\alpha}$. The reader may refer to Chapter 13 of \cite{jech} for more details.
\end{defi}

\begin{defi}
Let $\alpha$ be an ordinal and $A$ be a set. Then $\mathbf{\Sigma_1}(L_{\alpha}[A])$ denotes the set of $X \subset \alpha$ with the following property: 
\begin{quote}
    for some $n < \omega$ there are parameters $y_1, \dots, y_n \in L_{\alpha}[A]$ and a $\Sigma_1$ formula $\phi$ with free variables $x_1, \dots, x_{n}, z$ over the signature $\{\ulcorner \in \urcorner, \ulcorner A \urcorner\}$, such that 
    \begin{equation*}
        X = \{\beta < \alpha : (L_{\alpha}[A]; \in, A) \models \phi[x_1 \mapsto y_1, \dots, x_n \mapsto y_n, z \mapsto \beta]\}
    \end{equation*}
    having $\ulcorner \in \urcorner$ interpreted as $\in$ and $\ulcorner A \urcorner$ as $A$.
\end{quote}
In addition, $\mathbf{\Delta_1}(L_{\alpha}[A])$ denotes the set of $X \subset \alpha$ such that $X \in \mathbf{\Sigma_1}(L_{\alpha}[A])$ and $\alpha \setminus X \in \mathbf{\Sigma_1}(L_{\alpha}[A])$.
\end{defi}

\begin{fact}
For every ordinal $\alpha$ and every set $A$, $\mathbf{\Delta_1}(L_{\alpha}[A]) \subset \mathbf{\Sigma_1}(L_{\alpha}[A]) \subset L[A]$.
\end{fact}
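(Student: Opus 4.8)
The plan is to verify the two inclusions separately, the first being immediate and the second being the only one requiring work.

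For $\mathbf{\Delta_1}(L_{\alpha}[A]) \subset \mathbf{\Sigma_1}(L_{\alpha}[A])$ there is nothing to prove beyond unwinding definitions: by the very definition of $\mathbf{\Delta_1}(L_{\alpha}[A])$, membership of any $X$ therein already requires $X \in \mathbf{\Sigma_1}(L_{\alpha}[A])$. So I would dispose of this line in one sentence.

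For $\mathbf{\Sigma_1}(L_{\alpha}[A]) \subset L[A]$, I would fix $X \in \mathbf{\Sigma_1}(L_{\alpha}[A])$, so that $X = \{\beta < \alpha : (L_{\alpha}[A]; \in, A) \models \phi[x_1 \mapsto y_1, \dots, x_n \mapsto y_n, z \mapsto \beta]\}$ for some $\Sigma_1$ formula $\phi$ and parameters $y_1, \dots, y_n \in L_{\alpha}[A]$. The key observations are: (i) the ordinals of $L_{\alpha}[A]$ are exactly $\alpha$, hence $\alpha \subseteq L_{\alpha}[A]$ and $X$ is a subset of the base set $L_{\alpha}[A]$; (ii) ``$\beta$ is an ordinal'' is $\Delta_0$, so the qualifier $\beta < \alpha$ is itself expressible over $L_{\alpha}[A]$; and (iii) a $\Sigma_1$ formula is in particular a first-order formula over $\{\ulcorner \in \urcorner, \ulcorner A \urcorner\}$, and interpreting $\ulcorner A \urcorner$ over the domain $L_{\alpha}[A]$ coincides with using the predicate $A \cap L_{\alpha}[A]$ built into the definability operation $\mathrm{Def}$ that yields $L_{\alpha+1}[A]$. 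Combining these, $X$ is a subset of $L_{\alpha}[A]$ that is first-order definable over $(L_{\alpha}[A]; \in, A \cap L_{\alpha}[A])$ with parameters from $L_{\alpha}[A]$; hence $X \in \mathrm{Def}(L_{\alpha}[A]) = L_{\alpha+1}[A] \subset L[A]$, where the equality is the defining clause of the relative constructible hierarchy (cf. Chapter 13 of \cite{jech}).

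The only point needing care --- and the closest thing to an obstacle --- is the bookkeeping in (iii): one must confirm that evaluating the satisfaction relation of the set-sized structure $(L_{\alpha}[A]; \in, A)$ matches exactly the notion of definability used to pass from $L_{\alpha}[A]$ to $L_{\alpha+1}[A]$, so that no appeal beyond the internal construction of $L[A]$ is required. Should one prefer to avoid invoking the $\mathrm{Def}$ characterisation directly, an equivalent route is to note that $L_{\alpha}[A]$ and $A \cap L_{\alpha}[A]$ are both members of $L[A]$, that Tarski's satisfaction relation for a set-sized structure is absolute between transitive models of $\mathsf{ZF}$, and that $L[A] \models \mathsf{ZF}$; then $X$ is definable inside $L[A]$ as a subset of $\alpha \in L[A]$ and lands in $L[A]$ by Separation. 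Either way the argument is short and elementary, the substance lying entirely in lining up the predicate $A$ with the relativised hierarchy.
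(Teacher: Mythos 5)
Your proof is correct. The paper states this as a \texttt{Fact} with no accompanying proof, so there is nothing to compare against; your argument is the standard one. Both halves check out: the first inclusion is definitional, and for the second you correctly reduce membership in $\mathbf{\Sigma_1}(L_{\alpha}[A])$ to first-order definability over $(L_{\alpha}[A]; \in, A \cap L_{\alpha}[A])$ (using that $\mathrm{Ord} \cap L_{\alpha}[A] = \alpha$ and that ``is an ordinal'' is $\Delta_0$), which places $X$ in $L_{\alpha+1}[A] \subset L[A]$; the alternative route via absoluteness of satisfaction and Separation in $L[A]$ is equally valid.
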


Kripke-Platek set theory, or $\mathsf{KP}$, is a weak fragment of $\mathsf{ZF}$ with close connections to $\alpha$-recursion theory.

\begin{defi}
An ordinal $\alpha$ is \emph{admissible} iff $L_{\alpha} \models \mathsf{KP}$.
\end{defi}

An admissible ordinal has reasonably strong closure properties. For example, it is closed under ordinal addition, multiplication and exponentiation.

We require the concept of $\kappa$-computability with short runs for an upcoming lemma.

\begin{defi}\label{def269p}
Let $\kappa$ be a limit ordinal. We say $B$ is $\kappa$\emph{-computable with parameters from} $A$ \emph{with short runs}, or $B \leq^{P, s}_{\kappa} A$, if there is a GSeqAP $\mathfrak{M}$ witnessing $B \leq^P_{\kappa} A$ and a short terminating run of $\mathfrak{M}$ with input $A$ and output $B$. Here, $\mathfrak{M}$ is said to witness $B \leq^{P, s}_{\kappa} A$.
\end{defi}

Unsurprisingly, we have $\leq^{P, s}_{\kappa}$-analogues of Propositions \ref{prop251} and \ref{prop252}.

\begin{prop}\label{prop270n}
Let $\kappa$ be an admissible ordinal. Suppose $A \leq^{P, s}_{\kappa} B$ and $B \leq^{P, s}_{\kappa} C$. Then $A \leq^{P, s}_{\kappa} C$.
\end{prop}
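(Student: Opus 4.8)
The plan is to reuse, essentially verbatim, the concatenation construction from the proof of Proposition~\ref{prop245} (whose parameterised form underlies Proposition~\ref{prop251}), and then to supplement it with a single length estimate that is exactly where admissibility of $\kappa$ is spent. Fix a GSeqAP $\mathfrak{M}_1$ witnessing $B \leq^{P, s}_{\kappa} C$ via a short terminating run $Sq_1$ of length $\delta_1 < \kappa$, and a GSeqAP $\mathfrak{M}_2$ witnessing $A \leq^{P, s}_{\kappa} B$ via a short terminating run $Sq_2$ of length $\delta_2 < \kappa$. Let $\mathfrak{M}^*$ be the GSeqAP built from $\mathfrak{M}_1$ and $\mathfrak{M}_2$ precisely as in Proposition~\ref{prop245}: it simulates $\mathfrak{M}_1$ in its $\mathfrak{M}_1$-portion while holding the $\mathfrak{M}_2$-portion at its default configuration, and at the step where the $\mathfrak{M}_1$-portion stabilises it copies the output tape onto the input tape, erases the output tape, and thereafter simulates $\mathfrak{M}_2$. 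That proof already yields a terminating run of $\mathfrak{M}^*$ with input $C$ and output $A$; since a terminating run is short precisely when its length is $< \kappa(\mathfrak{M}^*) = \kappa$, the only thing left to establish is the length bound.

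First I would confirm that $\mathfrak{M}^*$'s run faithfully reproduces $Sq_2$ in its second phase despite the offset introduced by the first. Writing $c < \omega$ for the finite transition overhead, $\mathfrak{M}^*$'s stage $(\delta_1 + c) + \beta$ corresponds to $Sq_2(\beta)$, so at a limit stage $\gamma_2$ of $Sq_2$ the machine $\mathfrak{M}^*$ sits at the limit ordinal $(\delta_1 + c) + \gamma_2$, where the standard point-wise limit-inferior rule (Proposition~\ref{prop261c}) forms limit inferiors over all earlier $\mathfrak{M}^*$-stages. The key observation is that a limit inferior depends only on a tail of the sequence, so the first-phase prefix $[0, \delta_1 + c)$ is irrelevant and the value computed for the $\mathfrak{M}_2$-portion agrees with its limit inferior over $[0, \gamma_2)$, which is exactly what the definition of a short terminating run of $\mathfrak{M}_2$ demands. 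The analogous but easier remark disposes of the first phase, where the $\mathfrak{M}_2$-portion sits constantly at its default value and is therefore preserved by limit inferiors. Hence $\mathfrak{M}^*$'s run is a genuine terminating run with input $C$ and output $A$ whose length is the successor ordinal $\delta_1 + c + \delta_2$.

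It remains to bound $\delta_1 + c + \delta_2$. Because $\kappa$ is admissible it is closed under ordinal addition; as $\delta_1, \delta_2 < \kappa$ and $c < \omega \leq \kappa$, we obtain $\delta_1 + c < \kappa$ and then $(\delta_1 + c) + \delta_2 < \kappa$. Thus $\mathfrak{M}^*$'s run has length $< \kappa$, so it is a short terminating run witnessing $A \leq^{P, s}_{\kappa} C$, as required. The substantive content here is precisely this final estimate: transitivity of the underlying relation is already supplied by Proposition~\ref{prop251}, and the sole role of the admissibility hypothesis is to guarantee that splicing two sub-$\kappa$ runs together (with finite bookkeeping in between) does not push the combined run length up to $\kappa$. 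Accordingly, I expect the main obstacle to be bookkeeping rather than conceptual — namely, pinning down the exact run length of the concatenation and verifying that the limit-inferior rule behaves correctly across the phase boundary — since once those are settled, closure of $\kappa$ under addition finishes the argument at once.
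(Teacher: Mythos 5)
Your proposal is correct and follows essentially the same route as the paper: the paper's proof simply reuses the concatenation construction of Proposition \ref{prop245} and adds the one-line observation that the combined run remains short because $\kappa$, being admissible, is closed under ordinal addition. Your extra care about the limit-inferior rule across the phase boundary is a detail the paper leaves implicit, but it is the same argument.
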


\begin{proof}
As in the case of Proposition \ref{prop251}. the proof here is exactly the same as the proof of Proposition \ref{prop245}. We should additionally note that the termination run of $\mathfrak{M}^*$ --- as in the proof of Proposition \ref{prop245} --- with input $A$ and output $C$ is still short because it combines two short runs and $\kappa$ is closed under ordinal addition.
\end{proof}

\begin{prop}\label{prop271n}
Suppose $\kappa, \kappa'$ are admissible ordinals, $\kappa < \kappa'$ and $A \leq^{P, s}_{\kappa} B$. Then $A \leq^{P, s}_{\kappa'} B$.
\end{prop}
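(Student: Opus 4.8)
The plan is to reuse the construction from the proof of Proposition \ref{prop252} essentially verbatim and then merely check that the resulting run stays short. First I would let $\mathfrak{M}$ be a GSeqAP witnessing $A \leq^{P,s}_{\kappa} B$, so that $\mathfrak{M}$ admits a \emph{short} terminating run $Sq$ with input $B$ and output $A$; by the definition of a short terminating run, the domain of $Sq$ is a successor ordinal $\delta < \kappa(\mathfrak{M}) = \kappa$. Then I would form $\mathfrak{M}' := \mathcal{G}(\mathfrak{M}, \kappa')$, the GSeqAP produced by the uniform construction recorded after Proposition \ref{prop252}, so that $\kappa(\mathfrak{M}') = \kappa'$ and $\mathfrak{M}'$ witnesses $A \leq^{P}_{\kappa'} B$.

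The crucial input is Remark \ref{rem256}: the construction simulates $\mathfrak{M}$ with only finite overhead. Applying it to the run $Sq$ of length $\delta$, I obtain a terminating run of $\mathfrak{M}'$ with input $B$ and output $A$ whose length is $\delta + n$ for some $n < \omega$. It then remains to verify that $\delta + n < \kappa(\mathfrak{M}') = \kappa'$, i.e. that this run is short for $\mathfrak{M}'$. Since $\delta < \kappa < \kappa'$ we have $\delta < \kappa'$, and since $\kappa'$ is admissible it is in particular a limit ordinal, so $\delta + n < \kappa'$ for every finite $n$. (Only the limit-ordinal-hood of $\kappa'$ is genuinely needed here; admissibility of $\kappa$ plays the role it does in Proposition \ref{prop270n}, namely controlling the sum of two runs, which does not arise in the present single-run argument.) Hence $\mathfrak{M}'$ exhibits a short terminating run with input $B$ and output $A$, so $\mathfrak{M}'$ witnesses $A \leq^{P,s}_{\kappa'} B$, as required.

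The main (and essentially only) obstacle I anticipate is bookkeeping rather than mathematics: I must make sure that the overhead introduced by $\mathcal{G}$ is genuinely finite and uniform, which is exactly the content of Remark \ref{rem256}, and that no additional limit stage of the new run is pushed past $\kappa'$. Because that overhead consists only of a finite initialisation block — setting up the parameter interpreting $\kappa$ and the auxiliary flag before relativising $\phi^d_{\mathfrak{M}}$ and $\phi^{\tau}_{\mathfrak{M}}$ — no new limit stages are inserted and the limit behaviour of $Sq$ is reproduced after that initial finite segment. Thus the inequality $\delta + n < \kappa'$ is the whole of what needs checking, and the argument closes.
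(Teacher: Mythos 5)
Your proposal is correct and follows exactly the paper's route: the paper's proof is precisely ``noting Remark \ref{rem256}, the proof here is exactly the same as the proof of Proposition \ref{prop252}'', and you have simply spelled out the finite-overhead bookkeeping ($\delta + n < \kappa'$ because $\kappa'$ is a limit ordinal above $\kappa > \delta$) that this reference leaves implicit. Your aside that only the limit-ordinal-hood of $\kappa'$ is genuinely used here is accurate and does not affect the argument.
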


\begin{proof}
Noting Remark \ref{rem256}, the proof here is exactly the same as the proof of Proposition \ref{prop252}.
\end{proof}

There is a natural follow-up to Definition \ref{def269p}.

\begin{defi}\label{def8329}
Let $A, B$ be sets of ordinals. We say $B$ is \emph{computable with parameters from} $A$ \emph{with short runs}, or $B \leq^{P, s} A$, iff $B \leq^{P, s}_{\kappa} A$ for some limit ordinal $\kappa$.
\end{defi}

It so happens that $\leq^P$ and $\leq^{P, s}$ are fundamentally the same.

\begin{prop}\label{prop271}
Let $A, B$ be sets of ordinals. Then $B \leq^P A$ iff $B \leq^{P, s}_{\kappa} A$ for some regular ordinal $\kappa$.
\end{prop}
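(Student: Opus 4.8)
The plan is to treat the two directions separately; the reverse implication is essentially immediate, while the forward implication carries all the weight. For the reverse direction, suppose $B \leq^{P, s}_{\kappa} A$ for some regular ordinal $\kappa$. By Definition \ref{def269p} there is a GSeqAP $\mathfrak{M}$ with $\kappa(\mathfrak{M}) = \kappa$ admitting a short terminating run with input $A$ and output $B$. Unwinding the definitions, this is a short terminating run of $(\mathfrak{M}, Y_{\mathrm{lf}})$, and since $Y_{\mathrm{lf}}$ realises exactly the point-wise limit-inferior rule (Proposition \ref{prop233}), every clause defining a short terminating run is just the corresponding clause of Definition \ref{def242} together with the side constraint $dom(Sq) < \kappa(\mathfrak{M})$. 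The very same $Sq$ is therefore a (not necessarily short) terminating run of $\mathfrak{M}$, so $B \leq^P_{\kappa} A$ and hence $B \leq^P A$; regularity of $\kappa$ is not used.

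For the forward direction, suppose $B \leq^P A$, so that $B \leq^P_{\lambda} A$ for some limit ordinal $\lambda$, witnessed by a GSeqAP $\mathfrak{M}$ with $\kappa(\mathfrak{M}) = \lambda$ and a terminating run $Sq$ with input $A$ and output $B$. The crucial structural input is the GSeqAP form of Remark \ref{rem243} (the one invoked inside the proof of Proposition \ref{prop261}): every terminating run is an injection into $S(\mathfrak{M})$, so $dom(Sq) = \delta < (2^{\lambda})^+$. I set $\kappa := (2^{\lambda})^+$, which is a successor cardinal and hence regular. The idea is to transport the length-$\delta$ run, which lives on the tape $\lambda$, onto the far longer tape $\kappa$, where the mere inequality $\delta < \kappa$ will force the transported run to be short. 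Since $\mathfrak{M}$ is strongly, hence practically, closed under $Y_{\mathrm{lf}}$ by Proposition \ref{prop261c}, applying Proposition \ref{prop261} with $Y = Y_{\mathrm{lf}}$ (and $Z$ the limit-inferior formulas for the enlarged signature) yields a GSeqAP $\mathfrak{M}^*$ with $\kappa(\mathfrak{M}^*) = (2^{\lambda})^+ = \kappa$. As $A, B \subset \lambda \subset \kappa$, clause \ref{2613} of Proposition \ref{prop261}, applied to the sets $A, B$ themselves (so that $A \restriction \lambda = A$ and $B \restriction \lambda = B$), converts the terminating run of $\mathfrak{M}$, equivalently of $(\mathfrak{M}, Y_{\mathrm{lf}})$, into a terminating run $Sq^*$ of $\mathfrak{M}^*$ with input $A$ and output $B$.

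The hard part will be checking that $Sq^*$ is short, that is $dom(Sq^*) < \kappa(\mathfrak{M}^*) = \kappa$. The simulation underlying Proposition \ref{prop261} advances $\mathfrak{M}$ one stage at a time, spending only a bounded finite number of extra ``latency'' steps at each limit stage to recopy the stored history and take its point-wise limit. Writing $f(\alpha)$ for the $\mathfrak{M}^*$-stage corresponding to the $\mathfrak{M}$-stage $\alpha \leq \delta$, the map $f$ is strictly increasing, jumps by a fixed finite amount at successors, and satisfies $f(\gamma) = (\sup_{\beta < \gamma} f(\beta)) + k_{\gamma}$ with $k_{\gamma} < \omega$ at each limit $\gamma$. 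A transfinite induction then yields $f(\alpha) < \kappa$ for all $\alpha \leq \delta$: the successor case is trivial, and at a limit $\gamma \leq \delta < \kappa$ we have $|\gamma| < \kappa$, so the supremum of the ordinals $f(\beta) < \kappa$ over $\beta < \gamma$ remains below $\kappa$ precisely because $\kappa = (2^{\lambda})^+$ is regular. Hence $dom(Sq^*) < \kappa$, so $Sq^*$ is a short terminating run of $\mathfrak{M}^*$, giving $B \leq^{P, s}_{\kappa} A$ with $\kappa$ regular. It is exactly this use of regularity, to keep the accumulated limit-stage latency from reaching $\kappa$, that accounts for the regular witness demanded by the statement.
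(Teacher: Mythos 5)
Your proof is correct, and its essential content --- choosing $\kappa := (2^{\lambda})^+$, invoking Remark \ref{rem243} (runs are injections into the state space, hence of length $\delta < (2^{\lambda})^+$), and noting that a successor cardinal is regular --- is exactly the paper's. Where you diverge is in how the run is transported to the larger tape. The paper applies Proposition \ref{prop252} directly and cites Remark \ref{rem256}: the machine $\mathcal{G}(\mathfrak{M}, \kappa)$ simulates $\mathfrak{M}$ with a single finite block of overhead, so shortness of the transported run is the one-liner $\delta + n < \kappa$. You instead route through Proposition \ref{prop261}, whose simulation carries the extra $R^{\ulcorner X \urcorner}$/counter apparatus for re-deriving limit states, and so you are forced into the per-limit-stage latency induction on $f$ to confirm shortness. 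That induction is sound, but the detour is heavier than needed: Proposition \ref{prop261} exists to normalise an arbitrary practical-closure limit rule $Y$ to the standard lim-inf rule, which is vacuous here since the witnessing machine already uses $Y_{\mathrm{lf}}$. One small mischaracterisation: regularity of $\kappa$ is not really what the proof of this proposition turns on --- any cardinal above the run-length bound of Remark \ref{rem243} would absorb the finite (or even per-limit-stage) overhead --- rather, the statement demands a \emph{regular} witness because regularity is what is actually consumed downstream, in the tree-size argument of Lemma \ref{lem269}.
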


\begin{proof}
Clearly, $B \leq^P A$ holds if $B \leq^{P, s}_{\kappa} A$ for some regular ordinal $\kappa$. Now assume $B \leq^P A$. Then $B \leq^P_{\kappa'} A$ for some limit ordinal $\kappa'$. Let $\lambda := (2^{|\kappa'|})^+$, so that $\lambda$ is a regular ordinal. By Proposition \ref{prop252} and Remarks \ref{rem243} and \ref{rem256}, $B \leq^{P, s}_{\lambda} A$.
\end{proof}

\begin{prop}
Let $A, B$ be sets of ordinals. Then $B \leq^P A$ iff $B \leq^{P, s} A$.
\end{prop}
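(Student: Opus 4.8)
The plan is to derive this proposition almost immediately from Proposition \ref{prop271}, which already establishes the equivalence of $\leq^P$ with $\leq^{P,s}_{\kappa}$ for \emph{regular} $\kappa$. The only work left is to reconcile the quantifier over regular ordinals there with the quantifier over limit ordinals appearing in Definition \ref{def8329}, and this amounts to the elementary observation that every infinite regular ordinal is a limit ordinal. I would prove both directions separately, each in a couple of lines.

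For the forward direction, I would suppose $B \leq^P A$ and apply Proposition \ref{prop271} to obtain a regular ordinal $\kappa$ with $B \leq^{P,s}_{\kappa} A$. Since a regular ordinal is in particular a limit ordinal, Definition \ref{def8329} immediately yields $B \leq^{P,s} A$. Thus the existential ``for some regular $\kappa$'' collapses into the weaker existential ``for some limit ordinal $\kappa$'' with no further argument.

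For the backward direction, I would suppose $B \leq^{P,s} A$, so that by Definition \ref{def8329} there is a limit ordinal $\kappa$ with $B \leq^{P,s}_{\kappa} A$. Unwinding Definition \ref{def269p}, this gives a GSeqAP $\mathfrak{M}$ with $\kappa(\mathfrak{M}) = \kappa$ admitting a short terminating run with input $A$ and output $B$. As a short terminating run is by definition a terminating run, the same $\mathfrak{M}$ already witnesses $B \leq^P_{\kappa} A$, and hence $B \leq^P A$.

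I expect no genuine obstacle here: all of the substance was absorbed into Proposition \ref{prop271}, whose proof passes to the regular cardinal $\lambda = (2^{|\kappa'|})^+$ and combines the length bound of Remark \ref{rem243} with the finite-overhead simulation of Remark \ref{rem256}. The sole subtlety worth recording explicitly is that regularity of the witnessing $\kappa$ in Proposition \ref{prop271} is strictly stronger than being a limit ordinal, so the ``only if'' inclusion is automatic once that remark is made.
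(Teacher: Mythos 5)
Your proposal is correct and follows essentially the same route as the paper: the forward direction is exactly the paper's appeal to Proposition \ref{prop271} (noting that a regular ordinal is a limit ordinal), and the backward direction, which the paper dismisses as obvious, is just the unwinding you give, namely that a short terminating run is in particular a terminating run. No gaps.
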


\begin{proof}
Again, the ``if`` portion is obvious, so assume $B \leq^P A$. By Proposition \ref{prop271}, $B \leq^{P, s}_{\kappa} A$ for some regular ordinal $\kappa$, which implies $B \leq^{P, s} A$.
\end{proof}

For a limit ordinal $\alpha$, an $\alpha$-machine according to Koepke (see \cite{koepke2}) is essentially a Turing machine with tape length $\alpha$, oracle length up to $\alpha$, and finitely many fixed ordinal parameters $< \alpha$. We can think of an input to an $\alpha$-machine as a code $\langle X, O \rangle$ of a pair of sets, where $X$ represents the non-oracle component of the input and $O$ represents the oracle. At limit steps the limit inferiors of previous head and state positions, along with the previous contents of each cell, are taken. The machine \emph{halts} iff it terminates in $< \alpha$ many steps.

\begin{prop}\label{prop270}
Let $\alpha$ be a limit ordinal. An $\alpha$-machine $M$ can be simulated by a GSeqAP $\mathfrak{M}$ in the sense that for any $A, B \subset \alpha$, $M$ halts on input $A$ with output $B$ iff $\mathfrak{M}$ has a short terminating run with input $A$ and output $B$.
\end{prop}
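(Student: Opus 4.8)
The plan is to generalise the Turing-machine simulation of Example \ref{ex228} from the base set $\omega$ to the base set $\alpha$, using the extra expressive power of GSeqAPs to absorb the finitely many ordinal parameters of $M$. Concretely, I would set $\kappa(\mathfrak{M}) = \alpha$ and give $\sigma(\mathfrak{M})$ the distinguished symbols $\ulcorner \in \urcorner$, $\ulcorner \mathrm{In} \urcorner$, $\ulcorner \mathrm{Out} \urcorner$ together with constants $\ulcorner h \urcorner$ (head position), $\ulcorner t \urcorner$ (state, with the finitely many states of $M$ coded by ordinals $< \omega \subset \alpha$), a flag constant $\ulcorner e \urcorner$ to single out the initial loading step exactly as in Example \ref{ex228}, and one constant $\ulcorner p_i \urcorner$ for each ordinal parameter $\gamma_i < \alpha$ of $M$. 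The crucial point --- and the reason a GSeqAP rather than a GSeqA is needed --- is that Definition \ref{def836} lets me set $\vartheta_{\mathfrak{M}}(\ulcorner p_i \urcorner) = (\{\gamma_i\}, 1)$, pinning each $\ulcorner p_i \urcorner$ to the corresponding parameter of $M$. The working tape of $M$ is carried by $\ulcorner \mathrm{Out} \urcorner$ (read as a boolean function), so that $u_{\mathfrak{M}}$ reads the output $B$ directly off the final state.

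With the signature fixed, I would build $\phi^d_{\mathfrak{M}}$ and $\phi^{\tau}_{\mathfrak{M}}$ by transcribing the recipe of Examples \ref{ex212} and \ref{ex228} almost verbatim, replacing $\omega$ by $\alpha$, the closed terms $\langle n \rangle$ by the parameter constants $\ulcorner p_i \urcorner$ wherever $M$'s transition table refers to a parameter, and the predecessor/successor functions on $\omega$ by their $\Delta_0$-definable analogues on $\alpha$ (moving right is $h \mapsto h + 1$; moving left is the ordinal predecessor on successors, with a fixed convention at $0$ and at limits). Since all these operations, the reading and writing of the scanned cell, and the state update given by $\delta$ are first-order definable over $(\alpha; \in)$ using only the parameter constants, each witness $\phi^{\ulcorner X \urcorner}$ is a legitimate formula, so $\phi^{\tau}_{\mathfrak{M}}$ is bounded for $\sigma(\mathfrak{M})$ and $\phi^d_{\mathfrak{M}}$ is simple for $\sigma(\mathfrak{M})$. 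The halting state of $M$ is handled by the clause making $\tau_{\mathfrak{M}}$ a fixed point there, so $\mathfrak{M}$ terminates precisely when $M$ enters its halting state. An oracle query is simulated by reading the copied input region, since in the framing of the proposition the oracle $O$ is part of the input code $A$ and is read-only.

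The heart of the argument is the limit stages, and this is where I expect the main work to lie. A short terminating run of $\mathfrak{M}$ computes limit states by the standard point-wise limit-inferior rule $Y_{\mathrm{lf}}$ furnished by Proposition \ref{prop233} (in its GSeqAP form, Proposition \ref{prop261c}): identifying constants with $0$-ary functions and relations with boolean functions, the value at a limit $\gamma$ is $lim \ inf$ of the earlier values when that lies below $\alpha$, and $0$ otherwise. I would check that this coincides exactly with $M$'s limit rule: the head constant $\ulcorner h \urcorner$ and the state constant $\ulcorner t \urcorner$ receive the $lim \ inf$ of the previous head and state values, and each cell of $\ulcorner \mathrm{Out} \urcorner$, being boolean, receives $lim \ inf$ of its previous $\{0,1\}$-values --- which is precisely the $\alpha$-machine's liminf rule on all three components. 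Because a short run has length $\delta < \alpha$, every relevant limit $\gamma$ is $< \alpha$ and the $lim \ inf$s over $\beta < \gamma$ are computed over genuine $< \alpha$-sequences of ordinals $< \alpha$; I would confirm that wherever $M$'s liminf stays in range so does $\mathfrak{M}$'s, and that the two default identically when it does not, so no divergence is introduced at limits.

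Finally I would tie the two runs together by induction on ordinal stages: modulo the single initial loading step of $\mathfrak{M}$ (a finite overhead that, $\alpha$ being a limit, never affects shortness), the configuration of $M$ after $\beta$ steps is faithfully recorded in the state of $\mathfrak{M}$, with head, state, tape, and parameters matching. Hence $M$ halts on input $A$ with output $B$ (i.e. terminates in $< \alpha$ steps) exactly when $\mathfrak{M}$ reaches a $\tau_{\mathfrak{M}}$-fixed point after $< \alpha = \kappa(\mathfrak{M})$ steps, which is exactly a short terminating run of $\mathfrak{M}$ with input $A$ and output $B$. The main obstacle, as noted, is the careful matching of the limit rules --- particularly the head-position and the out-of-range ($lim \ inf \geq \alpha$) cases --- and confirming that the liminf conventions of Koepke's $\alpha$-machines agree with the GSeqAP default over the whole range encountered in a short run.
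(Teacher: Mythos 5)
Your proposal is correct and follows essentially the same route as the paper: adapt Example \ref{ex228} by replacing $\omega$ with $\alpha$, absorb $M$'s finitely many ordinal parameters as GSeqAP parameter constants via $\vartheta_{\mathfrak{M}}$, read the oracle off the input coding, and observe that the point-wise limit-inferior rule for GSeqAP limit states coincides with Koepke's liminf rule for $\alpha$-machine limit configurations. The paper states all of this more tersely, so your added detail on the limit-stage matching and the stage-by-stage induction is a faithful elaboration rather than a different argument.
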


\begin{proof}
It is not difficult to check that modifying Example \ref{ex228} by substituting $\alpha$ for $\omega$, and adding oracle reads based on how the oracle is coded into the input (through e.g. adding a third tape), works out just fine. The finitely many ordinal parameters of an $\alpha$-machine can be ported wholesale to a GSeqAP as the latter's ordinal parameters; the way a GSeqAP determines the limit states of a computation also perfectly mimicks how limit configurations of an $\alpha$-machine are defined. Moreover, an $\alpha$-machine halting on input $A$ with output $B$ is equivalent to the GSeqAP simulating it having a short terminating run with input $A$ and output $B$.
\end{proof}

\begin{defi}[Koepke, \cite{koepke2}]
Let $\alpha$ be a limit ordinal. A set $B \subset \alpha$ is $\alpha$\emph{-computable in} $A \subset \alpha$ --- denoted $B \preceq_{\alpha} A$ --- iff there is an $\alpha$-machine $M$ for which
\begin{enumerate}[label=(\alph*)]
    \item $M$ halts on input $\langle \{\beta\}, A \rangle$ for all $\beta < \alpha$, and
    \item $B = \{\beta < \alpha : M(\langle \{\beta\}, A \rangle) = 1\}$.
\end{enumerate}
In this case, $M$ \emph{witnesses} $B$ \emph{is} $\alpha$\emph{-computable in} $A$.
\end{defi}

\begin{fact}[Koepke, \cite{koepke2}]\label{fact334}
Let $\alpha$ be an admissible ordinal. Then $B$ is $\alpha$-computable in $A$ iff $B \in \mathbf{\Delta_1}(L_{\alpha}[A])$.
\end{fact}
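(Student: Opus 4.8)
The plan is to prove the two directions separately, exploiting the absoluteness and coding of $\alpha$-machine computations together with the closure properties granted by the admissibility of $\alpha$. Throughout I would fix the finitely many ordinal parameters of the witnessing machine as parameters lying in $L_\alpha[A]$, and make repeated use of the fact that every level $L_\xi[A]$ with $\xi < \alpha$ belongs to $L_\alpha[A]$ and is constructed uniformly from $A$.

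For the forward direction, suppose an $\alpha$-machine $M$ witnesses $B \preceq_\alpha A$. A run of $M$ on input $\langle \{\beta\}, A \rangle$ of length $\delta < \alpha$ is just a function assigning to each ordinal $< \delta$ a configuration (head position, state, and tape contents); since $\alpha$ is closed under the relevant operations, such a run is coded by a single element of $L_\alpha[A]$. The predicate asserting that a set $c$ codes a \emph{halting} run of $M$ on this input with output $i$ is $\Delta_0$ in the parameters $A$, $\beta$, and the machine parameters, because verifying it requires only checking the local transition rule at successor stages and the limit-inferior rule at limit stages, all quantified below $\delta = dom(c)$. Hence
\begin{equation*}
    \beta \in B \iff \exists c \ (c \text{ codes a halting run of } M \text{ on } \langle \{\beta\}, A \rangle \text{ with output } 1)
\end{equation*}
is $\Sigma_1$ over $(L_\alpha[A]; \in, A)$, and since $M$ halts on every input $\langle \{\beta\}, A \rangle$, the complementary predicate $\beta \notin B$ is likewise $\Sigma_1$. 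Therefore $B \in \mathbf{\Delta_1}(L_\alpha[A])$.

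For the reverse direction, I would start from fixed $\Sigma_1$ definitions of $B$ and of $\alpha \setminus B$ over $(L_\alpha[A]; \in, A)$ and design an $\alpha$-machine that, on input $\langle \{\beta\}, A \rangle$, constructs the hierarchy $L_\xi[A]$ level by level while dovetailing a search for a witness to either $\beta \in B$ or $\beta \notin B$. Since the passage from $A$ to each $L_\xi[A]$ is itself $\alpha$-machine computable, at every stage the machine can test whether the relevant bounded existential witness has appeared; upon finding one it halts, outputting $1$ or $0$ accordingly. The machine halts on every input precisely because exactly one of the two $\Sigma_1$ conditions holds for each $\beta$.

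The main obstacle is verifying that this search terminates in fewer than $\alpha$ steps, and it is exactly here that admissibility is indispensable. Since $L_\alpha \models \mathsf{KP}$, $\Sigma_1$-reflection guarantees that a witness to a true $\Sigma_1$ statement already appears inside some level $L_\xi[A]$ with $\xi < \alpha$, while $\Sigma_1$-collection ensures that the machine's simulation of the construction up to stage $\xi$ consumes fewer than $\alpha$ steps. I would also have to confirm that the bookkeeping needed to enumerate $L[A]$ and to evaluate the $\Sigma_1$ matrices on the fly fits within the tape length $\alpha$ and respects the limit-inferior rule at limit stages; these points are routine once the termination bound has been secured.
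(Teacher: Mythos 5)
The paper does not prove this statement: it is imported as a cited Fact from Koepke's work, so there is no in-paper argument to compare against. Your sketch reconstructs the standard proof from the literature, and its overall architecture --- coding halting runs to get a $\mathbf{\Sigma_1}$ definition of $B$ and of $\alpha \setminus B$ in the forward direction, and in the reverse direction building the $L_\xi[A]$-hierarchy on an $\alpha$-machine and using $\Sigma_1$-reflection plus admissibility to bound the search below $\alpha$ --- is the right one.

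There is, however, one concrete gap in your forward direction. You assert that a run of length $\delta < \alpha$, viewed as a function assigning to each stage a configuration including the \emph{tape contents}, ``is coded by a single element of $L_\alpha[A]$.'' As stated this is false: any set of ordinals belonging to $L_\alpha[A]$ lies in some $L_{\xi+1}[A]$ and is therefore a subset of $\xi < \alpha$, i.e.\ bounded in $\alpha$. The full tape content of an $\alpha$-machine is a subset of $\alpha$ that is unbounded whenever the oracle $A$ is (indeed $A$ itself need not be an element of $L_\alpha[A]$), so already the time-$0$ configuration cannot be coded this way. The standard repair is to observe that at time $t < \alpha$ the head has visited only cells with index $\leq t$, so the configuration is determined by the state, the head position, and the \emph{bounded} set of cells whose content differs from the initial inscription; the initial inscription itself is recovered from the predicate $\ulcorner A \urcorner$ of the structure $(L_\alpha[A]; \in, A)$ rather than from a set parameter. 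With reduced configurations of this kind, $\Sigma_1$-recursion in $\mathsf{KP}$ does place the run in $L_\alpha[A]$ and your $\Delta_0$ verification of the transition and limit-inferior rules goes through. A cosmetic version of the same issue appears in the reverse direction: the parameters $y_1, \dots, y_n \in L_\alpha[A]$ of the given $\Sigma_1$ definitions must be converted into ordinal parameters $< \alpha$ (via the canonical well-ordering of $L_\alpha[A]$) before they can be handed to an $\alpha$-machine, which by definition accepts only finitely many ordinals $< \alpha$ as parameters.
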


Recall the following definition referenced in Subsection \ref{subsec22}.

\begin{defi}
Given sets of ordinals $A$ and $B$, we say $B \preceq_A A$ iff there exists an admissible ordinal $\alpha$ such that $B \preceq_{\alpha} A$.
\end{defi}

\begin{rem}\label{rem336}
It follows rather directly from Fact \ref{fact334} that whenever $A$ and $B$ are sets of ordinals, $B \preceq_A A$ iff $B \in L[A]$. In other words, $\preceq_A$ coincides with the relative constructibility relation.
\end{rem}

\begin{lem}\label{lem269}
Let $B \leq^{P, s}_{\kappa} A$ for some regular ordinal $\kappa$. Then $B \in \mathbf{\Delta_1}(L_{\kappa}[A])$, so $B$ is $\alpha$-computable in $A$ by Fact \ref{fact334}.
\end{lem}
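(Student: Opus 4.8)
The plan is to show directly that both $B$ and $\kappa \setminus B$ are $\mathbf{\Sigma_1}$-definable over the structure $(L_\kappa[A]; \in, A)$, with $A$ entering as a unary predicate; this gives $B \in \mathbf{\Delta_1}(L_\kappa[A])$, and the closing clause then follows from Fact \ref{fact334} once we recall that a regular ordinal is admissible (so $L_\kappa \models \mathsf{KP}$, and in fact $L_\kappa[A] \models \mathsf{KP}$). Fix a GSeqAP $\mathfrak{M}$ with $\kappa(\mathfrak{M}) = \kappa$ witnessing $B \leq^{P,s}_\kappa A$, together with its short terminating run $Sq$ of length $\delta + 1$ with $\delta < \kappa$, input $A$ and output $B$; this run is unique by the determinism built into Definition \ref{def224}. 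I will lean on two standing facts: the adaptation of Proposition \ref{prop226} to GSeqAPs, which makes the initial state $l_{\mathfrak{M}}(A)$, the one-step relation $\tau_{\mathfrak{M}}$, the reading-off map $u_{\mathfrak{M}}$ and the limit-inferior rule absolute and $\Delta_0/\Delta_1$ over any transitive $\mathsf{KP}$-model carrying the relevant parameters; and the fact that, $\kappa$ being regular, every bounded piece $A \cap \eta$ with $\eta < \kappa$ lies in $L_\kappa[A]$.

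The crux, and the main obstacle, is that a single configuration $Sq(\xi)$ expands $(\kappa; \in)$ while carrying the full input $A$, so it has size $\kappa$ and is \emph{not} an element of $L_\kappa[A]$; the naive $\mathbf{\Sigma_1}$ definition ``there is a halting run outputting $1$ at $\beta$'' therefore quantifies over an object outside $L_\kappa[A]$. I would resolve this by a bounded-support argument, which is exactly where regularity of $\kappa$ is spent. For a fixed coordinate $\beta < \kappa$, trace the dependencies of $\ulcorner \mathrm{Out} \urcorner^{Sq(\delta)}(\beta)$ backwards through the run: at successor steps the boundedness of $\phi^{\tau}_{\mathfrak{M}}$ (Definition \ref{def220}) localises each value to finitely many coordinates of the previous state, and at limit steps the limit-inferior rule refers only to the prior history \emph{at the same coordinate}. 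I would argue that the resulting cone of influence has size $< \kappa$, as it meets each of the $\delta < \kappa$ steps in a coordinate-set that grows by only finitely much per successor step, and hence, by regularity, is contained in $[0,\eta)$ for some $\eta < \kappa$. Where the witnessing formulas carry unbounded quantifiers, I would instead close $\{\beta\} \cup \{\text{parameters}\}$ under the Skolem functions of the finitely many transition formulas across all $\delta$ steps, again producing a closure point $\eta < \kappa$ by regularity, so that the substructures on $\eta$ reflect every transition-truth needed to compute coordinate $\beta$.

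With such an $\eta$ in hand, the restriction of $Sq$ to coordinates $< \eta$ is a \emph{self-contained} partial run: it is fixed by $A \cap \eta \in L_\kappa[A]$ and ordinals $< \kappa$, it is internally consistent with $\phi^d_{\mathfrak{M}}$, $\tau_{\mathfrak{M}}$ and the limit rule, it halts, and it correctly decides the bit at $\beta$. Being a set of size $< \kappa$ assembled absolutely from an element of $L_\kappa[A]$, it is itself an element of $L_\kappa[A]$ (here $A$-admissibility, via $\Sigma_1$-recursion inside $L_\kappa[A]$, is what actually lets one build it). This yields
\begin{align*}
    \beta \in B \iff {} & (L_\kappa[A]; \in, A) \models \exists \eta\, \exists c\ \big( c \text{ is a self-contained halting} \\
    & \text{sub-run on } [0,\eta) \text{ built from } A \cap \eta \text{ outputting } 1 \text{ at } \beta \big) \text{,}
\end{align*}
and the same statement with $1$ replaced by $0$ defines $\kappa \setminus B$; both are $\mathbf{\Sigma_1}$. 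By determinism of $\mathfrak{M}$ every self-contained sub-run agrees with the true run on the coordinates it covers, so for each $\beta$ exactly one of the two holds, which is precisely $B \in \mathbf{\Delta_1}(L_\kappa[A])$.

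Finally I would invoke Fact \ref{fact334}: since $\kappa$ is regular it is admissible, so $B \in \mathbf{\Delta_1}(L_\kappa[A])$ gives that $B$ is $\kappa$-computable in $A$, completing the lemma. The delicate points for the full write-up are the verification that the cone of influence (or the Skolem closure) genuinely closes off below $\kappa$ — this is the step that uses regularity and must be carried uniformly through the $< \kappa$ limit stages — and the bookkeeping that a self-contained bounded sub-run is absolutely coded by an element of $L_\kappa[A]$, which rides on the GSeqAP form of Proposition \ref{prop226} and the absoluteness of taking limit inferiors.
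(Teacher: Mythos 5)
Your core argument is the same as the paper's: fix the witnessing GSeqAP and its short terminating run, trace the dependencies of the output bit at $\beta$ backwards through the run using the boundedness of $\phi^{\tau}_{\mathfrak{M}}$ at successor steps and the coordinate-wise limit-inferior rule at limit steps, use regularity of $\kappa$ to bound the resulting object below $\kappa$, and assert its existence by a $\mathbf{\Sigma_1}$ formula over $L_{\kappa}[A]$ whose construction is absolute. The paper phrases the witnessing object as a ``dependency tree of terms'' rather than a bounded self-contained sub-run, but it is the same cone-of-influence idea resting on the same regularity and absoluteness considerations. The one place you genuinely diverge is the complement: the paper builds an auxiliary GSeqAP $\mathfrak{M}'$ that simulates $\mathfrak{M}$ and then flips every output bit, so that $\mathfrak{M}'$ witnesses $\kappa \setminus B \leq^{P,s}_{\kappa} A$ and the first half of the argument applies verbatim, whereas you reuse the same witness and replace ``outputs $1$ at $\beta$'' by ``outputs $0$ at $\beta$''. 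Your shortcut is legitimate and slightly more economical, since a short terminating run decides every coordinate rather than merely semideciding membership; but it puts all the weight on the claim that \emph{every} internally-verified self-contained sub-run agrees with the true run. Because the witnesses $\phi^{\ulcorner X \urcorner}$ to $\phi^{\tau}_{\mathfrak{M}}$ being bounded may contain quantifiers ranging over all of $\kappa$, a candidate sub-run whose support is not elementary in the true states could compute wrong bits, making your two $\mathbf{\Sigma_1}$ formulas overlap and destroying $\mathbf{\Delta_1}$-ness; so the Skolem-closure condition you invoke to prove existence must also be written into the formal predicate ``$c$ is a self-contained sub-run'' (the paper's uniqueness claim for its tree carries an analogous obligation). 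With that made explicit, your proof is complete.
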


\begin{proof}
Let $\mathfrak{M}$ be a GSeqAP with parameter set $K$ witnessing $B \leq^{P, s}_{\kappa} A$, and $Sq$ be a short terminating run of $\mathfrak{M}$ with input $A$ and output $B$. Suppose we want to find out whether some $\beta < \kappa$ is a member of $B$. To do so, we draw up a dependency tree of terms of which interpretations among members of $ran(Sq)$ are needed to decide if $\beta \in B$. This tree is finitely branching by the bounded exploration postulate, and it has number of levels no more than $dom(Sq)$. By the regularity of $\kappa$, it has size $< \kappa$. 

Now, the statement $x \in B$ is semantically equivalent over $V$ to a $\Sigma_1$ formula with free variable $x$ and parameters among $K \cup \{dom(Sq)\}$ saying that such a (unique) tree of terms and interpretations exist. The construction of said tree, due to its simplicity, can be carried out correctly in $L[A]$ through transfinite recursion. Moreover, $L[A]$ recognises the tree has size $< \kappa$, so indeed $B \in \mathbf{\Sigma_1}(L_{\kappa}[A])$.

We can define a GSeqAP $\mathfrak{M}'$ that copies what $\mathfrak{M}$ does on each input until $\mathfrak{M}$ terminates, and flips every bit of the output tape thereafter. Now $\mathfrak{M}'$ witnesses $\kappa \setminus B \leq^{P, s}_{\kappa} A$, so by the argument in the preceding paragraph we have $\kappa \setminus B \in \mathbf{\Sigma_1}(L_{\kappa}[A])$.
\end{proof}

\begin{defi}[Koepke, \cite{koepke2}]
Let $\alpha$ be a limit ordinal. A set $B \subset \alpha$ is $\alpha$\emph{-computably enumerable in} $A \subset \alpha$ iff there is an $\alpha$-machine $M$ for which
\begin{equation*}
    B = \{\beta < \alpha : M \text{ halts on input } \langle \{\beta\}, A \rangle\} \text{.}
\end{equation*}
In this case, $M$ \emph{witnesses} $B$ \emph{is} $\alpha$\emph{-computably enumerable in} $A$.
\end{defi}

\begin{fact}[Koepke, \cite{koepke2}]\label{fact272}
Let $\alpha$ be an admissible ordinal. Then $B$ is $\alpha$-computably enumerable in $A$ iff $B \in \mathbf{\Sigma_1}(L_{\alpha}[A])$.
\end{fact}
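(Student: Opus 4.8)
The plan is to deduce Fact \ref{fact272} from the already-available Fact \ref{fact334}, exploiting the two standard slogans that $\mathbf{\Sigma_1}$-definability is the projection of $\mathbf{\Delta_1}$-definability, and that computable enumerability is the halting (i.e.\ projection) of a decidable predicate. Both directions then reduce to manufacturing a suitable $\mathbf{\Delta_1}(L_\alpha[A])$ relation and invoking Fact \ref{fact334} to toggle between $\mathbf{\Delta_1}$-definability and $\alpha$-computability.

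For the direction from computable enumerability to $\mathbf{\Sigma_1}$, suppose an $\alpha$-machine $M$ witnesses that $B$ is $\alpha$-c.e.\ in $A$, and consider the relation $H(\delta, \beta)$ asserting that $M$ halts on input $\langle \{\beta\}, A\rangle$ within $\delta$ stages. Since each such bounded run takes fewer than $\alpha$ steps, an $\alpha$-machine can decide $H$ outright by simulating $M$ for $\delta$ stages and reporting whether a halting state was reached; hence $H$ is $\alpha$-computable in $A$, so $H \in \mathbf{\Delta_1}(L_\alpha[A])$ by Fact \ref{fact334} (after coding pairs via G\"{o}del pairing, using that an admissible $\alpha$ is closed under it). Because an $\alpha$-machine halts exactly when it halts in fewer than $\alpha$ steps, $\beta \in B \iff \exists \delta < \alpha \ H(\delta, \beta)$, which exhibits $B$ as a $\mathbf{\Sigma_1}$ projection of a $\mathbf{\Delta_1}$ relation; thus $B \in \mathbf{\Sigma_1}(L_\alpha[A])$.

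For the converse, write $\beta \in B \iff (L_\alpha[A]; \in, A) \models \exists z \ \psi(z, \bar y, \beta)$ with $\psi$ a $\Delta_0$ formula and $\bar y \in L_\alpha[A]$. Since any witness $z$ appears at some level $L_\xi[A]$ with $\xi < \alpha$, this is equivalent to $\exists \xi < \alpha \ R(\xi, \beta)$, where $R(\xi, \beta)$ asserts $\exists z \in L_\xi[A] \ \psi(z, \bar y, \beta)$. As the map $\xi \mapsto L_\xi[A]$ is a total $\Sigma_1$ (hence $\mathbf{\Delta_1}$) function on the admissible structure $L_\alpha[A]$ and $\psi$ is $\Delta_0$, the relation $R$ is $\mathbf{\Delta_1}(L_\alpha[A])$ (both $R$ and its complement are $\Sigma_1$, the inner quantifiers being bounded by the set $L_\xi[A]$), hence $\alpha$-computable in $A$ by Fact \ref{fact334}. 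An $\alpha$-machine witnessing that $B$ is $\alpha$-c.e.\ in $A$ then searches $\xi < \alpha$ in increasing order, decides each $R(\xi, \beta)$ using the machine supplied by Fact \ref{fact334}, and halts on the first $\xi$ with $R(\xi, \beta)$; it halts exactly when a witness exists, i.e.\ exactly when $\beta \in B$. The parameters $\bar y$ are absorbed as fixed ordinal parameters via their canonical indices in $L_\alpha[A]$.

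The main obstacle is verifying that the level-indexed search predicate $R$ is genuinely $\mathbf{\Delta_1}(L_\alpha[A])$ and hence, via Fact \ref{fact334}, $\alpha$-computably decidable: this amounts to knowing that $\xi \mapsto L_\xi[A]$ is $\mathbf{\Delta_1}$-definable, so that $\exists z \in L_\xi[A]$ and $\forall z \in L_\xi[A]$ are both effectively bounded quantifiers. This is the substantive content underlying Koepke's simulation of the relativised constructible hierarchy on an $\alpha$-length tape, together with the closure properties packaged into admissibility. The forward direction is comparatively routine, needing only that a bounded ($<\alpha$-stage) simulation of $M$ is itself $\alpha$-computable in $A$.
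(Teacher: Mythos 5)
The paper offers no proof of this statement: it is imported verbatim as a Fact attributed to Koepke \cite{koepke2}, so there is no internal argument to compare yours against. Your reconstruction is nonetheless essentially correct and follows the standard route, reducing the $\mathbf{\Sigma_1}$/c.e.\ correspondence to the $\mathbf{\Delta_1}$/decidable correspondence of Fact \ref{fact334} by exhibiting each side as a projection (halting as $\exists \delta\,(\text{halts within }\delta\text{ stages})$; $\Sigma_1$ truth as $\exists \xi\,(\text{witness in }L_\xi[A])$). Two points deserve flagging. First, as you acknowledge, this is not a genuine black-box use of Fact \ref{fact334}: the claim that $\xi \mapsto L_\xi[A]$ is a total $\Sigma_1$ function on the admissible structure $L_\alpha[A]$ (which is what makes $R$ a $\mathbf{\Delta_1}$ relation) is precisely the substantive content of Koepke's simulation of the relativised constructible hierarchy, so the argument reuses the core of the theorem it derives --- acceptable as a verification, but the reduction is less economical than it appears. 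Second, in the converse direction you assert the search machine ``halts on the first $\xi$ with $R(\xi,\beta)$''; one must additionally check that the cumulative running time of the $\xi_0$-many sub-computations preceding the successful one is still $< \alpha$, which requires $\Sigma_1$-boundedness of the halting-time function in $L_\alpha[A]$ together with closure of $\alpha$ under ordinal arithmetic. This is a standard admissibility argument, but it is a further appeal to admissibility beyond what Fact \ref{fact334} hands you, and without it the machine you build need not witness $\alpha$-computable enumerability in the paper's sense (where halting means terminating in $< \alpha$ steps). With these caveats made explicit the argument is sound.
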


\begin{lem}\label{lem273}
Let $B \in \mathbf{\Sigma_1}(L_{\kappa}[A])$ for some admissible ordinal $\kappa$. Then $B \leq^P_{\kappa} A$.
\end{lem}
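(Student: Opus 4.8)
The plan is to route through the enumeration characterisation already at hand. Since $\kappa$ is admissible and $B \in \mathbf{\Sigma_1}(L_{\kappa}[A])$, Fact \ref{fact272} (with $\alpha = \kappa$) supplies a $\kappa$-machine $M$ witnessing that $B$ is $\kappa$-computably enumerable in $A$, i.e. $B = \{\beta < \kappa : M \text{ halts on } \langle \{\beta\}, A \rangle\}$. The task is then to manufacture a single GSeqAP $\mathfrak{M}$ with $\kappa(\mathfrak{M}) = \kappa$ whose terminating run on input $A$ writes exactly $B$ to the output tape. The guiding idea is that, whereas a short run corresponds to a halting (decidable) computation as in Lemma \ref{lem269}, a long run in the sense of Definition \ref{def242} is precisely what is needed to emulate enumeration: $\mathfrak{M}$ will dovetail the $\kappa$-many computations $M(\langle \{\beta\}, A \rangle)$ for $\beta < \kappa$, depositing $\beta$ into $\mathrm{Out}$ as soon as the $\beta$-th computation halts.

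Concretely, I would first fix a definable bijection $\kappa \cong \kappa \times \kappa$, which exists and is computable because admissible ordinals are closed under the G\"odel pairing function. Using it, $\mathfrak{M}$ reserves for each $\beta < \kappa$ a ``fibre'' of cells storing the current configuration (head position, state, tape) of the simulated run $M(\langle \{\beta\}, A\rangle)$, with $A$ read off the global input tape. A single step of $\mathfrak{M}$ advances every fibre by one $M$-step simultaneously; this is first-order definable over the current state by exactly the simulation of Proposition \ref{prop270} (cf. Example \ref{ex228}), carried out uniformly in $\beta$. Whenever the $\beta$-th fibre enters $M$'s halting state, $\mathfrak{M}$ sets $\mathrm{Out}(\beta)$ to true and keeps it true thereafter, while non-halting fibres simply keep running. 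The crucial compatibility is that the GSeqAP limit rule of Definition \ref{def242}(d) takes point-wise $lim \ inf$, which is precisely the limit rule for $\kappa$-machine configurations; hence at every limit master-stage each fibre holds the genuine limit configuration of its $M$-computation, and the monotone cells $\mathrm{Out}(\beta)$ retain value $1$ once set.

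The main obstacle is forcing the run to \emph{terminate}, since fibres running forever never individually reach a fixed point. I would resolve this with an overflow clock. Keep a counter cell $C$ that increments at each successor step, together with a flag that is $0$ at stage $0$ and $1$ at every later stage. Because the value of $C$ at stage $\gamma$ is $\gamma$, its point-wise $lim \ inf$ at any limit $\lambda < \kappa$ is again $\lambda$, whereas at stage $\kappa$ the $lim \ inf$ equals $\kappa \notin \kappa$, so by the ``otherwise'' clause of Definition \ref{def242}(d) the cell $C$ resets to $0$. Thus the condition ``flag $= 1$ and $C = 0$'' holds for the first time exactly at master-stage $\kappa$. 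At that stage every computation that was ever going to halt has already done so (a $\kappa$-machine halts only within $< \kappa$ steps, and the $\beta$-th fibre reaches its step $t$ at master-stage $t$), so $\mathrm{Out}$ now equals $B$; I would have $\tau_{\mathfrak{M}}$ freeze the entire configuration once this flag condition is detected, making $Sq(\kappa)$ a fixed point while strict change at every earlier successor stage is guaranteed by the ticking counter. The resulting $Sq : \kappa + 1 \to S(\mathfrak{M})$ is then a legitimate terminating run (its length lies below $(2^{\kappa})^+$, as required by Remark \ref{rem243}) with input $A$ and output $B$, giving $B \leq^P_{\kappa} A$. The remaining work is the routine but slightly delicate verification that the counter, flag, and monotone output cells behave as claimed through all limit stages, and that advancing all fibres in lockstep is genuinely realisable by a transition function that is bounded (simple) for $\sigma(\mathfrak{M})$.
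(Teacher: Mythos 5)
Your proposal is correct and follows essentially the same route as the paper: invoke Fact \ref{fact272} to get a $\kappa$-machine enumerating $B$ from $A$, dovetail the $\kappa$-many computations $M(\langle\{\beta\},A\rangle)$ inside a single GSeqAP, and exploit the $lim \ inf$ overflow of a ticking counter (which resets to $0$ precisely at master-stage $\kappa$) to detect completion and terminate with $\mathrm{Out}=B$. The only differences are implementational --- you advance all fibres in lockstep and write to $\mathrm{Out}$ monotonically as fibres halt, whereas the paper dovetails in sequential rounds governed by $c_0$ and copies its record $R$ to the output tape in a final step --- and both sit at the same level of informality regarding the definability of the pairing/bookkeeping machinery.
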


\begin{proof}
By Fact \ref{fact272}, $B$ is $\kappa$-computably enumerable in $A$, as witnessed by some $\kappa$-machine $M$. Let $\mathfrak{M}$ be a GSeqAP simulating $M$ as per Proposition \ref{prop270}. We define a GSeqAP $\mathfrak{M}'$ that, when given input $A$, dovetails and runs $M$ through $\mathfrak{M}$ on each input in
\begin{equation*}
    \{\langle \{\beta\}, A \rangle : \beta < \kappa\}
\end{equation*}
for arbitrarily many steps $< \kappa$, in order to determine if $M$ halts on said input. In the end, $\mathfrak{M}'$ should consolidate the findings and output them as a subset of $\kappa$.

More formally, add to $\sigma(\mathfrak{M})$ fresh constant symbols $\ulcorner c_0 \urcorner$, $\ulcorner c_1 \urcorner$, $\ulcorner c_2 \urcorner$, $\ulcorner d \urcorner$ and a fresh unary function symbol $\ulcorner R \urcorner$, and have the result be $\sigma(\mathfrak{M}')$. Set all constants to $0$ and $R$ to be empty by default. We are left to describe the transition function of $\mathfrak{M}'$, which we shall do by cases. In our description, $I$ denotes the input to $\mathfrak{M}'$.
\begin{enumerate}[leftmargin=60pt, label=Case \arabic*:]
    \item ``$c_0 = d = 0$''. We increment both $c_0$ and $d$ and set the input tape to display $\langle \emptyset, I \rangle$.
    \item ``$c_0 \neq 0 \wedge d \neq 0$''. We do the dovetailing. In other words, for each $\beta$ satisfying ``$\neg R(\beta) \wedge \beta \leq c_0$'', we use $\mathfrak{M}$ to simulate $M$ on input $\langle \{\beta\}, I \rangle$ for up to $c_0$ steps, using $c_1$ and $c_2$ to keep track of the counts. At the end of every step in simulation, we check if $\mathfrak{M}$ terminates. If so, we set $R(\beta)$ to \texttt{true} and move on to simulate $M$ on input $\langle \{\beta + 1\}, I \rangle$, if $\beta \neq c_0$. At the end of all the simulations and possible updates to $R$, set $c_1$ and $c_2$ to $0$ and increment $c_0$. The constant $d$ is left unchanged.
    \item\label{8340c3} ``$c_0 = 0 \wedge d \neq 0$''. We copy the contents of $R$ to the output tape, before terminating.
    \item\label{8340c4} ``$c_0 \neq 0 \wedge d = 0$''. Just terminate.
\end{enumerate}
Every terminating run of $\mathfrak{M}'$ is of length $\kappa + 2$, and if its input is $I$, then its output will be the set
\begin{equation*}
    \{\beta < \alpha : M \text{ halts on input } \langle \{\beta\}, I \rangle\} \text{.}
\end{equation*}This is because
\begin{enumerate}[label=(\arabic*)]
    \item no state where \hyperref[8340c4]{Case 4} applies is reachable from an initial state, so a run can only terminate from an initial state as a consequence of \hyperref[8340c3]{Case 3}, and
    \item $c_0$ first returns to $0$ in a state other than an initial state after exactly $\kappa + 1$ many steps, at which point $M$ would have been simulated on every input in 
    \begin{equation*}
        \{\langle \{\beta\}, I \rangle : \beta < \kappa\}
    \end{equation*}
    for up to arbitrarily many steps $< \kappa$, with all eventual haltings recorded in $R$.
\end{enumerate}
In particular, there is a terminating run of $\mathfrak{M}'$ with input $A$ and output $B$, whence $B \leq^P A$.
\end{proof}

Lemmas \ref{lem269} and \ref{lem273} reflect the difference in computability between considering all terminating runs of GSeqAPs and considering only short terminating runs, for regular ordinal tape lengths. In tandem, the lemmas also give us a nice characterisation of the reducibility relation $\leq^P$.

\begin{thm}\label{thm275}
Let $A$ and $B$ be two sets of ordinals. Then $B \leq^P A$ iff $B \in L[A]$.
\end{thm}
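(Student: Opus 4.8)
The plan is to prove the two directions separately, since each is essentially an assembly of results already established. For the forward direction ($B \leq^P A \Rightarrow B \in L[A]$), I would start from Proposition \ref{prop271}, which converts $B \leq^P A$ into $B \leq^{P, s}_{\kappa} A$ for some \emph{regular} $\kappa$. Lemma \ref{lem269} then applies verbatim to this short run and yields $B \in \mathbf{\Delta_1}(L_{\kappa}[A])$. Since $\mathbf{\Delta_1}(L_{\kappa}[A]) \subset \mathbf{\Sigma_1}(L_{\kappa}[A]) \subset L[A]$ by the Fact recorded just after the definition of $\mathbf{\Delta_1}$, we immediately obtain $B \in L[A]$. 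So this direction is a three-line chain: Proposition \ref{prop271}, then Lemma \ref{lem269}, then the $\mathbf{\Delta_1} \subset L$ sandwich.

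For the backward direction ($B \in L[A] \Rightarrow B \leq^P A$), the goal is to feed Lemma \ref{lem273}, which needs an \emph{admissible} $\kappa$ with $B \in \mathbf{\Sigma_1}(L_{\kappa}[A])$. The cleanest route is through Remark \ref{rem336}: since $B \in L[A]$, that remark gives $B \preceq_A A$, and by the definition of $\preceq_A$ there is an admissible ordinal $\alpha$ with $B \preceq_{\alpha} A$, i.e. $B$ is $\alpha$-computable in $A$. Fact \ref{fact334} then converts this into $B \in \mathbf{\Delta_1}(L_{\alpha}[A]) \subset \mathbf{\Sigma_1}(L_{\alpha}[A])$. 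Now Lemma \ref{lem273} applies (its hypothesis is exactly an admissible $\kappa$ with $B \in \mathbf{\Sigma_1}(L_{\kappa}[A])$) and delivers $B \leq^P_{\alpha} A$, hence $B \leq^P A$ by Definition \ref{def839}. This keeps the argument entirely within results already stated in the excerpt and avoids having to separately establish that admissibles are cofinal.

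The step deserving the most care is the passage feeding Lemma \ref{lem273}, because of a genuine asymmetry between the two directions: the forward direction produces membership via $\mathbf{\Delta_1}$ (obtained from a short run over a regular $\kappa$), whereas the backward direction only needs $\mathbf{\Sigma_1}$-definability, reflecting that $\leq^P$ permits arbitrarily long runs while $\leq^{P,s}$ does not. One should therefore check that the admissible $\alpha$ supplied by Remark \ref{rem336} is legitimately admissible in the sense of Lemma \ref{lem273} (namely $L_{\alpha} \models \mathsf{KP}$) and that $A, B \subset \alpha$, so that $\leq^P_{\alpha}$ is even defined. As an alternative to invoking Remark \ref{rem336}, one may argue directly: fixing $\mu$ with $B \in L_{\mu}[A]$ and choosing an admissible $\kappa > \max\{\mu, \sup B + 1, \sup A + 1\}$, one has $B \in L_{\kappa}[A]$ and $B \subset \kappa$, and then, since $\mathbf{\Sigma_1}(L_{\kappa}[A])$ admits parameters from $L_{\kappa}[A]$, the set $B$ is defined by the $\Sigma_0$ (hence $\Sigma_1$) formula $z \in y_1$ with the parameter $y_1 := B$, placing $B \in \mathbf{\Sigma_1}(L_{\kappa}[A])$ before applying Lemma \ref{lem273}. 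Either route closes the backward direction, and combining the two directions gives the theorem.
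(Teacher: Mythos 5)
Your proposal is correct and follows essentially the same route as the paper: the forward direction is the identical chain Proposition \ref{prop271} $\to$ Lemma \ref{lem269} $\to$ $\mathbf{\Delta_1}(L_{\kappa}[A]) \subset L[A]$, and your ``alternative'' backward argument (pick an admissible $\kappa$ large enough that $B \in L_{\kappa}[A]$ and $B \subset \kappa$, note $B \in \mathbf{\Sigma_1}(L_{\kappa}[A])$ via the parameter $B$ itself, then apply Lemma \ref{lem273}) is precisely the paper's proof, stated with more care about why $L_{\kappa}[A] \subset \mathbf{\Sigma_1}(L_{\kappa}[A])$. Your primary backward route through Remark \ref{rem336} and Fact \ref{fact334} is a harmless detour that lands in the same place and is not circular, since that remark is established independently of the theorem.
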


\begin{proof}
Assume $B \in L[A]$. Then $B \in L_{\kappa}[A]$ for some admissible ordinal $\kappa$. But $L_{\kappa}[A] \subset \mathbf{\Sigma_1}(L_{\kappa}[A])$, so by Lemma \ref{lem273}, $B \leq^P A$. 

Next, assume $B \leq^P A$. Then by Proposition \ref{prop271}, $B \leq^{P, s}_{\kappa} A$ for some regular ordinal $\kappa$. Lemma \ref{lem269} tells us $B \in \mathbf{\Delta_1}(L_{\kappa}[A])$, but $\mathbf{\Delta_1}(L_{\kappa}[A]) \subset L[A]$, so we are done.
\end{proof}

\begin{cor}
A set of ordinals $B$ is computable with parameters iff $B \in L$.
\end{cor}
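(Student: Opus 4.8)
The plan is to recognise this corollary as nothing more than the $A = \emptyset$ specialisation of Theorem \ref{thm275}, so the entire argument is a matter of unwinding definitions rather than proving anything new. First I would unfold the phrase ``computable with parameters'': by definition, $B$ is computable with parameters iff $B$ is $\kappa$-computable with parameters for some limit ordinal $\kappa$, which in turn means $B \leq^P_{\kappa} \emptyset$ for some limit ordinal $\kappa$. By Definition \ref{def839}, this last condition is exactly the statement $B \leq^P \emptyset$. Thus the predicate ``$B$ is computable with parameters'' is literally synonymous with ``$B \leq^P \emptyset$'', with $\emptyset$ being (trivially) a set of ordinals so that the hypotheses of Theorem \ref{thm275} are met.

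Next I would invoke Theorem \ref{thm275} with the oracle $A$ instantiated as $\emptyset$, yielding
\begin{equation*}
    B \leq^P \emptyset \iff B \in L[\emptyset] \text{.}
\end{equation*}
Finally I would appeal to the convention fixed in the definitions of the constructible hierarchy, namely that $L$ denotes $L[\emptyset]$, to rewrite $L[\emptyset]$ as $L$. Chaining the two equivalences gives that $B$ is computable with parameters iff $B \in L$, which is precisely the assertion of the corollary.

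There is no genuine obstacle to surmount here: the corollary is an immediate consequence of the main theorem, and the only content is the observation that the empty set is an admissible oracle and that $L = L[\emptyset]$ by fiat. I would therefore keep the write-up to a single short sentence or two, simply citing Theorem \ref{thm275} and the convention $L = L[\emptyset]$, rather than reproducing any of the machinery (Lemmas \ref{lem269} and \ref{lem273}) that underlies the theorem itself.
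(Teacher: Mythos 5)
Your proposal is correct and is exactly the paper's argument: the paper's proof is the single line ``Apply Theorem \ref{thm275} with $\emptyset$ in place of $A$,'' and your unwinding of ``computable with parameters'' to $B \leq^P \emptyset$ together with the convention $L = L[\emptyset]$ is just the implicit bookkeeping made explicit. No differences worth noting.
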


\begin{proof}
Apply Theorem \ref{thm275} with $\emptyset$ in place of $A$. 
\end{proof}

\begin{cor}
Let $A$ and $B$ be two sets of ordinals. Then $B \leq^P A$ iff $B \preceq_A A$.
\end{cor}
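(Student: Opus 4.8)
The plan is to chain together the two characterizations already in hand, each of which expresses one of the two relations in terms of membership in the relative constructible universe $L[A]$. On the one side, Theorem \ref{thm275} states that $B \leq^P A$ holds if and only if $B \in L[A]$. On the other side, Remark \ref{rem336} records that $\preceq_A$ coincides with the relative constructibility relation, i.e. $B \preceq_A A$ holds if and only if $B \in L[A]$. Since both $B \leq^P A$ and $B \preceq_A A$ are provably equivalent to the single intermediary condition $B \in L[A]$, transitivity of ``if and only if'' immediately yields $B \leq^P A \iff B \preceq_A A$.

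Concretely, I would first invoke Theorem \ref{thm275} to replace $B \leq^P A$ by $B \in L[A]$, then invoke Remark \ref{rem336} to replace $B \in L[A]$ by $B \preceq_A A$, reading the biconditionals in the appropriate directions for each half of the equivalence. The forward direction ($B \leq^P A \implies B \preceq_A A$) uses the forward part of Theorem \ref{thm275} followed by the reverse part of Remark \ref{rem336}; the backward direction reverses these roles. No further construction of GSeqAPs, runs, or $\alpha$-machines is needed here, since all of that work has already been absorbed into the two cited results (Theorem \ref{thm275} itself resting on Lemmas \ref{lem269} and \ref{lem273}, and Remark \ref{rem336} resting on Fact \ref{fact334}).

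There is no real obstacle at this stage: the corollary is a formal consequence of two previously established equivalences, so the only thing to verify is that both statements genuinely quantify over the same objects (arbitrary sets of ordinals $A$ and $B$), which they do. The entire content of the corollary is the observation that the relative computability relation $\leq^P$ derived from GSeqAPs and the relation $\preceq_A$ derived from Koepke's $\alpha$-machines have the same extension, namely relative constructibility; this is precisely the ``generalised analogue of the Church--Turing thesis'' advertised in the introduction. Thus the proof reduces to a single sentence citing Theorem \ref{thm275} and Remark \ref{rem336}.
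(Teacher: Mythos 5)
Your proposal is correct and is exactly the paper's argument: the paper proves this corollary by citing Remark \ref{rem336} (implicitly combined with the just-proved Theorem \ref{thm275}), which is precisely the chain of equivalences through $B \in L[A]$ that you describe. No differences worth noting.
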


\begin{proof}
Immediate by Remark \ref{rem336}.
\end{proof}

\section{Discussion and Questions}

The following are two basic facts about relative constructibility.

\begin{fact}\label{fact276}
Let $A, B$ be sets of ordinals. Then $B \in L[A]$ iff $L[B] \subset L[A]$.
\end{fact}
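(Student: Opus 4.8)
The plan is to prove the two implications separately, the reverse being immediate and the forward direction resting on the absoluteness of the relative constructibility hierarchy. A preliminary observation drives both directions: since $B$ is a set of ordinals, $B \subseteq ORD \subseteq L \subseteq L[B]$, so $B \cap L[B] = B$. As $L[B]$ always satisfies $B \cap L[B] \in L[B]$ (a standard property of relative constructibility, Chapter 13 of \cite{jech}), we conclude $B \in L[B]$; the same reasoning gives $A \in L[A]$. With this in hand, the implication $L[B] \subseteq L[A] \implies B \in L[A]$ is instant: $B \in L[B] \subseteq L[A]$.

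For the harder implication $B \in L[A] \implies L[B] \subseteq L[A]$, I would use that the class function $\alpha \mapsto L_\alpha[B]$ is uniformly $\Delta_1$-definable from the parameter $B$, hence absolute for transitive models of $\mathsf{ZF}$ containing $B$ (again Chapter 13 of \cite{jech}). Now $L[A]$ is an inner model of $\mathsf{ZFC}$: it is transitive and contains every ordinal. Assuming $B \in L[A]$, absoluteness guarantees that $L[A]$ evaluates $\alpha \mapsto L_\alpha[B]$ correctly, so $L_\alpha[B] \in L[A]$ for each ordinal $\alpha$. Since $L[A]$ is transitive, $L_\alpha[B] \subseteq L[A]$ for every $\alpha$, and therefore $L[B] = \bigcup_{\alpha \in ORD} L_\alpha[B] \subseteq L[A]$.

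The main obstacle is making the absoluteness step precise, i.e.\ verifying $L_\alpha[B]^{L[A]} = L_\alpha[B]$ for all $\alpha$. I would do this by induction on $\alpha$: the successor step invokes the absoluteness of the definability operator $\mathrm{Def}$ taken with the predicate $B \cap L_\alpha[B]$ (which is a set of $L[A]$ once $B \in L[A]$ and $L[A] \models \mathsf{ZF}$), and the limit step is merely a union, computed the same way inside and outside $L[A]$. This induction, together with the fact that every ordinal belongs to the inner model $L[A]$, is what licenses the conclusion that each $L_\alpha[B]$ is an element of $L[A]$, closing the argument.
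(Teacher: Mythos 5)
Your proof is correct and is exactly the standard argument that the paper implicitly appeals to by stating this as a Fact without proof (cf.\ Chapter 13 of \cite{jech}): the reverse implication via $B \subseteq ORD \subseteq L[B]$ and $B \cap L[B] \in L[B]$, and the forward implication via the $\Delta_1$-absoluteness of $\alpha \mapsto L_\alpha[B]$ for the inner model $L[A]$ once $B \in L[A]$. No gaps; nothing further to compare.
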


\begin{fact}\label{fact277}
Let $X$ be any set. Then there exists a set of ordinals $A$ such that $L[X] = L[A]$.
\end{fact}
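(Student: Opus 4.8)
The plan is to reuse the coding of Proposition~\ref{prop22}, but to carry it out \emph{inside} $L[X]$ so that the resulting set of ordinals is at once constructible from $X$ and absolutely decodable back to $X$. Working in $L[X]$, which is a transitive proper-class model of $\mathsf{ZFC}$, I would form $Y := \mathrm{TC}(\{X\})$ (available in $L[X]$ since transitive closure is absolute), pick via choice a bijection $f : \kappa \to Y$ for some ordinal $\kappa$, let $\in'$ be the induced well-founded extensional relation on $\kappa$, and code $\in'$ by G\"{o}del pairing as a set of ordinals $A := c \subseteq \kappa$ --- exactly as in the proof of Proposition~\ref{prop22}. Because every step is performed inside $L[X]$, we secure $A \in L[X]$.

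For the inclusion $L[A] \subseteq L[X]$, I would use that $A$ is a set of ordinals lying in $L[X]$. For any inner model $M$ one has $A \subseteq \mathrm{Ord} \subseteq M$, so $A \cap M = A$; hence $L[A]$, being the smallest inner model $M$ with $A \cap M \in M$, is contained in every inner model that contains $A$, in particular in $L[X]$.

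For the reverse inclusion $L[X] \subseteq L[A]$, I would appeal to the absoluteness half of Proposition~\ref{prop22}. Since $L[A]$ is a transitive model of $\mathsf{ZF}$ containing all ordinals, hence containing $\kappa$ and $A$, the decoding --- inverse G\"{o}del pairing to recover $\in'$, Mostowski collapse of $(\kappa; \in')$ to recover $Y$, and extraction of the $\in$-maximal element to recover $X$ --- is carried out faithfully inside $L[A]$ by absoluteness, so $X \in L[A]$ and thus $L[X] \subseteq L[A]$. Combining the two inclusions yields $L[X] = L[A]$.

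The main obstacle is keeping the two requirements on $A$ genuinely separate and both satisfied. Absolute decodability (which Proposition~\ref{prop22} already supplies) is \emph{not} the same as $A \in L[X]$; the latter is precisely what forces the bijection $f$ and the ordinal $\kappa$ to be chosen within $L[X]$ rather than in $V$. Once the coding is relativised to $L[X]$ in this way, both inclusions fall out cleanly: the first from the minimality of $L[A]$, and the second from the absoluteness already established in Proposition~\ref{prop22}.
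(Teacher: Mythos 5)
The paper states Fact \ref{fact277} without proof, so there is no in-paper argument to compare against; your overall strategy --- code $X$ as a set of ordinals via Proposition \ref{prop22}, but carry the coding out \emph{inside} $L[X]$, then use minimality of $L[A]$ for one inclusion and absoluteness of the decoding for the other --- is the standard one, and both inclusion arguments are sound as far as they go. However, there is a genuine gap at the very first step: you form $Y := \mathrm{TC}(\{X\})$ inside $L[X]$, which presupposes $X \in L[X]$. For the predicate-style relative constructibility $L[\,\cdot\,]$ used in the paper (Chapter 13 of Jech), this can fail: only $X \cap L[X] \in L[X]$ is guaranteed in general. For instance, if $r$ is a real not in $L$ and $X = \{r\}$, then the predicate $X \cap L_{\alpha}[X]$ is empty at every stage, so $L[X] = L$ while $X \notin L$; your construction cannot begin, even though the Fact itself is witnessed trivially there by $A = \emptyset$. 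Absoluteness of the transitive-closure operation does not help, since the obstruction is that the argument $X$ is not available in the model to begin with.

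The repair is standard and short: set $X' := X \cap L[X]$. Then $X' \in L[X]$ and $L[X] = L[X']$, and your argument goes through verbatim with $X'$ in place of $X$: code $\mathrm{TC}(\{X'\})$ inside $L[X']$ to obtain $A \in L[X']$, whence $L[A] \subseteq L[X'] = L[X]$ by the minimality of $L[A]$; conversely, the decoding of Proposition \ref{prop22} recovers $X'$ inside $L[A]$, so $X' \in L[A]$ and hence $L[X] = L[X'] \subseteq L[A]$ by the minimality of $L[X']$ among inner models $M$ with $X' \cap M \in M$. With this one adjustment your proof is correct.
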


From a second-order viewpoint of $V$, we can consider the multiverse of constructible worlds relative to sets in $V$, i.e.
\begin{equation*}
    \mathbf{M}_C := \{L[X] : X \in V\} \text{.}
\end{equation*}
By Theorem \ref{thm275} and Facts \ref{fact276} and \ref{fact277}, 
\begin{equation*}
    (\mathcal{D}_P, \leq_{\mathcal{D}_P}) \cong (\mathbf{M}_C, \subset) \text{.}
\end{equation*}
Philosophically, this seems to indicate that the amount of computing power a set holds is equivalent to how it can extend a particular model of set theory through transfinite recursion. This is far from an original take, for if we take the notion of computability to an extreme, we can argue that any construction via transfinite recursion is a generalised algorithm. What might be slightly more surprising is, by only allowing transfinite recursion that involves very simple formulas ---
\begin{itemize}
    \item either formulas conservatively extending transition functions of classical Turing machines to incorporate limit stage computations,
    \item or formulas satisfying a version of Gurevich's bounded exploration postulate
\end{itemize}
--- we can still achieve the same reach. Indeed, the fact that the relative computability relations arising from
\begin{enumerate}[label=(GCT\arabic*), leftmargin=50pt]
    \item a ``low-level'' machine model of generalised computation such as ordinal Turing machines, 
    \item a ``high-level'' notion of abstract algorithms such as GSeqAPs, and
    \item an algebraic model of set generation such as relativised constructibility
\end{enumerate}
happen to coincide, is suggestive of a plausible generalised-computability counterpart to the Church-Turing thesis.

Just as practical implementations of finitary SeqAs do not aim to model algorithmic intuition beyond the limits of Church-Turing thesis, GSeqAPs should not allow for algorithms that are non-constructible. We can view GSeqAPs as a paradigm for easy high-level descriptions of algorithms on arbitrary sets, one that gives the user enough structure to organise their ideas, and serves as guardrails to rule out clearly non-algorithmic descriptions.

It is consistent that $(\mathcal{D}_P, \leq_{\mathcal{D}_P})$ is highly non-trivial; in fact, this level of non-triviality is implied by suitable large cardinals.

\begin{fact}\label{fact284}
\leavevmode
\begin{enumerate}[label=(\arabic*)]
    \item 
    \!
    $\begin{aligned}[t]
        \mathsf{ZFC} + \text{``} \exists \text{ a transitive model of } \mathsf{ZFC} \text{''} \vdash \text{``} & \exists \text{ a transitive model of } \mathsf{ZFC} \ + \\
        & \text{`} V \neq L[A] \text{ for all } A \in V \text{'''.}
    \end{aligned}$
    \item\label{432} $\mathsf{ZFC} + \text{``} \exists \text{ a strongly compact cardinal''} \vdash \text{``} V \neq L[A] \text{ for all } A \in V \text{''}$.
\end{enumerate}
\end{fact}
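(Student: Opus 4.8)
The plan is to handle the two parts separately, since statement (1) is a relative-consistency result proved by forcing over a given transitive model, whereas statement (2) is an outright theorem proved by deriving a contradiction from a strongly compact cardinal living inside a universe of the form $L[A]$. Both parts rest on a single observation: models of the form $L[A]$ are combinatorially ``regular'' above the cardinal $|\mathrm{TC}(A)|$ — they satisfy the GCH and Jensen's $\square_\lambda$ for all sufficiently large $\lambda$ — because of the fine-structural condensation properties of the relative constructible hierarchy. So in each part it suffices to produce a model (respectively, a hypothesis) that violates one of these regularity features cofinally often, which forces $V \neq L[A]$ for every set $A$; by Fact \ref{fact277} I may always assume $A$ is a set of ordinals.

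For part (1), I would first invoke Löwenheim--Skolem together with the Mostowski collapse to replace the given transitive model of $\mathsf{ZFC}$ by a \emph{countable} transitive model $M$. The key lemma to isolate is: if $A \subseteq \kappa$ is a set of ordinals, then $L[A] \models 2^\lambda = \lambda^+$ for every cardinal $\lambda \geq \kappa$. This is the standard condensation argument — any $X \subseteq \lambda$ in $L[A]$ lies in some $L_\gamma[A]$, and a Skolem hull of $\lambda \cup \{X\}$ of size $\lambda$ condenses to some $L_{\bar\gamma}[A]$ with $\bar\gamma < \lambda^+$, the parameter $A \subseteq \kappa \subseteq \lambda$ being absorbed into the hull and hence fixed. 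Consequently, $V = L[A]$ for a set $A$ entails that the GCH holds at all cardinals above a fixed $\kappa$; contrapositively, any model in which the GCH fails cofinally satisfies ``$V \neq L[A]$ for all $A$''. I would then force over $M$ with the Easton class forcing making $2^\kappa = \kappa^{++}$ for every regular cardinal $\kappa$; Easton's theorem guarantees $M[G] \models \mathsf{ZFC}$, cardinals and cofinalities are preserved, and the GCH fails at cofinally many cardinals in $M[G]$. Since $M[G]$ is again a countable transitive model, it witnesses the desired conclusion.

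For part (2), I would assume toward a contradiction that $\kappa$ is strongly compact and $V = L[A]$ for some set of ordinals $A$, and then play two theorems against each other at a suitably large singular cardinal. On one hand, the fine structure of $L[A]$ yields $\square_\lambda$ for every singular $\lambda$ above $|\mathrm{TC}(A)|$ (Jensen's square principle, relativized to $L[A]$). On the other hand, Solovay's theorem states that a strongly compact $\kappa$ forces $\square_\lambda$ to fail for every singular $\lambda \geq \kappa$ of cofinality $\omega$. Choosing $\lambda$ singular of cofinality $\omega$ with $\lambda > \max(\kappa, |\mathrm{TC}(A)|)$, the assumption $V = L[A]$ makes both $\square_\lambda$ and $\neg\square_\lambda$ hold in $V$, a contradiction; hence $V \neq L[A]$ for every $A$.

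The main obstacle in part (1) is justifying a class(-length) forcing over a countable transitive model: one must confirm that an $M$-generic for Easton forcing exists — built by bookkeeping through the countably many dense subclasses definable with parameters in $M$ — and that Easton's preservation of $\mathsf{ZFC}$ goes through, since ordinary set forcing can only affect the GCH below a fixed bound and thus cannot violate it cofinally. In part (2) the delicate point is pinning down the precise forms of Solovay's square-failure theorem and of the relativized Jensen square in $L[A]$; both are citable, but I would verify the cofinality and size bookkeeping so that a single $\lambda$ simultaneously sees $\square_\lambda$ from the $L[A]$ side and $\neg\square_\lambda$ from the strong-compactness side.
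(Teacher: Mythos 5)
Your proofs are essentially correct, but be aware that the paper does not actually prove this Fact: part (2) is dispatched by a citation (the Vop\v{e}nka--Hrb\'a\v{c}ek theorem, via Exercise 20.2 of \cite{jech}), and part (1) is left as folklore, so you are supplying arguments the paper omits. For (1), your route --- collapse to a countable transitive model, observe via condensation that $L[A]\models 2^\lambda=\lambda^+$ for all cardinals $\lambda\geq\sup(A)$ (with Fact \ref{fact277} reducing arbitrary $A$ to sets of ordinals), then violate GCH cofinally by Easton class forcing --- is the standard one; the single refinement I would insist on is forcing over $(L)^M$ rather than over $M$ itself, since the usual verification that Easton forcing preserves cardinals and $\mathsf{ZFC}$ (the factorisation into a $\leq\!\kappa$-closed part and a $\kappa^+$-c.c.\ part) uses GCH in the ground model, which $M$ need not satisfy. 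For (2), your argument is sound but genuinely different from the one behind the paper's citation: Vop\v{e}nka--Hrb\'a\v{c}ek works directly with the ultrapower by a fine $\kappa$-complete ultrafilter and a counting argument, needing nothing about $L[A]$ beyond condensation, whereas you play Solovay's theorem that $\square_\lambda$ fails for all $\lambda\geq\kappa$ below a strongly compact against Jensen's $\square_\lambda$ in $L[A]$ for $\lambda\geq|\mathrm{TC}(A)|$. Both ingredients are true and citable, but the second is a fine-structural fact about the $J_\alpha[A]$ hierarchy above $\sup(A)$ and is the heavier tool of the two; it deserves an explicit reference rather than being treated as immediate. Also, since both ingredients hold for every $\lambda\geq\max(\kappa,|\mathrm{TC}(A)|)$, your restriction to singular $\lambda$ of cofinality $\omega$ is unnecessary, though harmless.
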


In particular, \ref{432} of Fact \ref{fact284} is implied by Exercise 20.2 of \cite{jech}. Note that $V \neq L[A]$ for all $A \in V$ is equivalent to $(\mathcal{D}_P, \leq_{\mathcal{D}_P})$ having no maximal element, so Fact \ref{fact284} actually gives us

\begin{enumerate}[label=(E\arabic*)]
    \item\label{e1}
    \!
    $\begin{aligned}[t]
        \mathsf{ZFC} + \text{``} \exists \text{ a transitive model of } \mathsf{ZFC} \text{''} \vdash \text{``} & \exists \text{ a transitive model of } \mathsf{ZFC} \ + \\
        & \text{`} (\mathcal{D}_P, \leq_{\mathcal{D}_P}) \text{ has no maximal element''',}
    \end{aligned}$
    \item $\mathsf{ZFC} + \text{``} \exists \text{ a strongly compact cardinal''} \vdash \text{``} (\mathcal{D}_P, \leq_{\mathcal{D}_P}) \text{ has no maximal element''}$.
\end{enumerate}
If $(\mathcal{D}_P, \leq_{\mathcal{D}_P})$ has no maximal element, then by applications of iterated forcing over $L$, one can uncover a rich lattice structure of $(\mathcal{D}_P, \leq_{\mathcal{D}_P})$.

Let us now turn our attention to the bounded relations $\leq^{P, s}_{\alpha}$ and $\leq^P_{\alpha}$. Let $\leq_{\alpha}$, $\leq_{w \alpha}$ and $\preceq_{\alpha}$ respectively denote the relations ``$\alpha$-recursive in'', ``weakly $\alpha$-recursive in'' and ``$\alpha$-computable in''. The first two relations are well-studied in the once-bustling field of $\alpha$-recursion theory (see e.g. Part C of \cite{sacks}), whereas the last relation comes from the emerging field of $\alpha$-computability theory. It is known (Section 3 of \cite{koepke2}) that 
\begin{equation*}
    \leq_{\alpha} \ \subset \ \leq_{w \alpha} \ \subset \ \preceq_{\alpha}
\end{equation*}
for any admissible ordinal $\alpha$, and the inclusions cannot be reversed in general. By Fact \ref{fact272} and Lemma \ref{lem273}, we have
\begin{equation}\label{eq2}
    \leq_{\alpha} \ \subset \ \leq_{w \alpha} \ \subset \ \preceq_{\alpha} \ \subset \ \leq^P_{\alpha}
\end{equation}
for admissible ordinals $\alpha$. The last inclusion cannot be reversed in general because $\leq^P_{\alpha}$ is transitive for all admissible $\alpha$ but $\preceq_{\alpha}$ is not. On the other hand, it is clear that $\leq^{P, s}_{\alpha} \ \subset \ \leq^P_{\alpha}$ for all limit $\alpha$, and the reverse is not true due to Lemmas \ref{lem269} and \ref{lem273}. Further, for admissible $\alpha$, $\leq_{w \alpha} \ \neq \ \leq^{P, s}_{\alpha} \ \neq \ \preceq_{\alpha}$ in general because $\leq^{P, s}_{\alpha}$ is transitive unlike the other two relations. We know $\leq^{P, s}_{\alpha} \ \neq \ \leq_{\alpha}$ in general because $\leq^{P, s}_{\alpha}$ is upward consistent in the sense of Proposition \ref{prop252}, but $\leq_{\alpha}$ is not by Theodore Slaman's answer in \cite{slaman}. 

Table \ref{table3} summarises the similarities and differences among the relations $\leq^{P, s}_{\alpha}$, $\leq^P_{\alpha}$, $\leq_{\alpha}$ and $\preceq_{\alpha}$. When we say a relation possesses the property of ``oracle-analogue'', we mean that the definition of said relation depends on --- or makes reference to --- some object analogous to the oracle in the definition of classical oracle machines.

\begin{table}[!ht]
    \caption[Comparing our new relative computability relations against more standard relations]{Comparing two of our new relative computability relations against more standard relations in the study of generalised computability, at an arbitrary admissible ordinal $\alpha$.}
    \label{table3}
    \centering
    \begin{tabular}{|l||*{4}{c|}}\hline
        \backslashbox[110pt]{\footnotesize Property}{\footnotesize Relation}
        &\makebox[4em]{$\leq_{\alpha}$}&\makebox[4em]{$\preceq_{\alpha}$}
        &\makebox[2em]{$\leq^P_{\alpha}$}&\makebox[2em]{$\leq^{P, s}_{\alpha}$}\\\hline\hline
        Oracle-analogue? & \ding{51} & \ding{51} & \ding{55} & \ding{55} \\\hline
        Transitive? & \ding{51} & \ding{55} & \ding{51} & \ding{51} \\\hline
        Upward consistent? & \ding{55} & \ding{51} & \ding{51} & \ding{51} \\\hline
        Appears in$\dots$ & $\alpha$-recursion & $\alpha$-computability & Def. \ref{def38} & Def. \ref{def269p} \\\hline
    \end{tabular}
\end{table}

\begin{ques}
Is the set 
\begin{equation*}
    \{\leq_{\alpha}, \leq_{w \alpha}, \preceq_{\alpha}, \leq^{P, s}_{\alpha}\}
\end{equation*}
linearly ordered by inclusion, for all admissible $\alpha$?
\end{ques}

\section{References}
\printbibliography[heading=none]

\end{document}